\let\c@proposition\c@theorem
\let\c@corollary\c@theorem
\let\c@lemma\c@theorem
\let\c@definition\c@theorem
\let\c@example\c@theorem
\let\c@remark\c@theorem
\numberwithin{proposition}{section}
\numberwithin{corollary}{section}
\numberwithin{lemma}{section}
\numberwithin{theorem}{section}
\numberwithin{definition}{section}
\numberwithin{example}{section}
\numberwithin{remark}{section}
\begin{document}

\title{Session Logical Relations for Noninterference }


\author{\IEEEauthorblockN{Farzaneh Derakhshan}
 \IEEEauthorblockA{Carnegie Mellon University}
 \and
 \IEEEauthorblockN{Stephanie Balzer}
 \IEEEauthorblockA{Carnegie Mellon University}
 \and
 \IEEEauthorblockN{Limin Jia}
 \IEEEauthorblockA{Carnegie Mellon University}}
\maketitle

\thispagestyle{plain}
\pagestyle{plain}


\begin{abstract}

Information flow control type systems statically restrict the propagation of sensitive data to
ensure end-to-end confidentiality.  The property to be shown is noninterference,
asserting that an attacker cannot infer any secrets from made observations.  Session types
delimit the kinds of observations that can be made along a communication channel by imposing a
protocol of message exchange.  These protocols govern the exchange along a single channel and leave unconstrained the propagation along adjacent channels.  This paper contributes an
information flow control type system for linear session types.  The type system stands in close
correspondence with intuitionistic linear logic.  Intuitionistic linear logic typing
ensures that process configurations form a tree such that client processes
are parent nodes and provider processes child nodes.  To control the propagation of secret
messages, the type system is enriched with secrecy levels and arranges these levels to be
aligned with the configuration tree.  Two levels are associated with every process: the maximal
secrecy denoting the process' security clearance and the running secrecy denoting the highest
level of secret information obtained so far.  The computational semantics naturally stratifies
process configurations such that higher-secrecy processes are parents of lower-secrecy ones, an invariant enforced by typing.  Noninterference is stated in terms of
a logical relation that is indexed by the secrecy-level-enriched session types.  The logical
relation contributes a novel development of logical relations for session typed languages as it
considers open configurations, allowing for more nuanced equivalence statement. 

\end{abstract}

\IEEEpeerreviewmaketitle

\section{Introduction}

\emph{Message-passing} is a successful concurrency paradigm, adopted by languages such as
Erlang, Go, and Rust.  In this setting, a program amounts to a number of processes connected
via channels, and computation happens by the concurrent exchange of messages along channels.
To prescribe the \emph{protocols} of message exchange and assert their adherence at run-time,
\emph{session types}~\cite{HondaCONCUR1993, HondaESOP1998} were introduced.  Since then various
session-typed programming languages were designed~\cite{ToninhoESOP2013,ToninhoPhD2015,BalzerICFP2017} as well
as session type libraries for mainstream
languages developed~\cite{DezaniECOOP2006,PucellaHASKELL2008,ImaiPLACES2010,JespersenWGP2015,LindleyHASKELL2016,ScalasECOOP2016,PadovaniARTICLE2017,ImaiARTICLE2019,KokkeICE2019,
  ChenARXIV2020}.  Session types moreover have a logical foundation by a Curry-Howard
correspondence between \emph{linear logic} and the session-typed
$\pi$-calculus~\cite{CairesCONCUR2010,WadlerICFP2012,KokkePOPL2019}.

In addition to session fidelity enforced by session types, preventing information leakage is another desirable goal of such systems, as OS processes, android apps, and web applications can all be modeled in them. One promising direction to achieve this goal is to develop an information flow type system to enforce information flow control (IFC) to enforce a {\em noninterference} property, guaranteeing that an adversary cannot infer any secrets from observing message exchanges~\cite{volpano96, ifc-survey}. 
While prior work has investigated both information flow type systems for  process
calculi~\cite{HondaESOP2000,HondaYoshidaPOPL2002,CrafaARTICLE2002,CrafaTGC2005,CrafaFMSE2006,Crafa2007,HENNESSYRIELY2002,HENNESSY20053,KOBAYASHI2005,ZDANCEWIC2003,POTTIER2002}
and run-time monitoring in the application domain of OS, android apps, and web
applications~\cite{cowl,ifc-browser-ndss15,android-esorics13,krohn:flume}, very few information
flow session type systems exist~\cite{CapecchiCONCUR2010,CapecchiARTICLE2014}. In particular,
no one has investigated information flow types in the context of linear binary session types based on a sequent calculus of intuitionistic linear logic, which is a natural fit for a more flexible {\em flow-sensitive} information flow type system.



This paper develops a flow-sensitive information flow session type system for the language
$\lang$ and proves noninterference for $\lang$ in addition to type safety.  $\lang$ is a
terminating language with higher-order channels, allowing channels to be sent along channels.
It builds on the Curry-Howard correspondence between \emph{intuitionistic} linear logic and the
session-typed $\pi$-calculus~\cite{CairesCONCUR2010,CairesARTICLE2016}.  The intuitionistic
foundation turns run-time configurations of processes into \emph{trees}, connecting a
\emph{providing} process with \emph{exactly one} \emph{client}.

The $\lang$ type system takes advantage of the tree structure imposed by intuitionism and
stratifies process trees according to the security order.  Two secrecy levels are associated
with each process: the \emph{maximal secrecy}, denoting the maximal level of information the
process may ever obtain, and the \emph{running secrecy}, denoting the highest level of
information a process has obtained so far and whose changes are tracked by the type system.  To align the process tree with the security
lattice, typing asserts the following invariant, for any node in the tree: \textit{(i)} the
maximal secrecy of a child node is at most as high as the maximal secrecy of the parent node
and \textit{(ii)} the running secrecy of the parent node is capped by its maximal secrecy.  By
complementing the maximal secrecy with a running secrecy, the $\lang$ type system becomes
\emph{flow-sensitive}, allowing more secure programs to successfully type check than would be
possible with maximal secrecy alone.

Noninterference of $\lang$ is stated in terms of a \emph{logical
  relation}~\cite{TaitARTICLE1967,StatmanARTICLE1985}.  The use of logical relations for
session types has focused predominantly on unary logical relations (predicates) for proving
termination~\cite{PerezESOP2012, PerezARTICLE2014,DeYoungFSCD2020} with the exception of a
binary logical relation for parametricity~\cite{CairesESOP2013}.  Noninterference, however,
demands a more nuanced binary relation, requiring communication to be perceived in either
direction of the channel.  This paper generalizes binary logical relations for session
typed languages to support \emph{open configurations}, considering both the antecedent and
succedent of the typing judgment.


In summary, the paper makes the following contributions:

\begin{itemize}

\item development of an flow-sensitive IFC type system for binary session types, yielding the
  language $\lang$;

\item proofs of type safety and noninterference of $\lang$;

\item generalization of (binary) logical relations to the session typed setting, supporting
  open configurations and higher-order channels.

\end{itemize}

\textbf{Paper structure:} \secref{sec:motexample} familiarizes the reader with information flow
control and intuitionistic session-typed programming.  \secref{sec:key-ideas} develops the main
ideas underlying the $\lang$ type system, which are concretized in \secref{sec:type-system}.
\secref{sec:key-ideas2} develops the main ideas underlying the session logical relation,
further detailed in \secref{sec:logicalrel}.  \secref{sec:metatheory} proves noninterference of
$\lang$ as well as type safety.  \secref{sec:related} summarizes related and future work.  Further technical developments and proofs can be
found in the appendix.

\section{Motivating Example}\label{sec:motexample}

This section provides an introduction to programming with intuitionistic linear logic session
types~\cite{ToninhoESOP2013,ToninhoPhD2015,CairesARTICLE2016,BalzerICFP2017} based on a banking
example and illustrates violations of end-to-end confidentiality.  We base the discussion on
the language $\lang$ that we formalize and for which we prove noninterference in the remainder
of this paper.  $\lang$ is a terminating language with higher-order channels, allowing channels
to be sent over channels.

In $\lang$, we can define the protocol according to which an authorization process interacts
with a customer seeking access to their bank account as follows:

\begin{center}
\begin{small}
\begin{minipage}[t]{\textwidth}
\begin{tabbing}
$\m{auth}$ $=$ \= $\extchoice \{$\= $\mathit{tok_1}{:} \oplus \{ \mathit{succ}{:} \,\m{account} \otimes 1, \mathit{fail}{:}\, 1\}, \dots, $ \\
\> \> $\mathit{tok_n}{:} \oplus \{ \mathit{succ}{:} \,\m{account} \otimes 1,  \mathit{fail}{:}\, 1\}\}$
\end{tabbing}
\end{minipage}
\end{small}
\end{center}

\noindent The connectives $\extchoice$, $\oplus$, $\otimes$, and $1$ can be found in
\tabref{tab:session_types}, providing an overview of intuitionistic linear session types and
their operational reading.
The first column indicates the session type before the message exchange, the second column the
session type after the exchange.  The corresponding process terms are listed in the third and
fourth column, respectively.  The fifth column provides the operational meaning of a connective
and the last column its \emph{polarity}.  Positive connectives have a sending semantics,
negative connectives a receiving semantics.

\begin{table*}
\centering
\begin{small}
\begin{tabular}{@{}lllllc@{}}
\toprule
\multicolumn{2}{@{}l}{\textbf{Session type (current / cont)}} &
\multicolumn{2}{l}{\textbf{Process term (current / cont)}} &
\textbf{Description} &
\textbf{Pol} \\[3pt]
$x : \oplus\{\ell:A\}_{\ell \in L}$ & $x : A_k$ & $x.k; P$ &
$P$ & provider sends label $k$ along $x$ and continues with $P$ & + \\
 & & $\mathbf{case}\, x (\ell \Rightarrow Q_{\ell})_{\ell \in L}$ & $Q_k$ &
client receives label $k$ along $x$ and continues with $Q_k$ & \\[2pt]
$x : \&\{\ell:A\}_{\ell \in L}$ & $x : A_k$ & $\mathbf{case}\, x (\ell \Rightarrow P_{\ell})_{\ell \in L}$ &
$P_k$ & provider receives label $k$ along $x$ and continues with $P_k$  & - \\
 & & $x.k\; Q$ & $Q$ & client sends label $k$ along $x$ and continues with $Q$  & \\[2pt]
$x : A \otimes B$ & $x : B$ & $\mathbf{send}\,y\,x; P$ &
$P$ & provider sends channel $y {:} A$ along $x$ and continues with $P$  & + \\
 & & $z \leftarrow \mathbf{recv}\,x;Q$ & $\subst{y}{z}{Q_z}$ &
client receives channel $y {:} A$ along $x$ and continues with $Q$  & \\[2pt]
$x : A \multimap B$ & $x : B$ & $z \leftarrow \mathbf{recv}\,x;P$ & $\subst{y}{z}{P_z}$ &
provider receives channel $y {:} A$ along $x$ and continues with $P$  & - \\
 & & $\mathbf{send}\,y\,x; Q$ & $Q$ &
client sends channel $y {:} A$ along $x$ and continues with $Q$  & \\[2pt]
$x : 1$ & - & $\mathbf{close}\, x$ &
- & provider sends ``$\m{end}$'' along $x$ and terminates  & + \\
 & & $\mathbf{wait}\,x;Q$ & $Q$ & client receives ``$\m{end}$'' along $x$ and continues with Q  & \\
\bottomrule
\end{tabular}
\caption{Overview of intuitionistic linear session types in $\lang$ together with their operational meaning.}
\label{tab:session_types}
\end{small}
\end{table*}
 
 Linearity ensures that a channel connects exactly two processes.  An intuitionistic viewpoint
moreover allows the distinction of one process as the \emph{provider} and the other as the
\emph{client}, where linearity ensures that every providing process has \emph{exactly one}
client process.  As a result, channels in intuitionistic linear session type languages can
be typed with the session type of the providing process.
In developments of linear session types based on classical logic~\cite{WadlerICFP2012}, the two
endpoints of a channel are instead typed separately, using linear negation to make sure that
the two endpoint types are dual to each other.
The fact that a provider process and client process must behave dually to each other surfaces
in an intuitionistic setting at the level of the process terms, which come in matching pairs.
\tabref{tab:session_types} lists the process term of a provider in the first line for each
connective and the client's term in the second line.

The above session type $\m{auth}$ thus requires the client to send their authorization token
($\mathit{tok_i}$), after which the authorization process will respond with $\mathit{succ}$ in
case of successful authorization and $\mathit{fail}$, otherwise.  In the former case, the
authorization process sends the channel to the customer's bank account and then terminates, in
the latter case it just terminates.  A corresponding authorization process is implemented for
each customer, accepting only the customer's authorization token.  We assume that session type
$\m{auth}$ includes a label $\mathit{tok_i}$ for every imaginable authorization token.

We complete the example with the addition of the following session types:

\begin{center}
\begin{small}
\begin{minipage}[t]{\textwidth}
\begin{tabbing}
$\m{customer}$ \= $=$ \= $\m{auth} \multimap 1$ \\[2pt]
$\m{account}$ \> $=$ \> $\oplus\{\mathit{high}{:} 1,\,\mathit{med}{:}1,\, \mathit{low}{:}1\}$ \\[2pt]
$\m{rate}$ \> $=$ \> $\&\{\mathit{lowRate}{:} 1,\,\mathit{highRate}{:}1\}$
\end{tabbing}
\end{minipage}
\end{small}
\end{center}

\noindent As the names suggest, $\m{customer}$ denotes the protocol of a customer process,
indicating that it is waiting to receive an authorization channel, after which it eventually
terminates.  A bank account process (session type $\m{account}$), on the other hand, will
indicate whether its balance is high ($\mi{high}$), medium ($\mi{med}$), or low ($\mi{low}$)
and then terminate.  The last session type $\m{rate}$ allows a bank to advertise the current
interest rate, for example by displaying it on a bulletin board.

For our example, we assume that the bank has two customers, Alice and Bob, which own accounts
with the bank.  In a secure system, Alice's account can only be queried by Alice or the bank,
but neither by Bob or any walk-in customer.  The same must hold for Bob's account.  We can
express these dependencies by defining corresponding secrecy levels and a lattice on them:
{\small\[
\mb{guest} \sqsubseteq \mb{alice} \sqsubseteq \mb{bank} \qquad \mb{guest} \sqsubseteq \mb{bob} \sqsubseteq \mb{bank}
\]}
We next show the corresponding process implementations concerning Alice.  We first define
process $\m{Alice}$ for the alice customer process:

\begin{center}
\begin{small}
\begin{minipage}[t]{\textwidth}
\begin{tabbing}
$\cdot \vdash \m{Alice} :: y{:}\,\m{customer}\maxsec{[\mb{alice}]}$ \\
$y \leftarrow \m{Alice}\leftarrow \cdot = ($
 \comment{// $\cdot \vdash y{:}\m{customer}$} \\
\quad \= $w \leftarrow \mb{recv}\, y; w.\mathit{tok}_j;$ 
 \comment{// $w{:}\oplus \{ \mathit{succ}{:} \,\m{account} \otimes 1, \mathit{fail}{:}\, 1\} \vdash y{:}1$} \\
\> $\mb{case} \, w \, ($\= $\mathit{succ} \Rightarrow v \leftarrow \mb{recv}\, w;$
 \comment{// $w{:}1, v{:}\m{account} \vdash y{:}1$} \\
\> \> $\mb{case} \, v \, ($\= $\mathit{high} \Rightarrow \mb{wait}\,v;\mb{wait}\,w; \mb{close}\,y$ \\
\> \> \> $\mid \mathit{med} \Rightarrow \mb{wait}\,v;\mb{wait}\,w; \mb{close}\,y$ \\
\> \> \> $\mid \mathit{low} \Rightarrow \mb{wait}\,v;\mb{wait}\,w; \mb{close}\,y)$ \\
\> \> $\mid \mathit{fail}\Rightarrow \mb{wait}\,w; \mb{close}\,y ))\runsec{@\mb{alice}}$
\end{tabbing}
\end{minipage}
\end{small}
\end{center}

\noindent The first line of the above process definition denotes the process' signature.  It is
in line with the process term typing judgment introduced in \secref{sec:type-system} and
indicates that process $\m{Alice}$ provides a session of type $\m{customer}$ along channel $y$
without being a client of any other sessions (denoted by $\cdot$ on the left of the turnstile).
The next line introduces the bindings of channels variables to be used in the body of the
process, appearing to the right of the $=$ sign.  We generally use the symbol $\leftarrow$
denote variable bindings.  For the time being, we ignore the secrecy annotations
$\maxsec{[\mb{alice}]}$ and $\runsec{@\mb{alice}}$.

In its body, the $\m{Alice}$ process first receives a channel to Alice's authorization process.
Along this channel it then sends Alice's authorization token.  If that token is correct, the
authorization process will respond by sending a channel to Alice's account process.  Otherwise,
the $\m{Alice}$ process waits for the authorization process to terminate and then terminates
itself.  In case of successful authentication, the $\m{Alice}$ process queries its account
process for its balance, willing to receive any of the labels $\mathit{high}$, $\mathit{med}$,
or $\mathit{low}$, and then waits for the authorization and account processes to terminate,
before terminating itself.

A distinguishing feature of session type programming is that channels and the processes
offering along those channels change their types along with the messages exchange.  It is
instructive to walk through the body of process $\m{Alice}$ to follow these state changes,
consulting \tabref{tab:session_types} as needed.  We include annotations as comments,
indicating the types of all channels existing at the various points in the code\footnote{We have
omitted secrecy annotations for compactness.}.

Next, we show the implementation of Alice's authorization process $\m{aAuth}$.  This process
offers a session of type $\m{auth}$ along its offering channel $x$ and uses a process along
channel $u$, which offers a choice between access to Alice's account process (label $\mi{s}$)
or a terminating process (label $\mi{f}$).  The $\m{aAuth}$ process waits to receive an
authorization token along its offering channel.  If the sent token is Alice's authorization
token ($\mathit{tok}_j$), the authorization process sends the label $\mi{succ}$ along its
offering channel as well as the label $\mi{s}$ along channel $u$, after which it sends the
channel $u$ providing access to Alice's account process along $x$ and then terminates.
Otherwise, the authorization process sends the labels $\mi{fail}$ and $\mi{f}$ along channel
$x$ and $u$, respectively, waits for $u$ to terminate and then terminates itself.

\begin{center}
\begin{small}
\begin{minipage}[t]{\textwidth}
\begin{tabbing}
$u{:}\,\&\{\mathit{s}{:}\m{account}, \mathit{f}{:}1\}\maxsec{[\mb{alice}]} \vdash \m{aAuth} :: x{:}\m{auth}\maxsec{[\mb{alice}]}$ \\
$x \leftarrow \m{aAuth}\leftarrow u = ($ \\
\quad \= $\mb{case} \, x \, ($\= $\mathit{tok}_j \Rightarrow x.\mathit{succ};u.\mathit{s}; \mb{send}\, u\, x; \mb{close}\,x$ \\
\> \> $\mid \mathit{tok}_{i \neq j} \Rightarrow x.\mathit{fail}; u.\mathit{f}; \mb{wait}\, u; \mb{close}\, x ))\runsec{@\mb{alice}}$
\end{tabbing}
\end{minipage}
\end{small}
\end{center}

The implementation of Alice's account process $\m{aAcc}$ is finally shown below.  We leave it
to the reader to walk through the code, consulting \tabref{tab:session_types} as needed.

\begin{center}
\begin{small}
\begin{minipage}[t]{\textwidth}
\begin{tabbing}
$\cdot \vdash \m{aAcc} :: u{:}\&\{\mathit{s}{:}\m{account}, \mathit{f}{:}1\}\maxsec{[\mb{alice}]}$ \\
$u \leftarrow \m{aAcc}\leftarrow \cdot = ($ \\
\quad \= $\mb{case} \, u \, ($\= $\mathit{s} \Rightarrow u.\mathit{high}; \mb{close}\, u$ \\
\> \> $\mid \mathit{f} \Rightarrow \mb{close}\,u ))\runsec{@\mb{alice}}$
\end{tabbing}
\end{minipage}
\end{small}
\end{center}

It is instructive to look at the implementation of the bank process, which instantiates our
running example.  We assume corresponding process definitions for Bob and the rate
to be displayed on the bulletin board.

\begin{center}
\begin{small}
\begin{minipage}[t]{\textwidth}
\begin{tabbing}
$x{:}\,\m{auth}\maxsec{[\mb{alice}]} , y{:}\,\m{customer}\maxsec{[\mb{alice}]}, x'{:}\m{auth}\maxsec{[\mb{bob}]},$ \\
$y'{:}\,\m{customer}\maxsec{[\mb{bob}]}, u{:}\,\m{rate}\maxsec{[\mb{guest}]} \vdash \m{Bank} :: z{:}1\maxsec{[\mb{bank}]}$ \\
$z \leftarrow \m{Bank}\leftarrow x,x',y,y',u = ($ \\
\quad \= $\mb{send}\, x\,y;\mb{send}\, x'\,y'; u.\mathit{lowRate};$ \\
\>$\mb{wait}\,y ;\mb{wait}\,y'; \mb{wait}\, u; \mb{close}\,z)\runsec{@\mb{guest}}$
\end{tabbing}
\end{minipage}
\end{small}
\end{center}

\noindent \figref{fig:example} shows the run-time configuration of processes that exist before and after
executing the first statement in the above code. Intuitionistic linear typing imposes a
\emph{tree} structure on process configurations such that client processes are parent nodes and
provider processes child nodes.  \figref{fig:example} also demonstrates that message exchanges
may not only change the type of a channel and its offering process but also the structure of
the tree.  Changes in the tree structure, in particular, are due to the connectives $\chanin$
and $\chanout$, which make a sibling subtree the child of the recipient and a child subtree a
sibling of the sender, respectively.

\begin{figure}
\begin{center}
\includegraphics[scale=0.39]{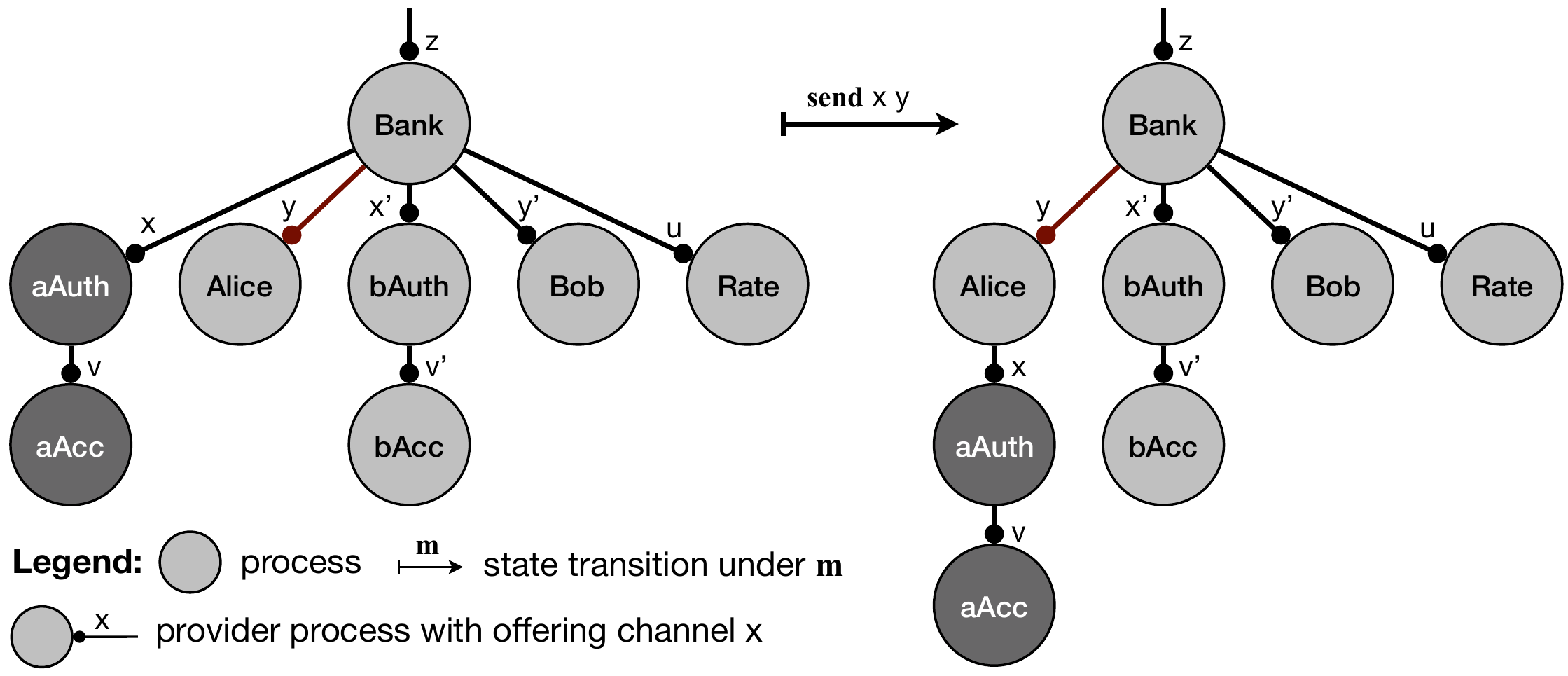}
\caption{State transition in process configuration due to $\multimap$.}
\label{fig:example}
\end{center}
\end{figure}

It is time to ask ourselves whether the $\m{Bank}$ process is actually secure.  For this
purpose we now consider the red secrecy annotations $\maxsec{[\mb{d}]}$.  These annotations
indicate the \emph{maximal secrecy} of a process, \ie the maximal level of secret information
the process may ever obtain.  As to be expected, the processes $\m{Alice}$, $\m{aAuth}$, and
$\m{aAcc}$ have maximal secrecy $\maxsec{[\mb{alice}]}$ because they know Alice's authorization
token and account balance.  Similarly, the processes associated with Bob have maximal secrecy
$\maxsec{[\mb{bob}]}$.  The $\m{Bank}$ process itself has the highest maximal secrecy of
$\maxsec{[\mb{bank}]}$.  The process associated with the rate bulletin board, on the other
hand, has the lowest maximal secrecy $\maxsec{[\mb{guest}]}$ because information about interest
rates are available to any walk-in customer.  Given these annotations and the security lattice
defined earlier, we can conclude that process $\m{Bank}$ is secure: it sends Alice's
authorization process to Alice and Bob's authorization process to Bob, but not other way
around.

Next, let's ask the same question for the below $\m{LeakyBank}$ process implementation.  As
its name suggests, this implementation is not secure.  Information is leaked by sending the
channel to Alice's authorization process to a customer with a maximal secrecy of
$\maxsec{[\mb{guest}]}$, potentially allowing such a customer to get access to Alice's bank
account.

\begin{center}
\begin{small}
\begin{minipage}[t]{\textwidth}
\begin{tabbing}
$x{:}\,\m{auth}\maxsec{[\mb{alice}]} , y{:}\,\m{customer}\maxsec{[\mb{guest}]} \vdash \m{LeakyBank} :: z{:}1\maxsec{[\mb{bank}]}$ \\
$z \leftarrow \m{LeakyBank}\leftarrow x,y = ($ \\
\quad \= $\mb{send}\, x\,y;$
\comment{// insecure send} \\
\> $\mb{wait}\,y; \mb{close}\,z)\runsec{@{\mb{guest}}}$
\end{tabbing}
\end{minipage}
\end{small}
\end{center}

While process $\m{LeakyBank}$ contains what is referred to as a \emph{direct} flow there also
exist \emph{indirect} flows, which are more subtle.  For example, consider the below process
definition $\m{SneakyaAuth}$ that not only authenticates Alice but also indirectly leaks
information about whether Alice's authorization was successful to the adversary $x_1$.

\begin{center}
\begin{small}
\begin{minipage}[t]{\textwidth}
\begin{tabbing}
$x_1{:} \&\{s{:}1, f{:} 1\}\maxsec{[\mb{guest}]},$ \\
$u{:}\,\&\{\mathit{s}{:}\m{account}, \mathit{f}{:}1\}\maxsec{[\mb{alice}]} \vdash \m{SneakyaAuth} :: x{:}\m{auth}\maxsec{[\mb{alice}]}$ \\
$x \leftarrow \m{SneakyaAuth}\leftarrow u, x_1 = ($ \\
\quad \= $\mb{case} \, x \, ($\= $\mathit{tok}_j \Rightarrow$ \=$x.\mathit{succ};u.\mathit{s}; x_1.s;$
\comment{// insecure send} \\
\> \> \>$\mb{send}\, u\, x ; \mb{wait}\, x_1; \mb{close}\,x$ \\
\> \> $\mid \mathit{tok}_{i \neq j} \Rightarrow$ \=$x.\mathit{fail}; u.\mathit{f}; x_1.f;$
\comment{// insecure send} \\
\> \> \> $\mb{wait}\, u; \mb{wait}\, x_1; \mb{close}\, x ))\runsec{@\mb{alice}}$
\end{tabbing}
\end{minipage}
\end{small}
\end{center}

\noindent Process $\m{SneakyaAuth}$ is not secure because the sends to the adversary $x_1$ with
maximal secrecy $\maxsec{[\mb{guest}]}$ happen when branching on channel $x$ whose maximal
secrecy is $\maxsec{[\mb{alice}]}$.

To rule out indirect information flows in $\lang$, we complement the maximal secrecy of a process with its
\emph{running secrecy}, occurring as green process term level annotations $\runsec{@\mb{c}}$.
The running secrecy denotes the highest level of secret information a process has obtained so
far.  When defining a process, a programmer must indicate the process' maximal secrecy as well
as the \emph{initial} running secrecy the process starts out with when spawned.  As we will see
in Sections~\ref{sec:type-system} and \ref{sec:key-ideas}, the $\lang$ type system increases
the running secrecy accordingly whenever information of higher secrecy is received and
disallows sends from contexts of a higher running secrecy than the one of the receiver.

\section{Key Ideas - Part I}\label{sec:key-ideas}

This section develops the main ideas underlying the $\lang$ type system.

The banking example discussed in the previous section reveals that a process configuration
naturally aligns with the security lattice of the application: processes with higher maximal
secrecy are ancestors (direct or transitive parents) of processes with lower or same maximal secrecy.
For the $\m{Bank}$ configuration shown in \figref{fig:example}, for example, the $\m{Bank}$
process has the top maximal secrecy $\maxsec{[\mb{bank}]}$ and is the root process of the
configuration, whereas all its descendants (direct or transitive children) have a lower maximal
secrecy.

We can impose this property as a \emph{presupposition} on the typing judgment for process terms:
{\small\[
\Psi; \Delta \vdash P \runsec{@c}:: (x{:}A\maxsec{[d]})
\]}
\noindent with presuppositions:

\begin{enumerate}[label=(\roman*)]

\item {\small$\forall y{:}B\maxsec{[d']} \in \Delta\,  (\Psi \Vdash \maxsec{d'} \sqsubseteq \maxsec{d})$}

\item {\small$\Psi \Vdash \runsec{c} \sqsubseteq \maxsec{d}$}

\end{enumerate}

The typing judgment states that process $P$ with maximal secrecy $\maxsec{[d]}$ and running
secrecy $ \runsec{@c}$ provides a session of type $A$ along channel variable $x$, given the
typing of sessions offered along channel variables in $\Delta$ and given the secrecy levels in the
security lattice $\Psi$.  $\Delta$ is a \emph{linear}
context that consists of a finite set of assumptions of the form $y_i{:}B_i\maxsec{[d_i']}$,
indicating for each channel variable $y_i$ its maximal secrecy $\maxsec{[d_i']}$ and the
offered session type $B_i$.  Channel variables $y_i$ must be unique in $\Delta$ and different
from $x$.  This well-formedness condition together with the fact the sequent has exactly one
succedent, turns process configurations into trees.  The process $P$ under consideration is the
parent node of all the processes providing along channels in $\Delta$.

We point out our use of ``channel variable'' for $x$ and $y_i$.  Channels only exist at
run-time, being allocated whenever a process is spawned and substituted for the channel
variables occurring in process terms.  As a result, channel variables can be $\alpha$-varied,
as usual.  For brevity, we will use the term channel rather than channel variable, whenever the
context determines whether a variable or run-time channel is meant.

The presuppositions guarantee that \textit{(i)} the maximal secrecy of a child node is at most
as high as the maximal secrecy of the providing (parent) node and that \textit{(ii)} the
running secrecy of the providing (parent) node is capped by its maximal secrecy.  By
transitivity, assertion \textit{(i)} holds equally for any descendant of the providing node.
Assertion \textit{(ii)} ensures that a node can never obtain more secrets than it is licensed
to.  We refer to both assertions as the \emph{tree invariant}.  Stating the tree invariant as a
presupposition requires the process term typing rules to preserve, but not to establish the
invariant.  This is sufficient because the tree invariant holds for any well-typed process
configuration, as expressed by the configuration typing rules discussed in
\secref{sec:config_typing}.

The tree invariant is sufficient to rule out any direct flows.  For example, the attempt to
send Alice's authorization process to a walk-in customer in process $\m{LeakyBank}$ (see
\secref{sec:motexample}), violates the tree invariant and thus does not type-check.  The tree
invariant, however, is not sufficient to rule out indirect flows.  To tackle indirect flows the
type system must make sure that the running secrecy of a process always soundly reflects the
level of secret information a process has obtained so far.  To this end, it increases the
running secrecy upon each receive and correspondingly guards sends, according to the following
schema:

\begin{enumerate}[label=(\roman*)]

\item \textbf{After} receipt of a message, the running secrecy of the receiving process must be
  increased to \textbf{at least} the maximal secrecy of the sending process, and

\item \textbf{before} sending a message, the running secrecy of the sending process must be
  \textbf{at most} the maximal secrecy of the receiving process.

\end{enumerate}

\noindent This schema intimately relies on the tree invariant and uses the maximal secrecy as a
sound approximation for the running secrecy of a process.  We refer to it as the \emph{secrecy pas
  de deux}.  The next section puts the discussed ideas into action.

\section{IFC Session Type System}\label{sec:type-system}

This section formalizes $\lang$, giving the process term typing, configuration typing, and 
asynchronous semantics.  The system implements the ideas discussed in the previous section to
rule out both direct and indirect information flows.  We defer proofs of type safety and
noninterference to \secref{sec:metatheory}.

\subsection{Process Typing}\label{sec:proc_typing}

Our process typing rules are based on the \emph{sequent calculus}, leading to a left and a
right rule for each connective, describing the interaction from the point of view of the
provider and client, respectively.  We first discuss the rules for the individual connectives
in \tabref{tab:session_types} and then conclude with the judgmental rules cut and identity.

\subsubsection{Internal and External Choice}

Internal ($\oplus$) and external ($\&$) choice are the branching constructs, giving the choice to the provider or the client, respectively.
\begin{small}
\[
\inferrule*[right=$\oplus R$]
{\Psi; \Delta \vdash P@d_1::  y{:}A_{k}[c] \\
k \in L}
{\Psi;\Delta \vdash (y.k; P)@d_1 ::  y{:}\oplus\{\ell{:}A_{\ell}\}_{\ell \in L}[c]}
\]
\[
\inferrule*[right=$\oplus L$]
{\Psi \Vdash d_2=c \sqcup d_1 \\\\
\Psi;  \Delta, x{:}A_{k}[c] \vdash  Q_k@d_2::  y{:}C[c'] \\
\forall k \in L}
{\Psi; \Delta, x{:}\oplus\{\ell:A_{\ell}\}_{\ell \in L}[c] \vdash (\mathbf{case}\, x(\ell \Rightarrow Q_\ell)_{\ell \in L})@d_1::  y{:}C[c']}
\]
\[
\inferrule*[right=$\& R$]
{\Psi; \Delta \vdash Q_k@c::  y{:}A_{k}[c] \\
\forall k \in L}
{\Psi;\Delta \vdash (\mathbf{case}\, y(\ell \Rightarrow Q_\ell)_{\ell \in I})@d_1::  y{:}\&\{\ell:A_{\ell}\}_{\ell \in L}[c]}
\]
\[
\inferrule*[right=$\& L$]
{\Psi \Vdash d_1 \sqsubseteq c \\
\Psi;  \Delta, x{:}A_{k}[c] \vdash  P@d_1::  y{:}C[c'] \\
k \in L}
{\Psi; \Delta, x{:}\&\{\ell:A_{\ell}\}_{\ell \in I}[c] \vdash (x.k; P)@d_1::  y{:}C[c']}
\]
\end{small}
Let's convince ourselves that the rules preserve the tree invariant.  To preserve
the invariant, we may assume that the invariant holds for the conclusion and must establish it
for the premise.  Since the rules do neither add to or remove any channels from $\Delta$, they
preserve the invariant by assumption.  Let's examine whether the rules implement the secrecy pas de deux.  In
case of a receive, the running secrecy of the continuation must be increased to at least the
maximal secrecy of the sending channel.  In $\oplus L$, the premise $d_1 \sqcup c$ makes this
adjustment.  In $\& R$, no explicit adjustment is needed because the new running secrecy
$d_1\sqcup c$ amounts to $c$, by the tree invariant.  In case of a send, on the other hand, the
send is only admissible if the running secrecy of the sender is at most the maximal secrecy of
the receiving channel.  In $\oplus R$, this guard ($d_1 \sqsubseteq c$) is already established
by the tree invariant. $\&L$ explicitly establishes the guard with the premise
$\Psi \Vdash d_1 \sqsubseteq c$.

\subsubsection{Higher-Order Channels}

Tensor ($\chanout$) and lolli ($\chanin$) denote channel output (send) and input (receive), respectively.
\begin{small}
\[
\inferrule*[right=$\otimes R$]
{\Psi;\Delta \vdash  P@d_1:: y{:}B[c]}
{\Psi;\Delta, z{:}A[c]\vdash (\mathbf{send}\,z\,y; P)@d_1::  y{:}A \otimes B[c]}
\]
\[
\inferrule*[right=$\otimes L$]
{d_2= c\sqcup d_1 \\
\Psi':= (\Psi, \psi = c) \\
\Psi';\Delta, z:A[\psi], x{:}B[c] \vdash  P@d_2:: y{:} C[c']}
{\Psi;\Delta, x{:}A \otimes B[c] \vdash (z\leftarrow \mathbf{recv}\, x; P)@d_1::  y{:}C[c']}
\]
\[
\inferrule*[right=$\chanin R$]
{\Psi':=(\Psi, \psi = c)\\
\Psi';\Delta,  z{:}A[\psi] \vdash  P@c:: y{:} B[c]}
{\Psi; \Delta \vdash (z \leftarrow \mathbf{recv}\,y; P)@d_1::  y{:}A \multimap B[c]}
\]
\[
\inferrule*[right=$\chanin L$]
{\Psi \Vdash d_1 \sqsubseteq d \\
\Psi; \Delta, x{:}B[d] \vdash  P@d_1:: y{:}C[c']}
{\Psi;\Delta, z{:}A[d], x{:}A \multimap B[d] \vdash (\mathbf{send}\, z\,x; P)@d_1::  y{:}C[c']}
\]
\end{small}
To understand that the rules preserve the tree invariant, it is helpful to remind ourselves
that the connectives $\chanout$ and $\chanin$ change the tree structure, making a child a
sibling of the sender and a sibling the child of the recipient, respectively.  $\otimes R$
preserves the tree invariant without any extra conditions.  By assumption we know that the
maximal secrecy of the sent channel is equal to the maximal secrecy $c$ of the
provider.  Also by assumption, we know that the maximal secrecy of the provider is less than or
equal to the one of its parent, \emph{ensuring} that the tree invariant is preserved for
$\chanout L$ as well.  $\chanout R$ also implements the secrecy pas de deux, since
$d_1 \sqsubseteq c$ by assumption.  While $\chanout R$ license us to \emph{assume} in
$\chanout L$ that the maximal secrecy $\psi$ of the received channel $z$ is equal
to the maximal secrecy $c$ of the sending channel $x$, the actual maximal secrecy level of $z$
is statically unknown.  As a result, $\psi$ stands for a secrecy \emph{variable}, and we extend
the security lattice with $ \psi = c$.  The premise
$d_2= c\sqcup d_1$ in $\chanout L$ lastly implements the secrecy pas de deux, raising the
running secrecy of the continuation $P$ to $c$, unless $c \sqsubseteq d_1$.  The reasoning for
$\chanin R$ and $\chanin L$ are analogous, but with the roles reversed.  

\subsubsection{Termination}

The multiplicative unit ($1$) denotes process termination.
\begin{small}
\[
\inferrule*[right=$1 R$]
{\strut}
{\Psi;\cdot \vdash (\mathbf{close}\,y)@d_1 ::  y{:}1[c]}
\]
\[
\inferrule*[right=$1 L$]
{\Psi \Vdash d_2= c \sqcup d_1 \\
\Psi; \Delta \vdash Q@d_2 ::  y{:}T[d]}
{\Psi;\Delta, x{:}1[c] \vdash (\mathbf{wait}\,x;Q)@d_1 ::  y{:}C[d]}
\]
\end{small}
$1 R$ trivially preserves the tree invariant because there is no continuation and implements
the secrecy pas de deux since $d_1 \sqsubseteq c$ by assumption.  Similarly, $1 L$ preserves
the tree invariant by simply removing a channel from the continuation and implements the
secrecy pas de deux with the left premise.

\subsubsection{Identity and Cut}

Identity and cut are the two rules that do not result in any communication.  Identity amounts
to termination after identifying the involved channels and cut to process spawning.  For
simplicity, we do not support process definitions.  The examples from \secref{sec:motexample}
can be rewritten by inlining the body of the process definition when called.
\begin{small}
\[
\inferrule*[right=$\m{Fwd}$]
{\strut}
{\Psi; x{:}A[c] \vdash (y \leftarrow x)@{d_1} ::  y{:}A[c]}
\]
\[
\inferrule*[right=$\m{Cut}$]
{\Psi \Vdash d_1 \sqsubseteq d_2 \sqsubseteq d' \\
\forall \, z{:}A[c']\in \Delta_1.\, \Psi \Vdash c' \sqsubseteq d' \\
\Psi; \Delta_1 \vdash P@{d_2} ::x: B[d'] \\\\
\Psi \Vdash  d'\sqsubseteq d \\
\Psi; x:B[d'], \Delta_2  \vdash Q@{d_1} ::  y{:}C[d]}
{\Psi;\Delta_1, \Delta_2 \vdash  ((x^{d'} \leftarrow P)@d_2; Q)@{d_1} ::  y{:}C[d]}
\]
\end{small}
We briefly comment on $\m{Cut}$.  The premise $\Psi \Vdash d'\sqsubseteq d$ establishes the
tree invariant for the continuation $Q$ and the premise
$\forall \, z{:}A[c']\in \Delta_1.\, \Psi \Vdash c' \sqsubseteq d'$ for the spawned process
$P$.  The premise $\Psi \Vdash d_1 \sqsubseteq d_2 \sqsubseteq d'$ is vital to prevent any
indirect flows from $Q$ via $P$.  It ensures that the newly
spawned process has at least the knowledge of secret information that its spawner has.  Thanks
to this premise the below insecure example, which indirectly leaks information about the success
of Alice's authorization to the adversary $x_1$, is rejected.

\begin{center}
\begin{small}
\begin{minipage}[t]{\textwidth}
\begin{tabbing}
$x_1{:} \&\{s{:}1, f{:} 1\}\maxsec{[\mb{guest}]},$ \\
$u{:}\,\&\{\mathit{s}{:}\m{account}, \mathit{f}{:}1\}\maxsec{[\mb{alice}]} \vdash \m{SneakyaAuth} :: x{:}\m{auth}\maxsec{[\mb{alice}]}$ \\
$x \leftarrow \m{SneakyaAuth}\leftarrow u, x_1 = ($ \\
\quad \= $\mb{case} \, x \, ($\= $\mathit{tok}_j \Rightarrow$ \=$x.\mathit{succ};u.\mathit{s}; z_1 \leftarrow S \leftarrow x_1;$
\comment{// insecure spawn} \\
\> \> \>$\mb{send}\, u\, x ; \mb{wait}\, z_1; \mb{close}\,x$ \\
\> \> $\mid \mathit{tok}_{i \neq j} \Rightarrow$ \=$x.\mathit{fail}; u.\mathit{f}; z_1 \leftarrow F \leftarrow x_1;$
\comment{// insecure spawn} \\
\> \> \> $\mb{wait}\, u; \mb{wait}\, z_1; \mb{close}\, x ))\runsec{@\mb{alice}}$ \\[3pt]
$x_1{:}\&\{s{:}1, f{:} 1\}\maxsec{[\mb{guest}]} \vdash \m{S} :: z_1{:}1\maxsec{[\mb{alice}]}$ \\
$z_1 \leftarrow \m{S}\leftarrow x_1 = (x_1.s; \mb{wait}\, x_1; \mb{close}\, z_1)\runsec{@\mb{guest}}$ \\[3pt]
$x_1{:}\&\{s{:}1, f{:} 1\}\maxsec{[\mb{guest}]} \vdash \m{F} :: z_1{:}1\maxsec{[\mb{alice}]}$ \\
$z_1 \leftarrow \m{F}\leftarrow x_1 = (x_1.f; \mb{wait}\, x_1; \mb{close}\, z_1)\runsec{@\mb{guest}}$
\end{tabbing}
\end{minipage}
\end{small}
\end{center}

\subsection{Asynchronous Dynamics}

We define an \emph{asynchronous} dynamics for $\lang$ because it is not only more practical but
also allows for a more accurate statement of noninterference.  The dynamics is in line
with~\cite{BalzerICFP2017,DasCSF2021}, with the difference that it considers open
configurations.  The result is shown in \figref{fig:dynamics}.  We first convey the main ideas
and then comment on selected rules.

In an asynchronous semantics only receivers can be blocked, while senders just output the
message and proceed with their continuation.  We model such outputted messages as special
$\mb{msg}(P)$ processes that just contain the particular message.  In order to ensure that an
outputted message is properly sequenced with the sender's continuation, we use forwarding.
\figref{fig:dynamics_idea} schematically illustrates this idea, showing the case of a
\emph{positive} (sending) connective in the first line and the case of a \emph{negative}
(receiving) connective in the second line, with $\m{S}$ and $\m{R}$ standing for the sending
and receiving process, respectively.  The message process $\mb{msg}(P)$ is depicted in red.
This process has a subtree, in case of $\chanout$ and $\chanin$.  We can think of the message
as being spawned by the sender.  This results in the allocation of a new generation
$y_{\alpha+1}$ of the carrier channel $y_\alpha$.  The forward then links the two generations
$y_{\alpha+1}$ and $y_\alpha$ appropriately.  In case of a positive connective, $y_{\alpha+1}$
is forwarded to $y_\alpha$, in case of a negative connective, $y_\alpha$ is forwarded to
$y_{\alpha+1}$.  Once the message has been received, it terminates and $y_{\alpha+1}$ is
substituted for $y_\alpha$ in the receiver's continuation $\m{R'}$.  Messages can be ``queued
up'' as long as the polarity of the carrier channel stays the same.  Session typing ensures
that any messages ``in flight'' must first be received before the polarity of the carrier
channel changes.

\begin{figure}
\begin{center}
\includegraphics[scale=0.38]{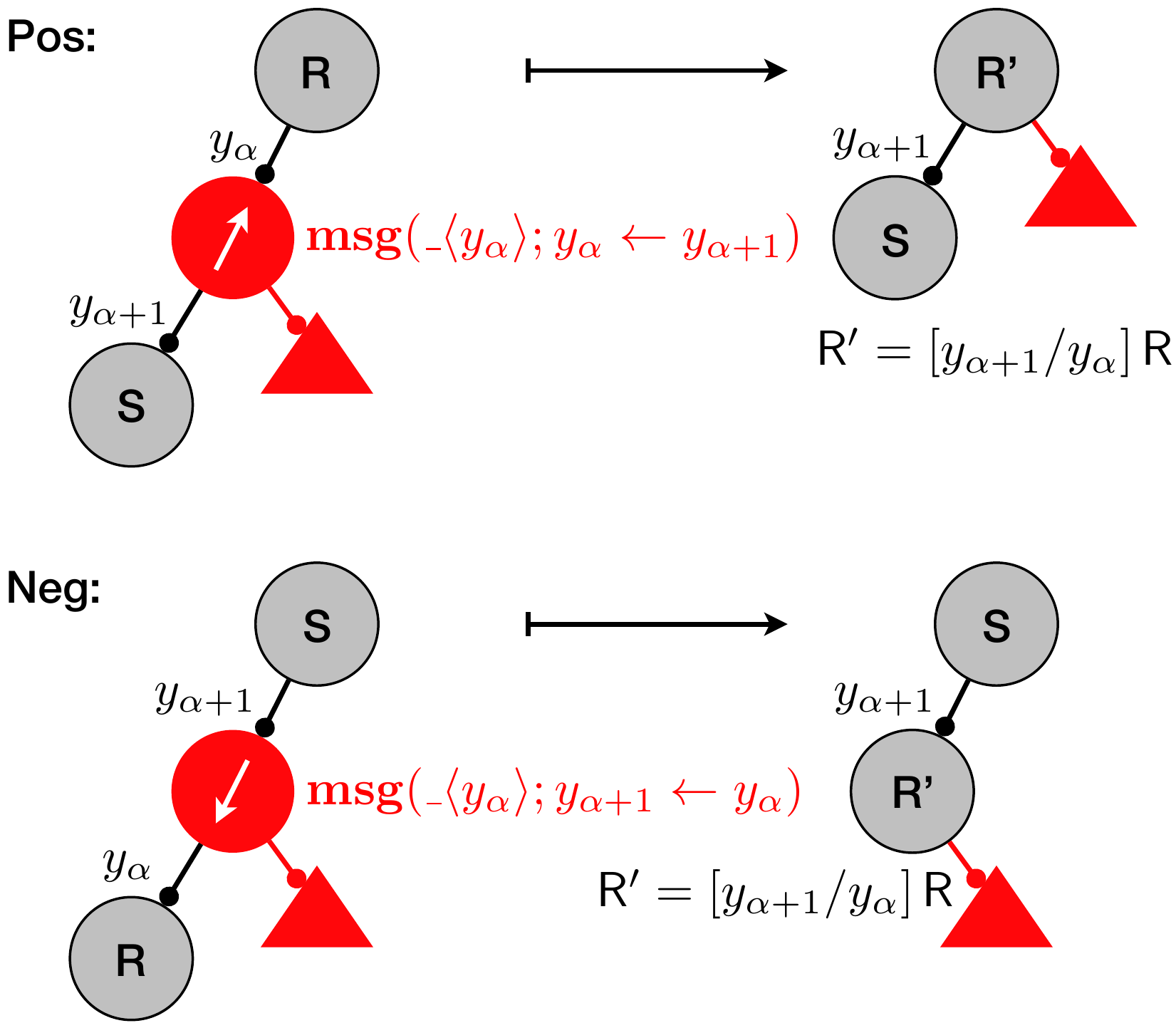}
\caption{Schematic illustration of asynchronous dynamics.}
\label{fig:dynamics_idea}
\end{center}
\end{figure}

\begin{figure*}
    \centering
    {\small\[\begin{array}{lclr}
    \mc{C}_1 \mb{proc}(y_\alpha[c], (y_\alpha \leftarrow x_\beta)@d_1) \mc{C}_2 \ostep \mc{C}_1 [x_\beta/y_\alpha]\mc{C}_2 \qquad (y_\alpha \not \in \Delta')& \; \m{fwd}\\
    \mc{C}_1 \mb{proc}(y_\alpha[c],  (x^d \leftarrow P)@d_2 ; Q@d_1) \mc{C}_2 \ostep \mc{C}_1 \mb{proc}(x_0[d], (\subst{x_0}{x}{P})@d_2)\mb{proc}(y_\alpha[c],  (\subst{x_0}{x}{Q})@d_1) \mc{C}_2 \qquad (x_0\, \textit{fresh}) & \m{Cut}\\
     \mb{proc}(y_\alpha[c],(\mathbf{close}\, y_\alpha)@d_1)\,\mc{C}_2 \ostep \mb{msg}(\mathbf{close}\, y_\alpha) \mc{C}_2 & 1 \\ 
        \mc{C}_1 \mb{msg}(\mathbf{close}\, y_\alpha) \mc{C}'\mb{proc}(x_\beta[{c'}], (\mathbf{wait}\,y_\alpha;Q)@d_1) \mc{C}_2 \ostep \mc{C}_1 \mc{C}' \mb{proc}(x_\beta[{c'}], Q@(d_1 \sqcup c)) \mc{C}_2 & 1 \\ 
        \mc{C}_1 \mb{proc}(y_\alpha[c], y_\alpha.k; P@d_1)\mc{C}_2 \ostep \mc{C}_1 \mb{proc}(y_{\alpha+1}[c], ([y_{\alpha+1}/y_{\alpha}]P)@d_1) \mb{msg}( y_\alpha.k;y_\alpha \leftarrow y_{\alpha+1}) \mc{C}_2 & \oplus\\
          \mc{C}_1 \mb{msg}(y_\alpha[{c}].k; y_\alpha \leftarrow v_{\delta}))\mc{C}' \mb{proc}(u_\gamma[{c'}],\mb{case}\, y_\alpha (( \ell \Rightarrow P_\ell)_{\ell \in L})@d_1) \mc{C}_2 \ostep   \mc{C}_1\mc{C}'\mb{proc}(u_{\gamma}[{c'}], ([v_{\delta}/y_{\alpha}] P_k)@(d_1\sqcup c)) \mc{C}_2 & \oplus\\
    \mc{C}_1 \mb{proc}(y_\alpha[c], (x_\beta.k; P)@d_1)\mc{C}_2 \ostep  \mc{C}_1 \mb{msg}( x_\beta.k; x_{\beta+1} \leftarrow x_\beta) \mb{proc}(y_{\alpha}[c], ([x_{\beta+1}/x_{\beta}]P)@d_1) \mc{C}_2 & \& \\ 
        \mc{C}_1  \mb{proc}(y_\alpha[c],(\mb{case}\, y_\alpha ( \ell \Rightarrow P_\ell)_{\ell \in L})@d_1) \mc{C}'\mb{msg}(y_\alpha.k; v_\delta \leftarrow y_\alpha) \mc{C}_2 \ostep  \mc{C}_1\mb{proc}(v_{\delta}[c],([v_{\delta}/y_{\alpha}] P_k)@c)\mc{C}' \mc{C}_2 & \& \\
          \mc{C}_1 \mb{proc}(y_\alpha[c],(\mathbf{send}\,x_\beta\,y_\alpha; P)@d_1)\mc{C}_2 \ostep \mc{C}_1 \mb{proc}(y_{\alpha+1}[c], ([y_{\alpha+1}/y_{\alpha}]P)@d_1) \mb{msg}( \mathbf{send}\,x_\beta\,y_\alpha; y_\alpha \leftarrow y_{\alpha+1}) \mc{C}_2 & \otimes\\
    \mc{C}_1 \mb{msg}(\mathbf{send}\,x_\beta\,y_\alpha;y_{\alpha}\leftarrow v_\delta)\mc{C}' \mb{proc}(u_\gamma[{c'}],(w_\eta\leftarrow \mathbf{recv}\,y_\alpha; P)@d_1) \mc{C}_2 \ostep   \mc{C}_1\mc{C}'\mb{proc}(u_{\gamma}[{c'}], ([x_\beta/w_\eta][v_{\delta}/y_{\alpha}] P)@(d_1\sqcup c)) \mc{C}_2 & \otimes\\
    \mc{C}_1 \mb{proc}(y_\alpha[c],(\mathbf{send}\,x_\beta\,u_\gamma; P)@d_1) \mc{C}_2 \ostep  \mc{C}_1 \mb{msg}( \mathbf{send}\,x_\beta\,u_\gamma; u_{\gamma+1}\leftarrow u_\gamma) \mb{proc}(y_{\alpha}[c], ([u_{\gamma+1}/u_{\gamma}]P)@d_1) \mc{C}_2 & \multimap\\
    \mc{C}_1  \mb{proc}(y_\alpha[c],(w_\eta\leftarrow \mathbf{recv}\,y_\alpha; P)@d_1)\mc{C}'\mb{msg}(\mathbf{send}\,x_\beta\,y_\alpha;v_{\delta}\,\leftarrow y_\alpha) \mc{C}_2 \ostep   \mc{C}_1\mb{proc}(v_\delta[c], ([x_\beta/w_\eta][v_{\delta}/y_{\alpha}] P)@ c) \mc{C}'\mc{C}_2 & \multimap\\
\end{array}\]
}
    \caption{Asynchronous dynamics of $\lang$.}
    \label{fig:dynamics}
\end{figure*}

\figref{fig:dynamics} defines the asynchronous dynamics in terms of rewriting rules
$\mc{C} \ostepp \mc{C'}$ that rewrite open configuration $\mc{C}$ with type
$\Psi;\Delta\Vdash \mc{C}:: \Delta'$ to open configuration $\mc{C'}$ with type
$\Psi;\Delta\Vdash \mc{C'}:: \Delta'$.  We detail the configuration typing in the next section.
$\m{Cut}$ allocates a fresh channel $x_0$ at \emph{generation} 0.  This channel is substituted
for the channel variable $x$ occurring in the process terms $P$ and $Q$ in the post-state.  The
generation $\alpha$ of a channel $y_\alpha$ is incremented to $\alpha+1$ whenever a new message
is spawned, except for $1$ because there is no continuation.  Lastly we point out that
$\m{fwd}$ is not defined for any channels in $\Delta'$ because those configurations are
considered \emph{poised}, as we discuss in \secref{sec:poised_config}.

\subsection{Configuration Typing}\label{sec:config_typing}

We use the judgment $\Psi;\Delta\Vdash \mc{C}:: \Delta'$ to type an open configuration
$\mc{C}$.  An open configuration consists of an open \emph{forest} of processes
$\mathbf{proc}(x[d], P@d_1)$ and messages $\mathbf{msg}(P)$.  While our logical relation is
phrased in terms of an open \emph{tree} --- representing the partial program under
consideration --- typing of an open forest is necessitated by the inductive nature of the below
rules.  The judgment indicates that $\mc{C}$ provides sessions in $\Delta'$, using sessions in
$\Delta$, and given the security lattice $\Psi$.  Both $\Delta'$ and $\Delta$ are linear
contexts, consisting of a finite set of assumptions of the form $y_i{:}B_i\maxsec{[d_i']}$,
where $y_i$ denotes an actual \emph{channel} that has been allocated upon spawning a process.
For simplicity, we do not display a channel's generation.
\begin{small}
\[
\inferrule*[right=$\mathbf{emp}_1$]
{\strut}
{\Psi; x{:}A[d] \Vdash \cdot :: (x{:}A[d])}
\qquad
\inferrule*[right=$\mathbf{emp}_2$]
{\strut}
{\Psi; \cdot \Vdash \cdot :: (\cdot)}
\]
\[
\inferrule*[right=$\mathbf{proc}$]
{\Psi \Vdash d_1 \sqsubseteq d \\
\forall y{:}B[d'] \in \Delta'_0, \Delta \, (\Psi \Vdash d' \sqsubseteq d) \\\\
\Psi; \Delta_0 \Vdash \mathcal{C}:: \Delta \\
\Psi; \Delta'_0, \Delta \vdash P@d_1:: (x{:}A[d])}
{\Psi; \Delta_0, \Delta'_0 \Vdash \mathcal{C}, \mathbf{proc}(x[d], P@d_1):: (x{:}A[d])}
\]
\[
\inferrule*[right=$\mathbf{msg}$]
{\forall y{:}B[d'] \in \Delta'_0, \Delta \, (\Psi \Vdash d' \sqsubseteq d) \\
\Psi; \Delta_0 \Vdash \mathcal{C}:: \Delta \\
\Psi; \Delta'_0, \Delta \vdash P@d:: (x{:}A[d])}
{\Psi; \Delta_0, \Delta'_0 \Vdash \mathcal{C}, \mathbf{msg}(P):: (x{:}A[d])}
\]
\[
\inferrule*[right=$\mathbf{comp}$]
{\Psi; \Delta_0 \Vdash \mathcal{C}:: \Delta \\
\Psi; \Delta'_0 \Vdash \mc{C}_1:: x{:}A[d]}
{\Psi; \Delta_0, \Delta'_0 \Vdash \mathcal{C}, \mc{C}_1 :: \Delta, x{:} A[d]}
\]
\end{small}
Rule $\mathbf{comp}$ types an open forest, singling out the open tree $\mc{C}_1$ rooted at $x$.
Rules $\mathbf{proc}$ and $\mathbf{msg}$ type open trees, singling out their root process or
message, respectively.  Both rules include sufficient premises to establish the tree invariant.
Unlike processes, messages have no running secrecy associated because their running secrecy is
determined by the maximal secrecy of the sender.  \secref{sec:metatheory} provides further
details.  Rules $\mathbf{emp}_1$ and $\mathbf{emp}_2$, finally, type an empty open forest.

\subsection{Poised Configuration}\label{sec:poised_config}

What are values in a functional setting are \emph{poised} configurations here.  Prior
work~\cite{PfenningFOSSACS2015} has defined that notion only for closed configurations, and we
generalize it to open configurations.  An open configuration $\Psi;\Delta\Vdash \mc{C}::
\Delta'$ is poised, iff it is empty or none of its processes and messages can communicate with each other and
there exists at least one process or message that attempts to communicate along a channel in
$\Delta$ or $\Delta'$.

\begin{definition}[Poised Configuration]
A configuration $\Delta_1, \Delta_2 \Vdash \mc{C}_1, \mc{C}_2:: \Lambda, w{:}A'[c]$ is poised iff either $\mc{C}_1, \mc{C}_2$ is empty or $\Delta_1 \Vdash \mc{C}_1:: \Lambda$ is poised and $ \Delta_2 \Vdash \mc{C}_2::  w{:}A'[c]$ is poised. The configuration $\Delta_2 \Vdash \mc{C}_2 :: w{:}A'[c]$ is poised iff it cannot take any steps and at least one of the following conditions hold:
\begin{itemize}
\item $\mc{C}_2$ is an empty configuration.
\item $\mathcal{C}_2= \mc{C}'_2\mb{msg}(P)\mc{C}''_2$ such that $\mb{msg}(P)$ is a negative message along $y \in \Delta_2$, i.e. $y{:}\&\{\ell{:}A_\ell\}_{\ell \in L}[c_1]\Vdash \mb{msg}(P):: x{:}A_k[c_1]$ or $y{:}A\multimap B[c_1], z{:}A[c_1] \Vdash \mb{msg}(P):: x{:}B[c_1]$, and both subconfigurations $\mc{C}_2'$ and $\mc{C}_2''$ are poised. 
\item  $\mathcal{C}_2= \mb{proc}(x[c'],P@d_1)\,\mc{C}_2'$ such that $\mb{proc}(x[c'], P@d_1)$ attempts to receive along a channel $y {\in} \Delta_2$.
\item $\mathcal{C}_2= \mc{C}_2'\mb{msg}(P)$ such that $\mb{msg}(P)$ is a positive message sent along $w{:}A'[c]$, i.e. $ x{:}A_k[c] \Vdash \mb{msg}(P)::w{:}\oplus\{\ell{:}A_\ell\}_{\ell \in L}[c]$ or $x{:}B[c], z_{\gamma}{:}A[c] \Vdash \mb{msg}(P):: w{:}A\otimes B[c]$,or $\cdot \Vdash \mb{msg}(P):: w{:}1[c]$,
and subconfiguration $\mc{C}'_2$  is poised. 
\item   $\mathcal{C}_2= \mb{proc}(w[c], P@d_1)\,\mc{C}_2'$ such that $\mb{proc}(w[c], P@d_1)$ attempts to receive along $w{:}A'[c]$.
\item   $\mathcal{C}_2= \mc{C}_2'\mb{proc}(w^c \leftarrow x^c @d_1)\,\mc{C}_2''$.
    
\end{itemize}
\end{definition}

\section{Key Ideas - Part II}\label{sec:key-ideas2}

This section develops the main ideas underlying the session logical relation used to prove
noninterference of $\lang$.  The next section puts these ideas into action.

\emph{Noninterference} essentially amounts to a \emph{program equivalence up to} the secrecy
level $\xi$ of the observer, requiring that two runs of a program may only differ in outputs
whose secrecy level is above or incomparable to $\xi$.  The fundamental property of the logical
relation for noninterference then is stated for two runs of any partial program, showing that
the runs are related, if given related inputs.

In a session-typed setting, partial programs amount to \emph{open trees} and outputs to
\emph{messages sent} from that open tree.  Inputs, on the other hand, consist of the
\emph{messages received} from any \emph{closing configurations}.

Given these basic correspondences, we can develop our session logical relation for
noninterference schematically based on \figref{fig:logical_relation}.
\figref{fig:logical_relation} shows two runs $\color{brown}{\mc{D}}_1$ and $\color{brown}{\mc{D}}_2$ of a
partial program with closing substitutions $\color{red}{\mc{C}}_1$, $\color{blue}{\mc{F}}_1$ and
$\color{red}{\mc{C}}_2$, $\color{blue}{\mc{F}}_2$, respectively, and post-states
$\lre{\mathcal{\mc{C}}_1'}{\mathcal{D}_1'} {\mathcal{F}_1'}$ and
$\lre{\mathcal{\mc{C}}_2'}{\mathcal{D}_2'} {\mathcal{F}_2'}$, resulting from a message exchange.
The session logical relation now mandates that $\color{brown}{\mc{D}}_1$ and $\color{brown}{\mc{D}}_2$
will send the same messages to $\color{red}{\mc{C}}_1$, $\color{blue}{\mc{F}}_1$ and $\color{red}{\mc{C}}_2$,
$\color{blue}{\mc{F}}_2$, respectively, provided that $\color{red}{\mc{C}}_1$, $\color{blue}{\mc{F}}_1$ and
$\color{red}{\mc{C}}_2$, $\color{blue}{\mc{F}}_2$ will send the same messages to $\color{brown}{\mc{D}}_1$ and
$\color{brown}{\mc{D}}_2$, respectively.  This property is expressed as
\[
({\lr{\mc{C}_1} {\mc{D}_1} {\mc{F}_1};{\lr{\mc{C}_2} {\mc{D}_2} {\mc{F}_2}}}) \in \mc{V}^\xi_{ \Psi}\llbracket \Delta \Vdash K\rrbracket
\]
\noindent where $\Delta$ amounts to the typing of channels connecting $\color{red}{\mc{C}}_1$ and
$\color{red}{\mc{C}}_2$ with $\color{brown}{\mc{D}}_1$ and $\color{brown}{\mc{D}}_2$, respectively, and $K$ to
the typing of the channel connecting $\color{brown}{\mc{D}}_1$ and $\color{brown}{\mc{D}}_2$ with
$\color{blue}{\mc{F}}_1$ and $\color{blue}{\mc{F}}_2$, respectively.  We refer to $\Delta$ and $K$ as the
\emph{interface} of $\color{brown}{\mc{D}}_1$ and $\color{brown}{\mc{D}}_2$.

\begin{figure}
\begin{center}
\includegraphics[scale=0.38]{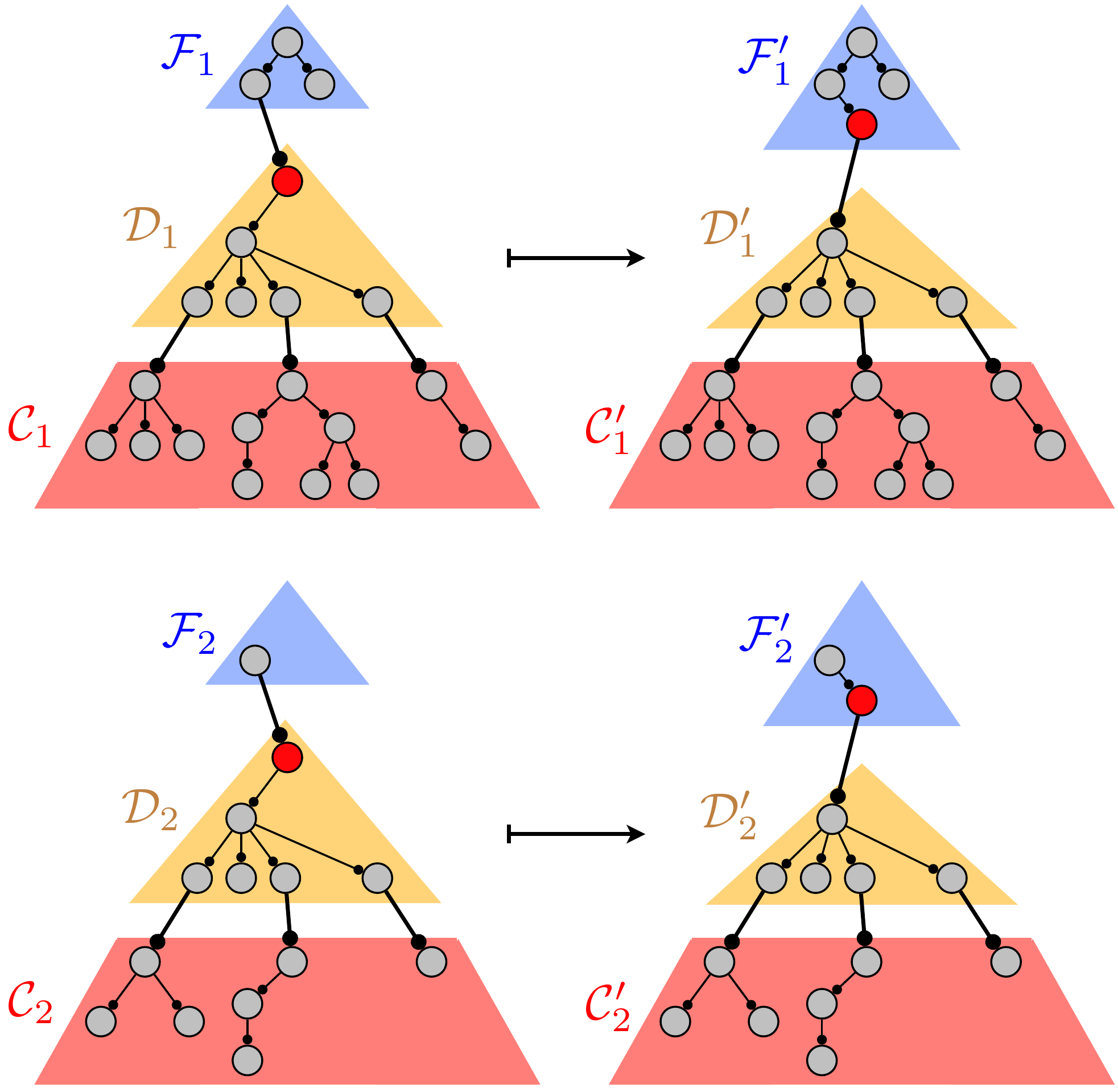}
\caption{Session logical relation for noninterference: key ideas.}
\label{fig:logical_relation}
\end{center}
\end{figure}

Clearly, the above property can only hold for exchanged messages of at most the observer's
secrecy level.  We call such messages and their carrying channels \emph{observable}.  We thus
phrase the logical relation only over observable channels, requiring us to determine
$\color{brown}{\mc{D}}_1$ and $\color{brown}{\mc{D}}_2$ for two runs $\mc{D}_3$ and $\mc{D}_4$ of a
partial program, with $\Psi; \Delta_3 \Vdash \mc{D}_3:: K_3$ and
$\Psi; \Delta_4 \Vdash \mc{D}_4:: K_4$, such that the observable channels defined by the
projection $\_ \Downarrow \xi$ are the same, \ie
$\Delta_3 \Downarrow \xi = \Delta_4 \Downarrow \xi = \Delta$ and
$K_3\Downarrow \xi = K_4 \Downarrow \xi = K$.  The left-over, non-observable channels in
$\Delta_3$, $K_3$ and $\Delta_4$, $K_4$ are closed off and internalized into $\mc{D}_3$ and
$\mc{D}_4$, yielding $\color{brown}{\mc{D}}_1$ and $\color{brown}{\mc{D}}_2$, respectively.

The message exchange depicted in \figref{fig:logical_relation} is a send, denoted by the red
node in $\color{brown}{\mc{D}}_1$ and $\color{brown}{\mc{D}}_2$.  The node is a message $\mb{msg}(P)$,
and the figure captures the positive case depicted in \figref{fig:dynamics_idea}.  In the post-states
$\lre{\mathcal{C}_1'}{\mathcal{D}_1'} {\mathcal{F}_1'}$ and
$\lre{\mathcal{C}_2'}{\mathcal{D}_2'} {\mathcal{F}_2'}$, this message is simply pushed into the
substitutions $\color{blue}{\mc{F}}_1'$ and $\color{blue}{\mc{F}}_2'$.  The value interpretation of
$({\lr{\mc{C}_1} {\mc{D}_1} {\mc{F}_1};{\lr{\mc{C}_2} {\mc{D}_2} {\mc{F}_2}}})$ is now phrased
in terms of the transition, requiring that
\[
\begin{array}{r}
({\lr{\mc{C}_1} {\mc{D}_1} {\mc{F}_1};{\lr{\mc{C}_2} {\mc{D}_2} {\mc{F}_2}}}) \in \mc{V}^\xi_{ \Psi}\llbracket \Delta \Vdash K\rrbracket\text{, if} \\
({\lre{\mc{C}_1'} {\mc{D}_1'} {\mc{F}_1'},{\lre{\mc{C}_2'} {\mc{D}_2'} {\mc{F}_2'}}}) \in \mc{E}^\xi_{ \Psi}\llbracket \Delta' \Vdash K'\rrbracket
\end{array}
\]
The post-states $\lre{\mathcal{C}_1'}{\mathcal{D}_1'} {\mathcal{F}_1'}$ and
$\lre{\mathcal{C}_2'}{\mathcal{D}_2'} {\mathcal{F}_2'}$ now take any number of internal
transitions until $\color{red}{\mc{C}}_1''$, $\color{brown}{\mc{D}}_1''$, $\color{blue}{\mc{F}}_1''$,
$\color{red}{\mc{C}}_2''$, $\color{brown}{\mc{D}}_2''$, and $\color{blue}{\mc{F}}_2''$ are each individually poised,
demanding a message exchange along an observable channel.  We thus require this poised
configuration to be in the value interpretation
\[
({\lr{\mc{C}_1''} {\mc{D}_1''} {\mc{F}_1''};{\lr{\mc{C}_2''} {\mc{D}_2''} {\mc{F}_2''}}}) \in \mc{V}^\xi_{ \Psi}\llbracket \Delta' \Vdash K' \rrbracket
\]

The choice to simply push a message $\mb{msg}(P)$ across the interface to the recipient, rather
then consuming it with a corresponding receiving action, allows for more runs to be soundly
equated.  In particular, two runs are allowed to differ in the order in which the messages
$\mb{msg}(P)$ are consumed by the recipients, whenever typing ensures that the
recipients can no longer send back any messages to the senders.

Like any logical relation, our session logical relation accounts for the \emph{polarity} of the
connectives in $\D$ and $K$.  Moreover, it considers whether the message is being sent
along a channel in $\D$ or in $K$.  In case the message is being sent along a channel in $\D$, we refer
to it as communicating on the \emph{left}, otherwise, as communicating on the \emph{right}.
These two dimensions span the space of value interpretations of two program runs, requiring
\emph{positive} connectives to \emph{assert} the sending of the same message in both runs when
communicating on the \emph{right} and to \emph{assume} their existence when communicating on
the \emph{left}.  Conversely, \emph{negative} connectives can \emph{assume} that the same
messages are being sent in both runs when communicating on the \emph{right} and must
\emph{assert} sending of the same message in both runs when communicating on the \emph{left}.

\section{Noninterference Logical Relation}\label{sec:logicalrel}

\begin{figure*}
    \centering
   {\small\[\begin{array}{llcl}
   (1)&  ({\lr{\cdot} {\mc{D}_1} {\mc{F}_1};{\lr{\cdot} {\mc{D}_1} {\mc{F}_1}}}) \in \mc{V}^\xi_{ \Psi}\llbracket \cdot \Vdash y_\alpha{:}1[c]\rrbracket & \;& (\lr{\cdot} {\mc{D}_1} {\mc{F}_1};\lr{\cdot} {\mc{D}_2} {\mc{F}_2}) \in \m{Tree}_\Psi(\cdot \Vdash y_\alpha:1[c])\, \m{and}\\ &&&
    \mc{D}_1=\mathbf{msg}(\mathbf{close}\,y_\alpha^c ) \, \m{and}\, \mc{D}_2=\mathbf{msg}( \mathbf{close}\,y_\alpha^c)\\[4pt]
   (2) & (\lr{\mc{C}_1} {\mc{D}_1} {\mc{F}_1};\lr{\mc{C}_2} {\mc{D}_2} {\mc{F}_2})\in  & \; &(\lr{\mc{C}_1} {\mc{D}_1} {\mc{F}_1};\lr{\mc{C}_2} {\mc{D}_2} {\mc{F}_2}) \in \m{Tree}_\Psi(\Delta \Vdash y_\alpha:\oplus\{\ell{:}A_\ell\}_{\ell \in I}[c])\, \m{and} \\
  & \qquad \qquad \mathcal{V}^\xi_{ \Psi}\llbracket (\Delta \Vdash y_\alpha{:}\oplus\{\ell{:}A_\ell\}_{\ell \in I}[c])\rrbracket &&  \mathcal{D}_1=\mathcal{D}'_1\mathbf{msg}( y_\alpha^c.k;y_\alpha^c \leftarrow u^c_\delta ) \; \m{and}\; \mathcal{D}_2=\mathcal{D}'_2 \mathbf{msg}(y_\alpha^c.k; y_\alpha^c \leftarrow u^c_\delta )\, \m{and} \\
   &&&  (\lre{\mc{C}_1} {\mc{D}'_1} {\mathbf{msg}(y_\alpha^c.k;y_\alpha^c \leftarrow u^c_\delta  )\mc{F}_1}, \lre{\mc{C}_2} {\mc{D}'_2} {\mathbf{msg}(y_\alpha^c.k;y_\alpha^c \leftarrow u^c_\delta  )\mc{F}_2}) \in  \mathcal{E}^\xi_{ \Psi}\llbracket \Delta \Vdash u_{\delta}{:}A_k[c]\rrbracket \\[4pt]
   
   (3) & (\lr{\mc{C}_1} {\mc{D}_1} {\mc{F}_1};\lr{\mc{C}_2} {\mc{D}_2} {\mc{F}_2}) \in  & \; &  (\lr{\mc{C}_1} {\mc{D}_1} {\mc{F}_1};\lr{\mc{C}_2} {\mc{D}_2} {\mc{F}_2}) \in \m{Tree}_\Psi(\Delta \Vdash y_\alpha{:}\&\{\ell{:}A_\ell\}_{\ell \in I}[c]),\,\m{and}\\
&\qquad \qquad \mathcal{V}^\xi_{ \Psi}\llbracket \Delta \Vdash y_\alpha{:}\&\{\ell{:}A_\ell\}_{\ell \in I}[c]\rrbracket&&\m{if} \,( \mathcal{F}_1=\mathbf{msg}( y_\alpha^c.k; u_\delta^c \leftarrow y_\alpha^c ) \mathcal{F}'_1 \; \m{and}\; \mathcal{F}_2=\mathbf{msg}( y_\alpha^c.k;u_\delta^c \leftarrow y_\alpha^c)\mathcal{F}'_2) \, \m{then} \\
   &&& \;(\lre{\mc{C}_1} {\mc{D}_1 \mathbf{msg}( y_\alpha^c.k;u_\delta^c \leftarrow y_\alpha^c)} {\mc{F}'_1},\lre{\mc{C}_2} {\mc{D}_2\mathbf{msg}( y_\alpha^c.k;u_\delta^c \leftarrow y_\alpha^c)} {\mc{F}'_2})\in  \mathcal{E}^\xi_{\Psi}\llbracket \Delta \Vdash u_{\delta}{:}A_k[c]\rrbracket  \\[4pt]
   
 (4)&  (\lr{\mc{C}'_1\mc{C}''_1} {\mc{D}_1} {\mc{F}_1};\lr{\mc{C}'_2\mc{C}''_2} {\mc{D}_2} {\mc{F}_2})  \in  & \; &  (\lr{\mc{C}'_1\mc{C}''_1} {\mc{D}_1} {\mc{F}_1};\lr{\mc{C}'_2\mc{C}''_2} {\mc{D}_2} {\mc{F}_2}) \in \m{Tree}_\Psi(\Delta \Vdash y_\alpha{:}A\otimes B[c])\,\m{and}\\ &\qquad \qquad  \mathcal{V}^\xi_{ \Psi}\llbracket \Delta',\Delta'' \Vdash y_\alpha{:}A\otimes B[c]\rrbracket &&\mathcal{D}_1=\mathcal{D}'_1\mathcal{T}_1\mathbf{msg}( \mb{send}x_\beta^c\,y_\alpha^c; y_\alpha^c \leftarrow u^c_\delta )\; \m{and}\\
 & &&   \mathcal{D}_2=\mathcal{D}'_2\mc{T}_2 \mathbf{msg}( \mb{send},x_\beta^c\,y_\alpha^c; y_\alpha^c \leftarrow u^c_\delta )\, \m{and}\\
   &&& (\lre{\mc{C}''_1} {\mc{T}_1} {\mc{C}'_1 \mc{D}'_1 \mathbf{msg}( \mb{send}x_\beta^c\,y_\alpha^c; y_\alpha^c \leftarrow u^c_\delta )\mc{F}_1},\\ &&& \qquad \qquad  \lre{\mc{C}''_2} {\mc{T}_2} {\mc{C}'_2\mc{D}'_2 \mathbf{msg}( \mb{send}x_\beta^c\,y_\alpha^c; y_\alpha^c \leftarrow u^c_\delta )\mc{F}_2})  \in  \mathcal{E}^\xi_{ \Psi}\llbracket \Delta'' \Vdash x_{\beta}{:}A[c]\rrbracket\,\m{and}\\
   &&&  (\lre{\mc{C}'_1} {\mc{D}'_1} {\mc{C}''_1\mc{T}_1 \mathbf{msg}( \mb{send}x_\beta^c\, y_\alpha^c; y_\alpha^c \leftarrow u^c_\delta )\mc{F}_1},\\ &&& \qquad \qquad  \lre{\mc{C}'_2} {\mc{D}'_2} {\mc{C}''_2\mc{T}_2 \mathbf{msg}( \mb{send}x_\beta^c\,y_\alpha^c; y_\alpha^c \leftarrow u^c_\delta )\mc{F}_2})  \in  \mathcal{E}^\xi_{ \Psi}\llbracket \Delta' \Vdash u_{\delta}{:}B[c]\rrbracket\\[4pt]
  
  (5) & (\lr{\mc{C}_1} {\mc{D}_1} {\mc{F}_1};\lr{\mc{C}_2} {\mc{D}_2} {\mc{F}_2})  \in  & \; &  (\lr{\mc{C}_1} {\mc{D}_1} {\mc{F}_1};\lr{\mc{C}_2} {\mc{D}_2} {\mc{F}_2}) \in \m{Tree}_\Psi(\Delta \Vdash y_\alpha{:}A\multimap B[c])\, \m{and}\\ &\qquad \qquad  \mathcal{V}^\xi_{ \Psi}\llbracket \Delta \Vdash y_\alpha{:}A\multimap B[c]\rrbracket&& \m{if}\, (\mathcal{F}_1=\mathcal{T}_1\mathbf{msg}( \mb{send}x_\beta^c\,y_\alpha^c;u_\delta^c \leftarrow y_\alpha^c)\mathcal{F}'_1\; \m{and} \\

  & && \qquad \qquad  \mathcal{F}_2=\mathcal{T}_2\mathbf{msg}( \mb{send}x_\beta^c\,y_\alpha^c;u_\delta^c \leftarrow y_\alpha^c)\mathcal{F}'_2)\,\m{then} \\
   &&&  \;\;(\lre{\mc{C}_1\mc{T}_1} {\mc{D}_1\mathbf{msg}( \mb{send}x_\beta^c\,y_\alpha^c;u_\delta^c \leftarrow y_\alpha^c)} {\mc{F}'_1},\\&&&\qquad \qquad \lre{\mc{C}_2\mc{T}_2} {\mc{D}_2\mathbf{msg}(\mb{send}x_\beta^c\,y_\alpha^c;u_\delta^c \leftarrow y_\alpha^c)} {\mc{F}'_2})  \in  \mathcal{E}^\xi_{ \Psi}\llbracket \Delta, x_\beta{:}A[c]\Vdash u_{\delta}{:}B[c]\rrbracket\\[4pt]
   
 (6) &  (\lr{\mc{C}_1} {\mc{F}_1}{\mc{D}_1}; \lr{\mc{C}_2} {\mc{F}_2}{\mc{D}_2})\in & \;& 
    (\lr{\mc{C}_1} {\mc{F}_1}{\mc{D}_1}; \lr{\mc{C}_2} {\mc{F}_2}{\mc{D}_2}) \in \m{Tree}_\Psi(\Delta, y_\alpha{:}1[c] \Vdash K)\, \m{and}\\
    & \qquad \qquad  \mc{V}^\xi_{ \Psi}\llbracket \Delta, y_\alpha{:}1[c] \Vdash K\rrbracket&&\m{if}\,(\mc{C}_1=\mathcal{C}'_1\mathbf{msg}( \mathbf{close}\,y_\alpha^c) \, \m{and}\, \mc{C}_2=\mathcal{C}'_2\mathbf{msg}( \mathbf{close}\,y_\alpha^c))\, \m{then} \\
   &&& \;\;(\lre{\mc{C}'_1} {\mathbf{msg}( \mathbf{close}\,y_\alpha^c)\mc{D}_1} {\mc{F}_1},\lre{\mc{C}'_2} {\mathbf{msg}( \mathbf{close}\,y_\alpha^c )\mc{D}_2} {\mc{F}_2}) \mathcal{E}^\xi_{\Psi}\llbracket \Delta \Vdash K\rrbracket  \\[4pt]
   (7) & (\lr{\mc{C}_1} {\mc{D}_1} {\mc{F}_1};\lr{\mc{C}_2} {\mc{D}_2} {\mc{F}_2}) \in  & \; & (\lr{\mc{C}_1} {\mc{D}_1} {\mc{F}_1};\lr{\mc{C}_2} {\mc{D}_2} {\mc{F}_2}) \in \m{Tree}_\Psi(\Delta, y_\alpha{:}\oplus\{\ell{:}A_\ell\}_{\ell \in I}[c] \Vdash K)\, \m{and}\\
&\qquad \qquad   \mathcal{V}^\xi_{ \Psi}\llbracket \Delta, y_\alpha:\oplus\{\ell{:}A_\ell\}_{\ell \in I}[c] \Vdash K\rrbracket  &&\m{if}\, (\mathcal{C}_1=\mathcal{C}'_1\mathbf{msg}( y_\alpha^c.k; y_\alpha^c \leftarrow u^c_\delta) \; \m{and}\; \mathcal{C}_2=\mathcal{C}'_2 \mathbf{msg}(y_\alpha^c.k;y_\alpha^c \leftarrow u^c_\delta ))\, \m{then}\\
   &&& \;\;(\lre{\mc{C}'_1} {\mathbf{msg}(y_\alpha^c.k; y_\alpha^c \leftarrow u^c_\delta)\mc{D}_1} {\mc{F}_1},\lre{\mc{C}'_2} {\mathbf{msg}(y_\alpha^c.k; y_\alpha^c \leftarrow u^c_\delta)\mc{D}_2} {\mc{F}_2})\\ &&& \qquad \qquad \in  \mathcal{E}^\xi_{ \Psi}\llbracket \Delta, u_{\delta}{:}A_k[c] \Vdash K \rrbracket  \\[4pt]

(8) &  (\lr{\mc{C}_1} {\mc{D}_1} {\mc{F}_1};\lr{\mc{C}_2} {\mc{D}_2} {\mc{F}_2}) \in   & \; &(\lr{\mc{C}_1} {\mc{D}_1} {\mc{F}_1};\lr{\mc{C}_2} {\mc{D}_2} {\mc{F}_2}) \in \m{Tree}_\Psi(\Delta, y_\alpha{:}\&\{\ell{:}A_\ell\}_{\ell \in I}[c] \Vdash K)\, \m{and}\\
&\qquad \qquad  \mathcal{V}^\xi_{ \Psi}\llbracket \Delta, y_\alpha{:}\&\{\ell{:}A_\ell\}_{\ell \in I}[c] \Vdash K \rrbracket&& \mathcal{D}_1=\mathbf{msg}( y_\alpha^c.k;u_\delta^c \leftarrow y_\alpha^c) \mathcal{D}'_1 \; \m{and}\ \mathcal{D}_2=\mathbf{msg}( y_\alpha^c.k;u_\delta^c \leftarrow y_\alpha^c)\mathcal{D}'_2\,\m{and} \\
   &&&  (\lre{\mc{C}_1 \mathbf{msg}( y_\alpha^c.k;u_\delta^c \leftarrow y_\alpha^c)} {\mc{D}'_1} {\mc{F}_1},\\
   &&& \qquad \qquad\lre{\mc{C}_2\mathbf{msg}( y_\alpha^c.k;u_\delta^c \leftarrow y_\alpha^c)} {\mc{D}'_2} {\mc{F}_2}) \in  \mathcal{E}^\xi_{ \Psi}\llbracket \Delta, u_{\delta}{:}A_k[c] \Vdash K \rrbracket  \\[4pt]
   
 (9)&  (\lr{\mc{C}_1} {\mc{D}_1} {\mc{F}_1};\lr{\mc{C}_2} {\mc{D}_2} {\mc{F}_2})  \in   & \; & (\lr{\mc{C}_1} {\mc{D}_1} {\mc{F}_1};\lr{\mc{C}_2} {\mc{D}_2} {\mc{F}_2}) \in \m{Tree}_\Psi(\Delta, y_\alpha{:}A\otimes B[c] \Vdash K)\,\m{and} \\
   &\qquad  \qquad\,  \mathcal{V}^\xi_{ \Psi}\llbracket \Delta, y_\alpha{:}A \otimes B[c] \Vdash K \rrbracket&&  \m{if}\,(\mathcal{C}_1=\mathcal{C}'_1\mc{T}_1\mathbf{msg}( \mb{send}x_\beta^c\,y_\alpha^c ; y_\alpha^c \leftarrow u^c_\delta)\, \m{and}\,\\    &&& \qquad \mathcal{C}_2=\mathcal{C}'_2 \mc{T}_2\mathbf{msg}( \mb{send}x_\beta^c\,y_\alpha^c ; y_\alpha^c \leftarrow u^c_\delta ) \m{then}  \\ &&& \qquad  \; \qquad (\lre{\mc{C}'_1\mc{T}_1} {\mathbf{msg}( \mb{send}x_\beta^c\,y_\alpha^c;y_\alpha^c \leftarrow u^c_\delta )\mc{D}_1} {\mc{F}_1},\\&&&\qquad\;\; \qquad \lre{\mc{C}'_2\mc{T}_2} {\mathbf{msg}( \mb{send}x_\beta^c\,y_\alpha^c ; y_\alpha^c \leftarrow u^c_\delta )\mc{D}_2} {\mc{F}_2})  \in  \mathcal{E}^\xi_{ \Psi}\llbracket \Delta,x_\beta{:}A[c], u_{\delta}{:}B[c] \Vdash K \rrbracket\\[4pt]
  
  (10) & (\lr{\mc{C}'_1\mc{C}''_1} {\mc{D}_1} {\mc{F}_1};\lr{\mc{C}'_2\mc{C}''_2} {\mc{D}_2} {\mc{F}_2})  \in  & \; &  (\lr{\mc{C}'_1\mc{C}''_1} {\mc{D}_1} {\mc{F}_1};\lr{\mc{C}'_2\mc{C}''_2} {\mc{D}_2} {\mc{F}_2}) \in \m{Tree}_\Psi(\Delta', \Delta'', y_\alpha{:}A\multimap B[c] \Vdash K)\,\m{and}\\ &\qquad \qquad  \mathcal{V}^\xi_{\Psi}\llbracket \Delta', \Delta'', y_\alpha{:}A\multimap B[c] \Vdash K \rrbracket&& \mathcal{D}_1=\mathcal{T}_1\mathbf{msg}( \mb{send}x_\beta^c\,y_\alpha^c; ;u_\delta^c \leftarrow y_\alpha^c)\;\mathcal{D}''_1\, \m{and}\\
  & &&\mathcal{D}_2=\mathcal{T}_2\mathbf{msg}( \mb{send}x_\beta^c\,y_\alpha^c; ;u_\delta^c \leftarrow y_\alpha^c)\,\mc{D}''_2\, \m{and} \\
    &&&  (\lre{\mc{C}''_1}{\mathcal{T}_1}{\mc{C}'_1\mathbf{msg}( \mb{send}x_\beta^c\,y_\alpha^c; u_\delta^c \leftarrow y_\alpha^c)\mc{D}''_1\mc{F}_1},\\ &&& \qquad \qquad  \lre{\mc{C}''_2}{\mathcal{T}_2}{\mc{C}'_2\mathbf{msg}( \mb{send}x_\beta^c\,y_\alpha^c; u_\delta^c \leftarrow y_\alpha^c) \mc{D}''_2 \mc{F}_2}) \in  \mathcal{E}^\xi_{ \Psi}\llbracket \Delta' \Vdash x_\beta{:}A[c] \rrbracket\,\m{and}\\
   &&&  (\lre{\mc{C}'_1\mc{C}''_1\mathcal{T}_1\mathbf{msg}( \mb{send}x_\beta^c\,y_\alpha^c; u_\delta^c \leftarrow y_\alpha^c)} {\mc{D}''_1} {\mc{F}_1},\\ &&& \qquad \qquad  \lre{\mc{C}'_2\mc{C}''_2\mathcal{T}_2\mathbf{msg}( \mb{send}x_\beta^c\,y_\alpha^c; u_\delta^c \leftarrow y_\alpha^c)} {\mc{D}''_2} {\mc{F}_2}) \in  \mathcal{E}^\xi_{ \Psi}\llbracket \Delta'', u_{\delta}{:}B[c] \Vdash K \rrbracket\\[4pt]

   (11)&  ({\lr{\mc{C}_1} {\mc{D}_1} {\mc{F}_1};{\lr{\mc{C}_2} {\mc{D}_2} {\mc{F}_2}}}) \in  & \;&  (\lr{\mc{C}_1} {\mc{D}_1} {\mc{F}_1};\lr{\mc{C}_2} {\mc{D}_2} {\mc{F}_2}) \in \m{Tree}_\Psi(\Delta', y_\alpha{:}A[c] \Vdash K)\,\m{and}\\
   &\qquad \qquad \mc{V}^\xi_{ \Psi}\llbracket  \Delta \Vdash y_\alpha{:}A[c] \rrbracket&& \mc{D}_1=\mc{D}'_1\mathbf{proc}(y_\alpha^c, y_\alpha^c\leftarrow x_\beta^c @d_1) \, \m{and}\,  \mc{D}_2=\mc{D}'_2\mathbf{proc}(y_\alpha^c, y_\alpha^c\leftarrow x_\beta^c @d_2)\,\m{and}\\&&&([x_{\beta}^c/y_\alpha^c]{\lre{\mc{C}_1} {\mc{D}'_1} {\mc{F}_1},[x_{\beta}^c/y_\alpha^c]{\lre{\mc{C}_2} {\mc{D}'_2} {\mc{F}_2}}}) \in \mathcal{E}^\xi_{ \Psi}\llbracket \Delta\Vdash x_{\beta}{:}A[c] \rrbracket\\[4pt]
    (12)&  ({\lr{\mc{C}_1} {\mc{D}_1} {\mc{F}_1};{\lr{\mc{C}_2} {\mc{D}_2} {\mc{F}_2}}}) \in  & \;& (\lr{\mc{C}_1} {\mc{D}_1} {\mc{F}_1};\lr{\mc{C}_2} {\mc{D}_2} {\mc{F}_2}) \in \m{Tree}_\Psi(\Delta', y_\alpha{:}A[c] \Vdash K)\,\m{and}\\
    &\qquad \qquad \mc{V}^\xi_{ \Psi}\llbracket \Delta, y_\alpha{:}A[c] \Vdash K \rrbracket&&    \m{if}\, (\mc{C}_1=\mc{C}'_1\mathbf{proc}(y_\alpha^c, y_\alpha^c\leftarrow x_\beta^c @d_1) \, \m{and}\,  \mc{C}_2=\mc{C}'_2\mathbf{proc}(y_\alpha^c, y_\alpha^c\leftarrow x_\beta^c @d_2))\, \m{then}\\ &&&\;\;([x_{\beta}^c/y_\alpha^c]{\lre{\mc{C}'_1} {\mc{D}_1} {\mc{F}_1},[x_{\beta}^c/y_\alpha^c]{\lre{\mc{C}'_2} {\mc{D}_2} {\mc{F}_2}}}) \in \mathcal{E}^\xi_{ \Psi}\llbracket \Delta, x_{\beta}{:}A[c] \Vdash K \rrbracket\\[4pt]
     (13) &  (\mc{B}_1, \mc{B}_2 ) \in \mc{V}^\xi_{ \Psi}\llbracket \cdot \Vdash \cdot \rrbracket &  \; &
         (\mc{B}_1, \mc{B}_2) \in \m{Tree}_{ \Psi }\llbracket \cdot \Vdash \cdot \rrbracket \\[4pt]  
  (14) & ({\mc{B}_1}, {\mc{B}_2}) \in \mc{E}^\xi_{ \Psi}\llbracket \Delta \Vdash K \rrbracket & \; & \mc{B}_1=\mc{C}_1\mc{D}_1\mc{F}_1 \,\m{and}\,\mc{B}_2=\mc{C}_2\mc{D}_2\mc{F}_2\,\m{and}\,(\lr{\mc{C}_1}{\mc{D}_1}{\mc{F}_1}; \lr{\mc{C}_2}{\mc{D}_2}{\mc{F}_2})\in \m{Tree}_{\Psi}(\Delta \Vdash K),\\
  &&& \m{and}\,\lr{\mc{C}_1}{\mc{D}_1}{\mc{F}_1}\mapsto^{\m{poised}}_{\Delta \Vdash K} \lr{\mc{C}'_1}{\mc{D}'_1}{\mc{F}'_1} \,\m{and}\,  \lr{\mc{C}_2}{\mc{D}_2}{\mc{F}_2}\mapsto^{\m{poised}}_{\Delta \Vdash K} \lr{\mc{C}'_2}{\mc{D}'_2}{\mc{F}'_2}\\
          &&& \m{and}\,(\lr{\mc{C}'_1}{\mc{D}'_1}{\mc{F}'_1}; \lr{\mc{C}'_2}{\mc{D}'_2}{\mc{F}'_2})\in \mc{V}^\xi_{\Psi}\llbracket  \Delta \Vdash K \rrbracket.\\[4pt]

(15) & (\lr{\mc{C}_1}{\mc{D}_1}{\mc{F}_1}; \lr{\mc{C}_2}{\mc{D}_2}{\mc{F}_2}) \in \m{Tree}_\Psi(\Delta \Vdash K) & \; &  \Psi; \cdot \Vdash \mc{C}_1 :: \Delta,\,\m{and} \, \Psi; \cdot  \Vdash \mc{C}_2 ::\Delta,\, \Psi; \Delta  \Vdash \mc{D}_1 :: K \,\m{and}\, \Psi; \Delta  \Vdash \mc{D}_2 ::K\\
&&&\m{and}\, \Psi; K \Vdash \mc{F}_1 :: \cdot \,\m{and}\, \Psi; K  \Vdash \mc{F}_2 ::\cdot

    \end{array}\]}
    \caption{Session Logical Relation for Noninterference: property (left), condition (right).}
    \label{fig:def_lr}
\end{figure*}

In this section we formalize the session logical relation for noninterference as explained in \secref{sec:key-ideas2}. We are interested in a property that asserts that two partial programs send the same messages along their observable channels if being closed with any well-typed configurations. The closing configurations are assumed to send the same messages along the observable channels. For this property to hold, the partial programs must agree on their set of observable channels and the closing configuration have to be well-typed a priori. 

For a partial program $\Psi;\Delta_1 \Vdash \mc{D}_1:: x_\alpha {:}A_1[{c_1}]$ we need two closing configurations. One to provide $\Delta_1$ without using any resources, \ie $\Psi; \cdot \Vdash \mc{C}_1:: \Delta_1$. The other to use $x_\alpha {:}A_1[{c_1}]$ as a resource and offer a terminating type, \ie $\Psi; x_\alpha {:}A_1[{c_1}]\Vdash \mc{F}_1:: y_\alpha{:}1[c']$. The name and secrecy of the channel  provided by $\mc{F}_1$ is not significant in our setting; $y_\alpha$ can only send a closing message when all observable channels are already closed. Thus we disregard it and alternatively write
$\Psi; x_\alpha {:}A_1[{c_1}]\Vdash \mc{F}_1:: \cdot$.
We keep in mind that a providing type $\cdot$ behaves as a terminating channel. In this paper, we often use $K:=x_\alpha[c]{:}A\mid \cdot$ for the providing channel to account for this notation. 

Our property of interest is formalized in \defref{def:noninterference}.

\begin{definition}[Equivalence up to Observable Messages]\label{def:noninterference}
$ {\color{red} (\Delta_1 \Vdash \mc{D}_1:: x_\alpha {:}A_1[{c_1}]) \equiv^\Psi_{\xi} (\Delta_2 \Vdash \mc{D}_2:: y_\beta {:}A_2[{c_2}])}$
is defined as 
$\Psi; \Delta_1 \Vdash \mc{D}_1:: x_\alpha {:}A_1[{c_1}]\; \mathit{and}\; \Psi; \Delta_2 \Vdash \mc{D}_2:: y_\beta {:}A_2[{c_2}]$ and
       $\Delta_1 {\Downarrow} \xi= \Delta_2 {\Downarrow} \xi=\Delta \; \mathsf{and}\;  x_\alpha {:}A_1[{c_1}]{\Downarrow} \xi= y_\beta {:}A_2[{c_2}] {\Downarrow} \xi= K\, \mathit{and}$
          for all $\mc{C}_1, \mc{C}_2, \mc{F}_1, \mc{F}_2,$
          with $\ctype{\Psi}{\cdot}{\mc{C}_1}{\Delta_1}$ and  $\ctype{\Psi}{\cdot}{\mc{C}_2}{\Delta_2}$ and
            $\ctype{\Psi}{x_\alpha {:}A_1[{c_1}]}{\mc{F}_1}{\cdot}$ and
            $\ctype{\Psi}{y_\beta {:}A_2[{c_2}]}{\mc{F}_2}{\cdot}$,  
we have \[{(\mc{C}_1\mc{D}_1\mc{F}_1, \mc{C}_2\mc{D}_2\mc{F}_2) \in \mc{E}_{\Psi}^\xi\llbracket \Delta \Vdash K \rrbracket.}\]
\end{definition}

The relation ${(\mc{B}_1, \mc{B}_2) \in \mc{E}_{\Psi}^\xi\llbracket \Delta \Vdash K \rrbracket}$ is defined in \figref{fig:def_lr}~\footnote{For ease of reference in our proofs, we annotate channel names appearing in process terms with their generations (subscript) and maximal secrecy (superscript).}, line 14. It is an apparatus to track the computation of two closed configurations $\mc{B}_1$ and $\mc{B}_2$ looking for messages being sent and received along their mutual set of observable channels $\Delta$ and $K$. The content of messages sent or received along other channels are not significant and disregarded. In particular, if the offering channels of the two partial programs are not observable, we dismiss them from consideration and put $K=\cdot$ as a placeholder in the relation. 

To track the observable messages using $\mc{E}^\xi_{\Psi}$, we need to know that $\mc{B}_i$ can be broken down into  $\lre{\mc{C}_i}{\mc{D}_i}{\mc{F}_i}$ such that $\ctype{\Psi}{\cdot}{\mc{C}_i}{\Delta}$, and $\ctype{\Psi}{\Delta}{\mc{D}_i}{K}$, and $\ctype{\Psi}{K}{\mc{F}_i}{\cdot}$.  We prove that this property holds for any $\mc{B}_i$ that is built by closing a partial program with observable channels $\Delta$ on the left and $K$ on the right.  The interested reader can refer to Lemma~4 and Figure~1 in the appendix for further details.  The key idea is to internalize any trees rooted at non-observable channels in the bottom closing configuration and the closing non-observable tree constituting the top closing configuration.

After decomposing the configurations $\mc{B}_1$ and $\mc{B}_2$ into $\lre{\mc{C}_1}{\mc{D}_1}{\mc{F}_1}$ and $\lre{\mc{C}_2}{\mc{D}_2}{\mc{F}_2}$, respectively, we compute each subconfiguration separately.  We write $\lr{\mc{C}}{\mc{D}}{\mc{F}}\mapsto_{\Delta \Vdash K}\lr{\mc{C}'}{\mc{D}'}{\mc{F}'}$ if (i) ${\mc{C}}\mapsto_{\cdot \Vdash \Delta}{\mc{C}'}$, (ii) ${\mc{D}}\mapsto_{\Delta \Vdash K}{\mc{D}'}$, and (iii) ${\mc{F}}\mapsto_{K \Vdash \cdot}{\mc{F}'}$.
We are interested in the state in which none of the subconfigurations can proceed  without communicating along an observable channel. This state is closely related to the property of being poised introduced in \secref{sec:type-system}. We call $\lr{\mc{C}_i}{\mc{D}_i}{\mc{F}_i}$ poised if its subconfigurations $\mc{C}_i$, and $\mc{D}_i$, and $\mc{F}_i$ are poised. We write $\lr{\mc{C}_i}{\mc{D}_i}{\mc{F}_i} \mapsto^{\m{poised}}_{\Delta \Vdash K}\lr{\mc{C}'_i}{\mc{D}'_i}{\mc{F}'_i}$ stating that $\lr{\mc{C}_i}{\mc{D}_i}{\mc{F}_i} \mapsto^*_{\Delta \Vdash K}\lr{\mc{C}'_i}{\mc{D}'_i}{\mc{F}'_i}$ and $\lr{\mc{C}'_i}{\mc{D}'_i}{\mc{F}'_i}$ is poised.
($\mapsto^*_{\Delta \Vdash K}$ refers to zero or more steps taken with $\mapsto_{\Delta \Vdash K}$.)

To relate two poised configurations we use the value relation $\mc{V}^\xi_{ \Psi}\llbracket \Delta \Vdash K\rrbracket$. This relation \emph{establishes} equality of the content of every message fired from $\mc{D}_1$ and $\mc{D}_2$ before adding them to the closing configurations $\mc{C}_i$ and $\mc{F}_i$ (See \figref{fig:logical_relation}).  In the case of sending higher order channels (lines 4 and 10 in \figref{fig:def_lr}), we further \emph{assure} that the trees sent along the messages are also related and will behave similarly when received by the closing configuration. 


For the messages being fired from the poised closing configurations $\mc{C}_i$ and $\mc{F}_i$, we \emph{assume} that they have the same content ready to be moved to $\mc{D}_1$ and $\mc{D}_2$. In particular for higher order channels (lines 5 and 9 in \figref{fig:def_lr}) we \emph{assume} the channels sent by the closing configurations will send the same observable messages too. We add the received channels to the set of observable channels to make sure that the partial programs do not send them different messages. 

A forwarding process does not send or receive an explicit message. However, when process $\mb{proc}(y[c], y_\alpha \leftarrow x_\beta)$ forwards  channel ($x_\beta$) to an observable channel ($y_\alpha$) the substitution of $x_\beta$ for $y_\alpha$ amounts to a broadcast of the name of $x_\beta$ along the observable channel $y_\alpha$. The channel $x_\beta$ has a secrecy level lower than or equal to the observer and now can be observed too. In our relation (line 11 of \figref{fig:def_lr})  we \emph{assert} that such forwarding rules in $\mc{D}_1$ and $\mc{D}_2$ always  broadcast the same names. In the dual case (line 12 of \figref{fig:def_lr}) we \emph{assume} that the closing configurations $\mc{C}_1$ and $\mc{C}_2$ broadcast the same names. In both cases we continue by monitoring the rest of the computation along $x_\beta$. The same holds for a forwarding on the tail of a message.

The well-foundedness of our logical relation is based on a lexicographic order on the structure of observable types and a multiset order $<$ on the size of configurations.  
The order $<$ is a multiset order on finite multiset $\mc{M}$ of the process typing judgments $\ptype{\Psi}{\Delta''}{P}{y}{d}{\gamma}{A}$ used in the typing derivation of $\mc{C}$.  Process typing judgments are ordered based on the size of the process term. As a result, well-foundedness of $<$ follows from the well-foundedness of process terms~\cite{Pierre1986}.




\section{Metatheory}\label{sec:metatheory}

In this section we establish the main properties of the system. We show that $\lang$ is a terminating language with the standard preservation and progress properties. More importantly, we prove that it enjoys the noninterference property.

 \begin{theorem}[Preservation]\label{thm:preservation}
If $\Psi; \Delta \Vdash \mc{C} :: \Delta'$ and $\mc{C}\mapsto_{\Delta \Vdash \Delta'} \mc{C}'$, then  $\Psi; \Delta \Vdash \mc{C}' :: \Delta'$. Moreover $\mc{C}'<\mc{C}$ by the multiset ordering.
\end{theorem}
\begin{proof}
 The proof is by case analysis of $\mc{C}\mapsto_{\Delta \Vdash \Delta'} \mc{C}'$ and inversion on the typing judgment $\Psi; \Delta \Vdash \mc{C} :: \Delta'$. See the appendix for more details.
\end{proof}
 
 \begin{theorem}[Progress]\label{thm:progress}
If $\Psi; \Delta \Vdash \mc{C}::\Delta'$, then 
either $\mc{C} \mapsto_{\Delta\Vdash \Delta'}\mc{C}'$ or $\mc{C}$ is poised.
\end{theorem}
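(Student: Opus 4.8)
The plan is to proceed by induction on the derivation of the configuration typing judgment $\Psi; \Delta \Vdash \mc{C} :: \Delta'$, exploiting the fact that the premises of the three non-axiom configuration rules ($\mathbf{comp}$, $\mathbf{proc}$, $\mathbf{msg}$) always expose a strictly smaller sub-forest $\mc{C}_0$ providing the context $\Delta$. The two base cases $\mathbf{emp}_1$ and $\mathbf{emp}_2$ are immediate, since an empty configuration is poised by the first clause of the definition of poised configuration in \secref{sec:poised_config}. For $\mathbf{comp}$ the configuration splits into two disjoint sibling trees; I would apply the induction hypothesis to each and observe that, because siblings share no channel, either one of them steps (and hence so does the composite, as $\mapsto_{\Delta \Vdash \Delta'}$ is closed under the forest framing $\mc{C}_1\,(\cdot)\,\mc{C}_2$ used throughout \figref{fig:dynamics}) or both are poised, in which case the composite is poised by the splitting clause of the definition.

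The heart of the argument is the $\mathbf{proc}$ and $\mathbf{msg}$ cases, where a single root $\mathbf{proc}(x[d], P@d_1)$ (resp.\ $\mathbf{msg}(P)$) sits above a sub-forest $\mc{C}_0$ providing $\Delta$. First I would invoke the induction hypothesis on $\mc{C}_0$: if it steps, the whole configuration steps. Otherwise $\mc{C}_0$ is poised, and I would case-analyze the top-level action of the root. Because the semantics is asynchronous, \emph{senders never block}: every sending action (the positive right rules for $1$, $\oplus$, and $\otimes$ offering on $x$, and the negative left rules for $\&$ and $\multimap$ sending on a context channel), every $\m{Cut}$, and every forward whose offered channel is \emph{not} in the external output interface $\Delta'$ fire immediately by \figref{fig:dynamics}. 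The only actions that can fail to step are a receive and a forward onto an external offered channel. A forward on $x \in \Delta'$ is poised directly by the forward clause of the definition. A receive blocks only when its partner message is unavailable, and here I would split on whether the channel being read is external (in $\Delta'_0 \subseteq \Delta$, or the offered channel $x \in \Delta'$), in which case the root matches one of the ``attempts to receive'' clauses and the configuration is poised, or internal (provided by $\mc{C}_0$), in which case a step must exist.

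To discharge the internal-receive subcase I need an inversion (canonical-forms) lemma for poised sub-configurations: a poised tree offering a channel of \emph{positive} type must carry a top-level $\mathbf{msg}(\cdot)$ ready to be consumed at its root, whereas a poised tree offering a \emph{negative} type is itself waiting to receive from its parent. Every client-side receive ($\oplus L$, $\otimes L$, $1L$) reads along a channel whose type is positive from the provider's viewpoint, so when that channel lies in $\Delta$ this lemma supplies the matching message, and the receive and message combine via the corresponding rule of \figref{fig:dynamics}. A symmetric reading handles the $\mathbf{msg}$ root: a message is poised when it sits on the external interface (a positive message on $x \in \Delta'$, or a negative message on an external context channel in $\Delta$), and otherwise finds its consumer inside the configuration and steps.

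I expect the main obstacle to be the bookkeeping of the boundary between the internal channels $\Delta$ and the external interface as the tree grows under $\mathbf{proc}$/$\mathbf{msg}$: a channel that is external---and therefore a legitimate source of poisedness---for the sub-forest $\mc{C}_0$ becomes \emph{internal} once the root is placed above it, so a top-level forward or a waiting message sitting at that former boundary in $\mc{C}_0$ must be re-examined and will now fire rather than remain poised. Stating the poised-configuration inversion lemma precisely enough to pin down the top-level shape of a poised tree for each polarity and each kind of interface channel, and then matching those shapes against the clauses of the definition in \secref{sec:poised_config}, is where the real care lies; the remaining rule-by-rule verification is routine once the ``senders never block'' discipline of the asynchronous dynamics is in hand.
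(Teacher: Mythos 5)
Your proposal is correct and takes essentially the same route as the paper's proof: induction on the configuration typing derivation, with empty configurations poised, the $\mathbf{comp}$ case handled by applying the induction hypothesis to both sub-forests, and the $\mathbf{proc}$/$\mathbf{msg}$ cases resolved by case analysis on the root's action (sends, spawns, and internal forwards always fire; receives and forwards on the external interface are poised; an internal receive finds its matching message by analyzing the poised provider subtree). The only difference is presentational: the paper inlines the polarity analysis of that poised subtree---including exactly the boundary subtleties you flag (empty subtree, forward at the root, positive message at the root, or blocking on a still-external channel)---directly into the $\mathbf{proc}$ and $\mathbf{msg}$ cases rather than factoring it into a separate canonical-forms lemma.
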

\begin{proof}
The proof is by induction on the configuration typing of $\mc{C}$. See the appendix for the complete proof.
\end{proof}

Termination of $\lang$ follows from \thmref{thm:progress} and \thmref{thm:preservation} and well-foundedness of the $<$ order.
 
The fundamental property of our logical relation is noninterference stated as below. 
\begin{theorem}[Noninterference]
For all security levels $\xi$ and configurations $\Psi; \Delta \Vdash \mathcal{D}:: x_\alpha {:}T[c]$, we have \[ (\Delta \Vdash \mathcal{D}:: x_\alpha {:}T[c])  \equiv^\Psi_{\xi} (\Delta \Vdash \mathcal{D}:: x_\alpha {:}T[c]).\]  
\end{theorem}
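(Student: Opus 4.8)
The statement is the fundamental (reflexivity) property of the logical relation, so the plan is to unfold \defref{def:noninterference} and reduce the goal to showing $(\mc{C}_1\mc{D}\mc{F}_1, \mc{C}_2\mc{D}\mc{F}_2) \in \mc{E}^\xi_\Psi\llbracket \Delta \Vdash K\rrbracket$ for \emph{arbitrary} well-typed closing configurations $\mc{C}_1,\mc{C}_2,\mc{F}_1,\mc{F}_2$. Since the partial program $\mc{D}$ and its interface $\Delta \Vdash K$ are literally identical on the two sides, the side conditions of \defref{def:noninterference} (well-typedness and agreement of the observable projections $\_\Downarrow\xi$) hold immediately, and the entire content of the theorem lies in this $\mc{E}^\xi$ membership.

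I would prove the membership by well-founded induction along the order that justifies the definition in \figref{fig:def_lr}: lexicographically, the structure of the observable types in $\Delta \Vdash K$ first, then the multiset order $<$ on configuration size. To discharge one unfolding of $\mc{E}^\xi$ (line~14), I use \thmref{thm:progress} together with \thmref{thm:preservation} and the well-foundedness of $<$ to drive each side $\lr{\mc{C}_i}{\mc{D}}{\mc{F}_i}$ to a poised state $\lr{\mc{C}'_i}{\mc{D}'_i}{\mc{F}'_i}$ by a finite run of internal (non-observable) steps. It then remains to place the two poised states into $\mc{V}^\xi_\Psi\llbracket \Delta \Vdash K\rrbracket$ by case analysis on the connective and direction of the pending observable communication, matching the value-relation cases of \figref{fig:def_lr}. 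In the \emph{assume} cases (negative-on-the-right and positive-on-the-left: lines 3, 5, 6, 7, 9, 12), the antecedent of the clause hands me the fact that $\mc{C}'_i$ or $\mc{F}'_i$ presents the same observable message on both sides; I absorb it into $\mc{D}'_i$ and appeal to the induction hypothesis at a strictly smaller observable type. In the \emph{assert} cases (lines 1, 2, 4, 8, 10, 11) I must instead exhibit that $\mc{D}'_1$ and $\mc{D}'_2$ fire the \emph{same} observable message, and then recurse.

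The heart of the argument, and the step I expect to be the main obstacle, is justifying the assert cases: I must maintain throughout the computation the invariant that $\mc{D}'_1$ and $\mc{D}'_2$ agree on all observable behavior even though $\mc{C}_i$ and $\mc{F}_i$ may feed the common $\mc{D}$ different \emph{non-observable} (high-secrecy) messages. This is exactly where flow-sensitivity pays off: by the tree invariant and the secrecy pas de deux, any subprocess that has received a non-observable message has had its running secrecy raised above $\xi$, so the send guards forbid it from emitting an observable message until that information is projected away. Concretely, I would strengthen the induction hypothesis to relate the two evolving middle configurations up to their non-observable subtrees --- so that their observable skeletons coincide --- and show each reduction rule of \figref{fig:dynamics} preserves this relation, using the running-secrecy annotations to argue that branching on a non-observable message can never change which observable message is subsequently produced. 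The higher-order cases ($\otimes$ and $\multimap$) are the most delicate, since there an entire subtree migrates across the interface; I would lean on lines 4 and 10 of \figref{fig:def_lr}, whose extra requirement that the transmitted subtrees be related lets the induction hypothesis apply to them once they are received.
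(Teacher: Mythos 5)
Your architecture is the paper's: it too reduces noninterference to a reflexivity instance of a stronger statement, the Fundamental Theorem (\thmref{thm:main}), whose hypothesis is exactly your strengthened induction hypothesis --- that the two middle configurations agree on their observable skeletons, which the paper calls \emph{relevant} processes and messages and writes $\cproj{\mc{D}_1}{\xi} = \cproj{\mc{D}_2}{\xi}$. Its proof of that theorem is your induction (lexicographic on the observable types, then the multiset order $<$), drives both sides to poised states via progress and preservation, splits the value-relation cases into the same assume/assert classification you give, and isolates your ``each reduction rule preserves the relation'' step as a dedicated lemma (\lemref{lem:indinvariant}).

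The genuine gap is in how the skeleton is defined, and it sits precisely at the step you flag as the main obstacle. If relevancy is carved out by the \emph{running secrecy} annotations, as you propose, the preservation step fails under the asynchronous dynamics: two identical processes $\mb{case}\,y^{c}(\cdots)@d_1$ with $d_1 \sqsubseteq \xi$, both waiting on a non-observable channel $y[c]$, need not consume the high message at the same time. The run in which the message happens to be available steps first, raises its running secrecy above $\xi$, and its continuation now differs from the other run's still-pending $\mb{case}$; at that instant the running-secrecy-based skeletons disagree even though nothing observable has happened, so your invariant is not preserved by the rules of \figref{fig:dynamics}. The paper's fix is a lookahead it calls \emph{quasi running secrecy} (\defref{def:quasi}): a process blocked receiving on a channel of secrecy $c$ is treated as already having secrecy $d_1 \sqcup c$, and messages inherit a quasi secrecy determined by their eventual recipient; relevancy (\defref{def:relevant}) is then defined from quasi running secrecy together with a reachability condition that also discards low processes whose every path to an observable channel is blocked by a high-secrecy channel. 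With that corrected definition your plan is exactly the paper's proof; without it, the invariant-preservation lemma is false.
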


Our noninterference theorem asserts that a well-typed open configuration $\mc{D}$ is equivalent to itself. It states that if we run a program twice but with different closing configurations, the contents of messages sent by the program along the observable channels will be the same in both runs. The assertion is based on the assumption that the closing configurations send the same messages along the observable channels in both runs.

The two runs start out as $\mc{D}$, guaranteeing that their tree structure is identical and their processes are running the same code.  The two runs continue to be identical until a process in each run receives a message from a closing configuration along a non-observable channel.  The received messages may differ in contents because the carrier channel's maximal secrecy is higher than or incomparable to $\xi$.  Based on the contents of the received message, the two runs may choose different continuations, after which they begin to diverge in their tree structure and the code the individual processes are running.   On the other hand, the running secrecy of the receiving processes will be adjusted upon receiving to become higher or incomparable to $\xi$.  This adjustment makes sure that the receiving processes can no longer send any messages along channels of lower or equal secrecy than the observer. In particular, they cannot send a message along an observable channel. We call such processes that can no longer affect any observable messages \emph{irrelevant}.


Throughout the computation, the code and structure of some processes may diverge as they receive non-observable messages. However, the \emph{relevant} processes, \ie the processes that can affect the contents of observable messages, stay identical. Later in this section, we state the fundamental theorem  (\thmref{thm:main}) that proves  two configurations to be equivalent up to observable messages if their relevant processes are identical. The noninterference property is then an immediate corollary of the fundamental theorem.


Before stating the fundamental theorem, we need to define the notion of a relevant process. We discussed earlier that a process with running secrecy higher or incomparable to the observer's secrecy level is irrelevant.  An irrelevant process can no longer spawn any observable messages. Moreover, if it sends a message along a non-observable channel, then the receiver becomes irrelevant too. There is another form of irrelevant process with running secrecy less than or equal to $\xi$ but with paths to observable channels passing through channels with maximal secrecy level higher than or incomparable to $\xi$.  These channels block the flow of information because any process receiving along such a channel becomes irrelevant.


To  establish  a  sound  definition  of  relevant  processes  and messages  for  the  asynchronous  semantics,  we  need  a  lookahead  for  the  running  secrecy.   Consider a process $ \mb{case}\, y^{c}(\cdots)@d_1$ in the partial program and its counterpart $\mb{case}\, y^c(\cdots)@d_1$ in the other run. They both have  running  secrecy $d_1\sqsubseteq \xi$, and are ready to receive a label along a non-observable channel $y[c]$. By the previous discussion, right after receiving a label the two processes become irrelevant. The non-observable messages may not be ready at the same time. For example,  the process in the first run may receive the message right away and become irrelevant, while the other process may need to wait for a while. This results in a discrepancy between relevant processes in the two runs. However, these processes cannot affect any observable channels even before they receive a channel. Based on their code they can only receive in the current step and right after the receive they become irrelevant. To account for delays in the receives, we label these two processes as irrelevant even before they receive, using a lookahead called \emph{quasi running secrecy}. The quasi running secrecy of a receiving process is defined as its running secrecy at the next step, \ie right after the receive.


We determine the running secrecy of a message to be the maximal secrecy of the channel that the message is sent along. However, messages are only temporary holders of a label or tree that they transport. Unless a message is observable, \ie sent to a closing configuration, its contents can only affect an observable channel \emph{after} it is received by a process. The quasi running secrecy of a message accounts for this and reflects the future potential of a message once it is received and is determined by examining the running secrecy of the recipient.  In case of a negative message (see \figref{fig:dynamics_idea}), the receiver is a child of the message. By the tree invariant, the running secrecy of the child is less than or equal to the maximal secrecy of the carrier channel. After receiving the message the running secrecy of the receiver will be equal to the maximal secrecy of the carrier channel. As a result, the quasi running secrecy of a negative message amounts to the maximal secrecy of the channel along which the message is sent.  In case of a positive message (see \figref{fig:dynamics_idea}), the receiver is the parent of the message. The running secrecy $d_1$ of the parent may be higher or incomparable to the maximal secrecy $c$ of the carrier channel. After the message is received, the running secrecy of the parent is adjusted to at least $c\sqcup d_1$. As a result, we determine the quasi running secrecy of a positive message to be the running secrecy of its parent after the message has been received $(c\sqcup d_1)$.

 

The notions of quasi running secrecy and relevancy are formally defined in \defref{def:quasi} and \defref{def:relevant}

\begin{definition}[Quasi Running Secrecy]\label{def:quasi}
In the configuration tree, the quasi running secrecy of a message or process is determined based on its running secrecy, its process term, and the running secrecy of its parent. 
\begin{itemize}
    \item If the node is a process with a process term other than $\mb{recv}$ or $\mb{case}$, then its quasi running secrecy is equal to its running secrecy.
    \item If the process term is of the form $\mb{case}\,y^c_{\alpha}(\ell \Rightarrow P_\ell)_{\ell \in L} @d_1$ or $x^\psi \leftarrow \mb{recv}\,y^c_{\alpha};P_{x} @d_1$, then its quasi running secrecy is $d_1 \sqcup c$.
    \item If the node is a message of a negative type along channel $y^c_\alpha$, its quasi running secrecy is $c$.
    \item  If the node is a message of a positive type along channel $y^c_\alpha$ and it has a parent with quasi running secrecy $d_1$, its quasi running secrecy is $d_1 \sqcup c$.
\end{itemize}
The quasi running secrecy can be determined by traversing the tree top to bottom.

\end{definition}

\begin{definition}[Relevant Channels and Processes]\label{def:relevant}
Consider configuration $\Delta \Vdash \mc{D}:: K$ and observer level $\xi$. A channel is relevant in $\mc{D}$ if 1) it is has a maximal secrecy level lower than or equal to $\xi$, and 2) it is either an observable channel or it shares a process or message with quasi running secrecy less than $\xi$ with a relevant channel. (A channel shares a process with another channel if they are siblings or one is the parent of another.)

The set of all relevant channels can be found by traversing the tree bottom-up. If $K$ is observable, then by the tree invariant, every channel in $\mc{D}$ will be relevant.

A relevant process or message has quasi running secrecy less than or equal to $\xi$ and at least one relevant channel. $\cproj{\mc{C}}{\xi}$ are the relevant processes and messages in $\mc{C}$. We write $\cproj{\mc{C}_1}{\xi}=_{\xi}\cproj{\mc{C}_2}{\xi}$ if they are identical up to renaming of channels with higher or incomparable secrecy than the observer.
\end{definition}

The fundamental theorem is stated as below.

\begin{theorem}[Fundamental Theorem]\label{thm:main}
For all security levels $\xi$, and configurations ${\Psi; \Delta_1 \Vdash \mathcal{D}_1:: u_\alpha {:}A_1[c_1]}$ and ${\Psi; \Delta_2 \Vdash \mathcal{D}_2:: v_\beta {:}A_2[c_2]}$  with $\cproj{\mathcal{D}_1}{\xi} = \cproj{\mathcal{D}_2}{\xi}$, $\Delta_1 \Downarrow \xi = \Delta_2 \Downarrow \xi$, and $u_\alpha {:}A_1[c_1]\Downarrow \xi = v_\beta {:}A_2[c_2] \Downarrow \xi$  we have \[ (\Delta_1 \Vdash \mathcal{D}_1:: u_\alpha {:}A_1[c_1])  \equiv^\Psi_{\xi} (\Delta_2 \Vdash \mathcal{D}_2:: v_\beta {:}A_2[c_2]).\]  
\end{theorem}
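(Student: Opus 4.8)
The plan is to prove \thmref{thm:main} by unfolding \defref{def:noninterference} and then establishing membership in the logical relation by a well-founded induction. Unfolding $\equiv^\Psi_\xi$, the well-typedness of $\mc{D}_1$ and $\mc{D}_2$ and the projection equalities $\Delta_1\Downarrow\xi=\Delta_2\Downarrow\xi=\Delta$ and $u_\alpha{:}A_1[c_1]\Downarrow\xi=v_\beta{:}A_2[c_2]\Downarrow\xi=K$ are precisely the hypotheses of the theorem, so the whole content reduces to showing, for \emph{arbitrary} closing configurations $\mc{C}_1,\mc{C}_2,\mc{F}_1,\mc{F}_2$ of the prescribed types, that $(\mc{C}_1\mc{D}_1\mc{F}_1,\mc{C}_2\mc{D}_2\mc{F}_2)\in\mc{E}^\xi_\Psi\llbracket\Delta\Vdash K\rrbracket$. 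By line~14 of \figref{fig:def_lr}, $\mc{E}$-membership first re-decomposes each closed $\mc{B}_i=\mc{C}_i\mc{D}_i\mc{F}_i$ along the observable interface into a triple $\lr{\mc{C}_i}{\mc{D}_i}{\mc{F}_i}$ via the decomposition lemma of the appendix (Lemma~4), internalizing every non-observable subtree into the middle component, so that the middle component communicates externally only along the observable channels $\Delta$ and $K$. The invariant I would carry through the induction is exactly $\cproj{\mc{D}_1}{\xi}=_{\xi}\cproj{\mc{D}_2}{\xi}$ together with the two projection equalities; the closing configurations remain entirely \emph{unconstrained}, since their agreement is needed only locally and is supplied by the conditional form of the ``assume'' cases of $\mc{V}$.

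The induction is over the lexicographic order described at the end of \secref{sec:logicalrel}: the structure of the observable interface types $\Delta\Vdash K$ first, and the multiset order $<$ on configuration size second. To discharge $\mc{E}$-membership I would first drive each triple to a poised state under $\mapsto^{\m{poised}}_{\Delta\Vdash K}$: \thmref{thm:progress} supplies an internal step whenever a subconfiguration is not poised, \thmref{thm:preservation} preserves typing and strictly decreases the $<$ measure, and well-foundedness of $<$ guarantees that each run reaches a poised $\lr{\mc{C}'_i}{\mc{D}'_i}{\mc{F}'_i}$ in finitely many steps; the $\m{Tree}_\Psi$ side-conditions required throughout the $\mc{V}$ cases are discharged by \thmref{thm:preservation}. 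The crux of this phase is a \emph{relevancy-preservation} lemma stating that every such internal step preserves the invariant $\cproj{\mc{D}_1}{\xi}=_{\xi}\cproj{\mc{D}_2}{\xi}$. I would prove it by distinguishing steps of \emph{relevant} from \emph{irrelevant} nodes (\defref{def:relevant}): a step of an irrelevant node (one whose quasi running secrecy of \defref{def:quasi} is above or incomparable to $\xi$, or which is blocked behind a non-observable channel) neither emits nor consumes an observable message and, by \defref{def:quasi}, spawns only further irrelevant nodes, so it is absent from both projections and leaves the invariant untouched; crucially, because the two runs may fire such steps at different times, it is the quasi-running-secrecy lookahead that lets a receiver be classified irrelevant \emph{before} it actually receives, which reconciles this scheduling discrepancy. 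A step of a relevant node is mirrored identically in both runs because the relevant projections coincide, so the invariant is re-established.

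With poised triples in hand I would case-split on the type, polarity, and side (left in $\Delta$ or right in $K$) of the pending observable exchange, matching the cases of $\mc{V}$ in lines~1--12 of \figref{fig:def_lr}. In the \emph{assert} cases --- positive-right ($\oplus R$, $\otimes R$, $1R$) and negative-left ($\&L$, $\multimap L$), where the sending node lives in the middle component --- that node is relevant, so the relevancy invariant forces $\mc{D}'_1$ and $\mc{D}'_2$ to carry the \emph{identical} message; I push it across the interface and invoke the induction hypothesis on the continuation, whose observable type is strictly smaller, splitting into the two subgoals of lines~4 and~10 when a higher-order subtree is transferred. In the \emph{assume} cases --- positive-left and negative-right, where the message originates in a closing configuration --- the conditional premise of $\mc{V}$ supplies an identical incoming observable message whenever $\mc{C}_1,\mc{C}_2$ (resp.\ $\mc{F}_1,\mc{F}_2$) agree; I push it into the middle component, re-establish the invariant (a non-observable payload touches only irrelevant nodes, while a received observable channel is added to the interface), and again appeal to the induction hypothesis at the strictly smaller type, the obligation being vacuous when the closing configurations disagree. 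The forwarding cases (lines~11 and~12) are handled analogously, asserting that relevant forwards in the middle component broadcast identical channel names and assuming it for forwards in the closing configurations, then continuing along the substituted channel.

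I expect the relevancy-preservation lemma to be the main obstacle. The two runs share no structure on their non-observable subtrees and may reach poised states whose irrelevant content differs arbitrarily; the argument must show both that this divergence can never leak into an observable message and that, at every poised checkpoint, the relevant projections nevertheless coincide exactly. Making this precise for the \emph{asynchronous} dynamics --- where messages are pushed rather than synchronously consumed, and where the two runs desynchronize on non-observable receives --- is exactly what forces the quasi-running-secrecy lookahead of \defref{def:quasi}, and getting its interaction with the tree invariant right for \emph{positive} messages, whose relevance is inherited from the parent, is the delicate point. A secondary difficulty is verifying that the lexicographic measure genuinely decreases across a higher-order exchange, where an entire subtree crosses the interface into a new observable obligation; this is precisely why the relation is stratified by observable type structure in addition to configuration size.
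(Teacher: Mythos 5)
Your proposal matches the paper's proof in all essentials: the same unfolding of the equivalence definition, the same decomposition of the closed runs via the appendix's meganode lemma, the same invariant $\cproj{\mc{D}_1}{\xi}=\cproj{\mc{D}_2}{\xi}$ carried with unconstrained closing configurations, the same lexicographic induction on observable type structure and the multiset order, the same key relevancy-preservation lemma (the paper's step-invariant lemma, proved exactly by your relevant/irrelevant case split with the quasi-running-secrecy lookahead), and the same poised-state case analysis matching the assert/assume cases of $\mc{V}$. The only cosmetic difference is that the paper threads the internal-step phase through the induction hypothesis plus a backward-closure lemma (one mirrored step at a time) rather than explicitly iterating to poised states, but this is the same argument organized differently.
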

\begin{proof}
The proof is by induction on the type structure and the multiset ordering. For the details of the proof see the appendix.
\end{proof}

To prove that our fundamental theorem entails the desired property, we define an alternative stepping definition $\hookrightarrow_{\Delta \Vdash K}$ in Figure 1 in the appendix for a closed configuration $\lr{\mc{C}}{\mc{D}}{\mc{F}}\in \m{Tree}(\Delta \Vdash K)$. Where  $\lr{\mc{C}}{\mc{D}}{\mc{F}}\in \m{Tree}(\Delta \Vdash K)$ is defined as
$\cdot \Vdash \mc{C} :: \Delta $, and $\Delta \Vdash \mc{D} :: K $, and $K \Vdash \mc{F} :: y{:}1[c]$. The idea is to run this closed configuration to completion, while accumulating the messages exchanged between the partial program $\mc{D}$ and closing configurations $\mc{C}$ and $\mc{F}$ in a queue. 


It is straightforward to show that $\mc{C}\mc{D}\mc{F}\mapsto_{\cdot \Vdash \cdot}^* \msg{\mb{close}\_}$ if and only if for some $\m{queue}$, we have $\lr{\mc{C}}{\mc{D}}{\mc{F}}\hookrightarrow_{\Delta \Vdash K}\m{queue}$, where $\m{queue}$ is the list of observable messages being exchanged between $\mc{D}$ and the closing configurations $\mc{C}$ and $\mc{F}$ along $\Delta$ and $K$.  An overline indicates that a message is sent from $\mc{C}$ or $\mc{F}$ to $\mc{D}$, otherwise the message is sent from $\mc{D}$ to $\mc{C}$ or $\mc{F}$.

\begin{definition}
Define $\m{queue}_1 \meq \m{queue}_2$ as either
\begin{itemize}
\item $\m{queue}_1= q_1\, \m{queue}'_1$, and $\m{queue}_2= q_2\, \m{queue}'_2$, and $q_1=q_2$, and $\m{queue}'_1 \meq \m{queue}'_2$, or

 \item $\m{queue}_1= \overline{q_1}\, \m{queue}'_1$, and $\m{queue}_2= \overline{q_2}\, \m{queue}'_2$, and if $q_1=q_2$ then $\m{queue}'_1 \meq \m{queue}'_2$.
\end{itemize}

\end{definition}

\begin{theorem}\label{thm:soundness}
For  {\small$(\lr{\mc{C}_1}{\mc{D}_1}{\mc{F}_1}; \lr{\mc{C}_2}{\mc{D}_2}{\mc{F}_2})\in \m{Tree}_{\Psi}(\Delta \Vdash K)$,}
if {\small$(\lre{\mathcal{C}_1}{\mathcal{D}_1} {\mathcal{F}_1};\lre{\mathcal{C}_2}{\mathcal{D}_2} {\mathcal{F}_2}) \in \mathcal{E}^\xi_\Psi\llbracket \Delta \Vdash K \rrbracket,$} then {\small \[(\lr{\mathcal{C}_1}{\mathcal{D}_1} {\mathcal{F}_1}) \hookrightarrow_{\Delta \Vdash K} \m{queue}_1\,\m{and}\,\,(\lr{\mathcal{C}_2}{\mathcal{D}_2} {\mathcal{F}_2}) \hookrightarrow_{\Delta \Vdash K} \m{queue}_2\]}
such that {\small$\m{queue}_1 \meq \m{queue}_2$}.
\end{theorem}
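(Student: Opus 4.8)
The plan is to proceed by well-founded induction along the same order that underlies the logical relation itself --- the lexicographic combination of the structural order on the observable interface $\Delta \Vdash K$ and the multiset order $<$ on configuration sizes (\secref{sec:logicalrel}). Because $\lang$ is terminating (\thmref{thm:progress}, \thmref{thm:preservation}, and well-foundedness of $<$), each closed configuration $\lr{\mc{C}_i}{\mc{D}_i}{\mc{F}_i}$ runs to completion, so $\m{queue}_1$ and $\m{queue}_2$ are well-defined finite lists, and by the remark preceding the theorem $\hookrightarrow_{\Delta \Vdash K}$ records exactly the messages crossing the observable interface. The heart of the argument is a lock-step correspondence between one ``round'' of the relation --- unfolding $\mathcal{E}^\xi_\Psi$ into a poised reduction followed by a single observable exchange governed by $\mathcal{V}^\xi_\Psi$ --- and one logged entry of the queue.

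First I would unfold the hypothesis via line 14 of \figref{fig:def_lr}: each side reduces under $\mapsto^{\m{poised}}_{\Delta \Vdash K}$ to a poised configuration, and the resulting poised pair lies in $\mathcal{V}^\xi_\Psi\llbracket \Delta \Vdash K \rrbracket$. I would observe that these poised reduction steps are internal to the subconfigurations $\mc{C}_i$, $\mc{D}_i$, $\mc{F}_i$ and hence append nothing to the queue: being poised means nothing can fire except across the observable interface, so $\hookrightarrow$ simulates them silently. I would then case-split on which line (1--13) of \figref{fig:def_lr} witnesses the $\mathcal{V}^\xi$ membership; each line pins down the polarity and the side ($\Delta$ or $K$) of the next observable message, which in turn determines the corresponding $\hookrightarrow$ queue step.

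For the \emph{output} cases, in which the message is fired by the partial program $\mc{D}_i$ (lines 1, 2, 4, 8, 10, and the forward broadcast of line 11), the value relation \emph{asserts} that both runs fire the same message; $\hookrightarrow$ appends an identical non-overlined entry $q_1 = q_2$ to each queue, and the continuation condition of $\mathcal{V}^\xi$ returns the post-states to $\mathcal{E}^\xi$ at a strictly smaller position in the well-founded order. The induction hypothesis together with the first clause of $\meq$ then yields $q_1\,\m{queue}'_1 \meq q_2\,\m{queue}'_2$. For the \emph{input} cases, in which a closing configuration $\mc{C}_i$ or $\mc{F}_i$ fires the message (lines 3, 5, 6, 7, 9, and the forward of line 12), the value relation only \emph{assumes} agreement: $\hookrightarrow$ appends an overlined entry $\overline{q_i}$, and $\mathcal{V}^\xi$ supplies the $\mathcal{E}^\xi$ continuation solely under the guard $q_1 = q_2$. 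When $q_1 = q_2$ the induction hypothesis discharges the recursive obligation; when $q_1 \neq q_2$ the overlined clause of $\meq$ is vacuously true. The empty interface (line 13) and the terminal close output (line 1) furnish the base cases, leaving the queues empty or ending in a single matching close, so $\meq$ holds trivially.

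The main obstacle will be the higher-order exchanges (lines 4, 5, 9, 10), where $\mathcal{V}^\xi$ decomposes into \emph{two} separate $\mathcal{E}^\xi$ obligations --- one tracking the sent subtree along the freshly exposed channel $x_\beta$ and one tracking the continuation --- whereas $\hookrightarrow$ records a single interleaved sequence over the enlarged interface. Reconciling the two requires showing that the interleaving is matched across the two runs: since both sides share the same observable tree shape and types throughout the computation, the order in which entries from the subtree and from the continuation are emitted is fixed by the common interface, so the merged queue respects $\meq$ whenever its two component queues do. I would isolate this as an auxiliary interleaving lemma --- a routine but technical commutation argument on $\hookrightarrow$ --- and then invoke the induction hypothesis on each of the two $\mathcal{E}^\xi$ obligations, each of which again sits strictly below the current position in the well-founded order.
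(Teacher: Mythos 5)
Your proposal is correct and follows essentially the same route as the paper, whose entire proof is the observation that the cases of $\mathcal{E}^\xi_\Psi$ (and $\mathcal{V}^\xi_\Psi$) match one-for-one against the cases of $\hookrightarrow_{\Delta \Vdash K}$, with outputs yielding equal non-overlined entries and inputs yielding conditional overlined ones. Your elaboration—including the induction on the lexicographic order and the concatenation compatibility of $\meq$ for the higher-order cases—is simply a careful spelling out of the details the paper dismisses as straightforward.
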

\begin{proof}
The proof is straightforward by matching the cases in the definition of $\mc{E}$ with the cases in the definition of $\hookrightarrow_{\Delta\Vdash K}$.
\end{proof}

\section{Related and Future Work}\label{sec:related}

Related work can be categorized along the following axes:

\paragraph{Information Flow Control Type Systems for Functional/Imperative Languages}
Following Volpano et al.'s seminal work~\cite{volpano96}, much work has been done in the field of information flow type systems for sequential programs (c.f.~\cite{ifc-survey}). Our noninterference definition is inspired by Bowman et al.'s work on Noninterference and parametricity~\cite{BowmanIAhmedCFP2015}. 

\paragraph{Information Flow Control for Process Calculi}
Projects on enforcing IFC on process calculi using security types share a similar goal as ours: to prevent information leakage in process communications~\cite{HondaESOP2000,HondaYoshidaPOPL2002,CrafaARTICLE2002,CrafaTGC2005,CrafaFMSE2006,Crafa2007,HENNESSYRIELY2002,HENNESSY20053,KOBAYASHI2005,ZDANCEWIC2003,POTTIER2002}. Similar to our work, a security label is typically added to types. Much of these work associate the security labels to channels. Yoshida et al. associate the labels to actions~\cite{HondaYoshidaPOPL2002}; Hennessy and Riely associate read and write policies to channels~\cite{HENNESSYRIELY2002,HENNESSY20053}; and Crafa et al. associate a security label to the process and capabilities to expressions~\cite{CrafaARTICLE2002}. 
We associate two security labels with the process. One main difference that sets our system
apart from prior work is that ours is a flow-sensitive system: the running secrecy changes as
the process receives more information. The running secrecy and flow sensitivity is a natural consequence of building our work on the sequent calculus. 
Some of the existing work also consider declassification~\cite{CrafaFMSE2006,Crafa2007}, which we leave as future work.

Timing channels and race conditions can contribute to information leakage. Unlike prior work~\cite{KOBAYASHI2005,ZDANCEWIC2003}, 
our linear types ensure progress, termination, and freedom of race conditions; and therefore do not need additional checks to rule out such leaks.  
Prior work also proposed different noninterference definitions, relying on barbed-congruence, P-congruence,  may-testing and must-testing, per-models, and trace equivalence. Our noninterference definition is based on a novel binary session logical relation. It is closest to barbed-congruence definitions and entails trace equivalences. Since our processes' behavior is finite, we do not need co-inductive definitions.


\paragraph{Information Flow Control for Multiparty Session Types}
Only recently, have researchers investigated incorporating information flow security into session
types~\cite{CapecchiCONCUR2010,CapecchiARTICLE2014,CastellaniARTICLE2016,Ciancaglini2016}. In
addition to developing information flow session type systems that allow
declassification~\cite{CapecchiCONCUR2010,CapecchiARTICLE2014}, researchers also designed
flexible run-time monitoring techniques for preventing information
leakage~\cite{CastellaniARTICLE2016,Ciancaglini2016}, all in the context of multiparty session
types. Ours is the first information flow binary session type system.  Again, our flow-sensitive type system and logical relation-based definition for noninterference sets us apart from existing work.

\paragraph{Hybrid Logic Modal Worlds in Session Types}

Our typing judgment includes world modalities from hybrid logic as syntactic objects in
propositions, where worlds amount to secrecy levels.  A hybrid logic approach has been used in prior
work on binary session types to ensure deadlock-freedom of shared binary session
types~\cite{BalzerESOP2019} and accessibility in linear binary session
types~\cite{CairesCONCUR2019}.  Our work differs not only in the established property of
interest (noninterference) but also in the use of a novel binary relation for session types.

\paragraph{Logical Relations for Session Types}

The use of logical relations for session types has focused predominantly on unary logical
relations (predicates) for proving termination~\cite{PerezESOP2012,
  PerezARTICLE2014,DeYoungFSCD2020} with the exception of a binary logical relation for
parametricity~\cite{CairesESOP2013}.  Noninterference, however, demands a more nuanced binary
relation, requiring communication to be perceived in either direction of the channel.  Our work
generalizes binary logical relations for session typed languages to support \emph{open
  configurations}, considering both the antecedent and succedent of the typing judgment.  While
we have defined the logical relation for noninterference, we believe that the technical
developments in this paper can serve as a stepping stone for future explorations.

\paragraph{Kripke Logical Relations}

Conceptually, our work seems related to Kripke logical
relations~\cite{PittsStarkHOOTS1998} and in particular the works that use possible
worlds~\cite{AhmedPOPL2009} and state machines~\cite{DreyerPOPL2010} to impose invariants on
program heaps.  In our setting, the program heap is a configuration of processes.  Session
types constrain how the configuration can evolve, and configuration typing asserts that
configurations align with the security lattice.  It seems that our secrecy-level-enriched
session types internalize Kripke logical worlds into the type system.  We would like to explore
this connection in future work.
\clearpage
\appendix



\section{Appendix}
\begin{definition}[Projections]
Projection for linear context is defined as follows:
\[\begin{array}{lclc}
\Delta, x_\alpha{:}T[c] \Downarrow \xi& \defeq& \Delta \Downarrow \xi,  x_\alpha{:}T[c] & \m{if}\; c\sqsubseteq \xi  \\
     \Delta, x_\alpha{:}T[c] \Downarrow \xi& \defeq& \Delta \Downarrow \xi & \m{if}\; c \not \sqsubseteq \xi  \\
    \cdot \Downarrow \xi& \defeq& \cdot &  \\
     x_\alpha{:}T[c] \Downarrow \xi& \defeq&   x_\alpha{:}T[c] & \m{if}\; c\sqsubseteq \xi  \\
    x_\alpha{:}T[c] \Downarrow \xi& \defeq& \cdot & \m{if}\; c\not\sqsubseteq \xi \\
\end{array}\]
\end{definition}

\begin{lemma}\label{lem:inv}
 If $\ctype{\Psi}{\Delta}{\mc{C}\,\mc{C}'}{\Delta'}$, then for some $\Delta_1$ we have $\ctype{\Psi}{\Lambda_1}{\mc{C}}{\Lambda'_1, \Delta_1}$ and
 $\ctype{\Psi}{\Lambda_2,\Delta_1}{\mc{C}'}{\Lambda'_2}$, where $\Delta=\Lambda_1, \Lambda_2$ and $\Delta'=\Lambda'_1, \Lambda'_2$.
\end{lemma}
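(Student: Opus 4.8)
The plan is to prove this by induction on the derivation of $\ctype{\Psi}{\Delta}{\mc{C}\,\mc{C}'}{\Delta'}$, performing case analysis on the last rule applied and, within each rule, on where the split point between $\mc{C}$ and $\mc{C}'$ falls relative to the subconfigurations combined by that rule. The judgment abbreviates $\Psi;\Delta \Vdash \mc{C}\,\mc{C}' :: \Delta'$, and the only rules that can conclude it are $\mathbf{emp}_1$, $\mathbf{emp}_2$, $\mathbf{proc}$, $\mathbf{msg}$, and $\mathbf{comp}$, each of which exposes the \emph{rightmost} tree or node of the configuration. The intermediate context $\Delta_1$ is, intuitively, exactly the set of channels provided by the prefix $\mc{C}$ and consumed by the suffix $\mc{C}'$; the bulk of the work is tracking how the linear contexts partition as $\Delta = \Lambda_1, \Lambda_2$ and $\Delta' = \Lambda'_1, \Lambda'_2$ through each reassembly.

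In the base cases $\mathbf{emp}_1$ and $\mathbf{emp}_2$ the whole configuration is empty, so both $\mc{C}$ and $\mc{C}'$ are empty and I read off the decomposition directly with $\Delta_1 = \cdot$. For $\mathbf{proc}$ and $\mathbf{msg}$ the configuration has the form $\mc{C}_0, N$ where $N$ is the single root process or message and $\mc{C}_0$ its forest of children; here the split either (i) leaves $\mc{C}'$ empty, which is immediate, (ii) puts exactly $N$ into $\mc{C}'$, where the two premises of the rule directly supply the typings of $\mc{C} = \mc{C}_0$ and of $\mc{C}' = N$ with $\Delta_1$ the context provided by $\mc{C}_0$ and consumed by $N$, or (iii) cuts strictly inside $\mc{C}_0$, in which case I apply the induction hypothesis to the subderivation typing $\mc{C}_0$ and then reattach $N$ by re-applying the same rule. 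Since the root node $N$ reappears verbatim in the reassembly, its side conditions ($\Psi\Vdash d_1 \sqsubseteq d$ and the $\sqsubseteq$-constraints on the contexts) carry over unchanged, so the secrecy bookkeeping is never an obstacle.

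The $\mathbf{comp}$ case is where the real work lies. Here $\mc{C}\,\mc{C}' = \mc{C}_0, \mc{C}_1$ with $\mc{C}_1$ the rightmost \emph{tree} (providing a single channel), and the two premises are typed over \emph{disjoint} input contexts, so $\mc{C}_0$ and $\mc{C}_1$ are genuinely parallel. If the split places all of $\mc{C}_1$ into $\mc{C}'$, I invoke the induction hypothesis on $\mc{C}_0$ and reattach $\mc{C}_1$ with one more $\mathbf{comp}$; if it falls exactly at the $\mc{C}_0 \mid \mc{C}_1$ boundary, I read the decomposition off the two premises. The delicate subcase is when the split lies \emph{strictly inside} $\mc{C}_1$: I apply the induction hypothesis to the subderivation of $\mc{C}_1$, obtaining a prefix forest $\mc{C}_1^{\mathrm{pre}}$ and a suffix $\mc{C}_1^{\mathrm{suf}}$, and must then recombine $\mc{C}_0$ with $\mc{C}_1^{\mathrm{pre}}$ to type $\mc{C}$. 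This recombination is the main obstacle, because $\mc{C}_1^{\mathrm{pre}}$ is in general a forest with several output channels while $\mathbf{comp}$ appends only one tree at a time.

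I expect to discharge this obstacle with a small auxiliary lemma stating that two configurations typed over disjoint contexts can be concatenated into a single well-typed configuration providing the union of their outputs, proved by a routine induction that iterates $\mathbf{comp}$ over the trees of the second forest (adding them in provider-before-client order, which their original typing already guarantees, and introducing no new side conditions). With this auxiliary in hand, the only remaining task in every case is the context accounting: confirming that the channels provided by $\mc{C}$ split into those forwarded to the final interface ($\Lambda'_1$) and those consumed internally by $\mc{C}'$ ($\Delta_1$), and that the inputs partition as $\Delta = \Lambda_1, \Lambda_2$. These verifications are mechanical once $\Delta_1$ is identified as the provided-and-consumed channels, so I regard the parallel-composition step, and the disjointness reasoning it relies on, as the crux of the argument.
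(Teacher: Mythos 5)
Your proposal is correct and takes essentially the same approach as the paper, whose entire proof is the one-line remark that the lemma follows ``by a straightforward induction on the configuration typing rules.'' Your case analysis on the last rule ($\mathbf{emp}_1$, $\mathbf{emp}_2$, $\mathbf{proc}$, $\mathbf{msg}$, $\mathbf{comp}$) and on where the split falls, together with the auxiliary concatenation lemma needed to recombine $\mc{C}_0$ with a prefix forest in the $\mathbf{comp}$ case, is precisely the detail that the paper's ``straightforward induction'' leaves implicit.
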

\begin{proof}
The proof is by a straightforward induction on the configuration typing rules.
\end{proof}

\begin{lemma}[Permutation of Configurations]
Writing $Q\langle x\rangle$ for a process term $Q$ with an occurrence of channel $x$, and $M\langle x \rangle$ for a message $M$ that sends a message along $x$ the following permutations are admissible:
\begin{itemize}
    \item  If $\Psi; \Delta \Vdash \mc{C}_1 \mb{proc}(x,Q\langle x\rangle)\mc{C}_2\mb{proc}(y,P\langle x \rangle)\mc{C}_3:: \Delta'$ then $\Psi; \Delta \Vdash \mc{C}_1\mc{C}_2  \mb{proc}(x,Q\langle x\rangle) \mb{proc}(y,P\langle x \rangle)\mc{C}_3:: \Delta'$.
    
    \item  If $\Psi; \Delta \Vdash \mc{C}_1 \mb{msg}(M\langle x \rangle)\mc{C}_2\mb{proc}(y,P\langle x \rangle)\mc{C}_3:: \Delta'$ then $\Psi; \Delta \Vdash \mc{C}_1\mc{C}_2  \mb{msg}(M\langle x \rangle) \mb{proc}(y,P\langle x \rangle)\mc{C}_3:: \Delta'$.
    
    \item If $\Psi; \Delta \Vdash \mc{C}_1 \mb{proc}(x,Q\langle x\rangle)\mc{C}_2\mb{msg}(M\langle x\rangle)\mc{C}_3:: \Delta'$ then $\Psi; \Delta \Vdash \mc{C}_1\mc{C}_2  \mb{proc}(x,Q\langle x\rangle) \mb{msg}(M\langle x\rangle)\mc{C}_3:: \Delta'$.
\end{itemize}
 
\end{lemma}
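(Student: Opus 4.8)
The plan is to prove all three permutations uniformly, reducing each to a single adjacent transposition of $\mb{proc}(x,Q\langle x\rangle)$ (resp.\ $\mb{msg}(M\langle x\rangle)$) \emph{rightward} across the elements of $\mc{C}_2$, iterated $|\mc{C}_2|$ times. Since the $\mathbf{proc}$ and $\mathbf{msg}$ configuration-typing rules are structurally identical apart from the running-secrecy side condition $\Psi \Vdash d_1 \sqsubseteq d$ (which is irrelevant to reordering), the three bullets are instances of one argument; I would treat the $\mathbf{proc}$/$\mathbf{proc}$ case and note that the others go through verbatim. The induction is on $|\mc{C}_2|$, with the empty case immediate.

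First I would isolate the crucial linearity fact: in any well-typed configuration $\mc{C}_1 \mb{proc}(x,Q\langle x\rangle)\mc{C}_2\mb{proc}(y,P\langle x\rangle)\mc{C}_3$, the channel $x$ is precisely the channel offered by $\mb{proc}(x,Q\langle x\rangle)$ and used as a resource by $\mb{proc}(y,P\langle x\rangle)$; by the uniqueness of channel names built into the typing rules, $x$ occurs neither as an offered channel nor as a used resource anywhere in $\mc{C}_2$. This is what makes $\mc{C}_2$ independent of the provider of $x$. Next I would apply the inversion lemma (Lemma~\ref{lem:inv}) repeatedly to decompose the hypothesis typing into typings for $\mc{C}_1$, $\mb{proc}(x,Q\langle x\rangle)$, $\mc{C}_2$, $\mb{proc}(y,P\langle x\rangle)$, and $\mc{C}_3$, recording for each piece which channels it receives from the left and offers to the right. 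The layout imposed by $\mathbf{proc}$ and $\mathbf{comp}$ places every subtree strictly to the left of the root consuming its offered channel; hence $\mb{proc}(x,Q)$ offers exactly $x$, which is consumed only by $\mb{proc}(y,P)$ and not by $\mc{C}_2$, and conversely no element of $\mc{C}_2$ can offer a resource consumed by $Q$, since such a provider would have to lie to the left of $\mb{proc}(x,Q)$. Thus the interface of $\mc{C}_2$ is disjoint from that of $\mb{proc}(x,Q)$.

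The core step is then a single transposition: given $\mc{C}' \mb{proc}(x,Q)\, N\, \mc{C}''$ well-typed, with $N$ a process or message whose interface is disjoint from $x$ and which uses no channel offered by $\mb{proc}(x,Q)$, I re-derive $\mc{C}'\, N\, \mb{proc}(x,Q)\, \mc{C}''$ by reassembling the inverted pieces with the order of $N$ and $\mb{proc}(x,Q)$ swapped. Every side condition of the typing rules constrains only secrecy levels and offered/used channels, all of which are preserved because the relevant context slices are untouched. Iterating this transposition over the elements of $\mc{C}_2$ slides $\mb{proc}(x,Q)$ rightward until it is adjacent to $\mb{proc}(y,P)$ --- equivalently, moving $\mc{C}_2$ to the left of both --- which is exactly the stated conclusion.

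I expect the main obstacle to be the bookkeeping of contexts under repeated inversion: Lemma~\ref{lem:inv} splits a concatenation into a left part offering $\Lambda'_1, \Delta_1$ and a right part consuming $\Delta_1$, and I must track these shared slices $\Delta_1$ through each split to confirm that the slice shared between $\mc{C}_2$ and $\mb{proc}(x,Q)$ is in fact empty. Establishing this emptiness rigorously --- rather than reading it off the tree picture --- is the delicate point, and it rests entirely on the linearity and name-uniqueness conditions of the configuration typing rules. Once it is in hand, each transposition and the final reassembly are mechanical reapplications of $\mathbf{proc}$, $\mathbf{msg}$, and $\mathbf{comp}$.
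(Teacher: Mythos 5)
The paper itself does not prove this lemma: its proof is the single line ``See~[5] for the proof,'' deferring to prior work, so there is no in-paper argument to compare yours against---what you have written is a self-contained replacement for that citation, and it is essentially correct. The two facts you isolate are exactly the ones the argument needs: (i) by linearity and uniqueness of channel names, $x$ is offered only by the left element and consumed only by the right element, so no element of $\mc{C}_2$ mentions $x$; and (ii) by the orientation of the $\mathbf{proc}$, $\mathbf{msg}$, and $\mathbf{comp}$ rules---made precise by iterating the paper's inversion lemma (\lemref{lem:inv})---every element's resources are offered either by the external context $\Delta$ or by elements strictly to its left, so the element being moved rightward consumes nothing offered by $\mc{C}_2$. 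These two facts force the interface slice shared between the moving element and $\mc{C}_2$ to be empty, which is precisely the ``delicate point'' you flag, and each adjacent transposition then rebuilds the typing derivation mechanically; the secrecy side conditions of $\mathbf{proc}$ and $\mathbf{msg}$ are local to each node and insensitive to reordering, as you say. One detail worth tightening: in the second bullet the moving element is a positive message, which may use more than one channel (a $\otimes$ message uses both the transmitted channel and the next generation of its carrier), so fact (ii) must be invoked for all of its resources, not only for channels ``disjoint from $x$''; and in the third bullet the consumer of $x$ is a negative message rather than a process, which changes nothing but means the ``verbatim'' claim should be read as covering both orientations of who offers and who consumes $x$. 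With those adjustments the uniform treatment of the three bullets goes through as claimed.
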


\begin{proof}
    See~[5] for the proof.
\end{proof}

\begin{definition}[Poised Configuration]
A configuration $\Delta_1, \Delta_2 \Vdash \mc{C}_1, \mc{C}_2:: \Lambda, w{:}A'[c]$ is poised iff either $\mc{C}_1, \mc{C}_2$ is empty or $\Delta_1 \Vdash \mc{C}_1:: \Lambda$ is poised and $ \Delta_2 \Vdash \mc{C}_2::  w{:}A'[c]$ is poised. The configuration $\Delta_2 \Vdash \mc{C}_2 :: w{:}A'[c]$ is poised iff it cannot take any steps and at least one of the following conditions hold:
\begin{itemize}
\item $\mc{C}_2$ is an empty configuration.
\item $\mathcal{C}_2= \mc{C}'_2\mb{msg}(P)\mc{C}''_2$ such that $\mb{msg}(P)$ is a negative message along $y \in \Delta_2$, i.e.
\[\begin{array}{l}
   y{:}\&\{\ell{:}A_\ell\}_{\ell \in L}[c_1]\Vdash \mb{msg}(P):: x{:}A_k[c_1], \; \mbox{or}  \\
     y{:}A\multimap B[c_1], z{:}A[c_1] \Vdash \mb{msg}(P):: x{:}B[c_1],
\end{array}\]
and both subconfigurations $\mc{C}_2'$ and $\mc{C}_2''$ are poised. 
\item  $\mathcal{C}_2= \mb{proc}(x[c'],P@d_1)\,\mc{C}_2'$ such that $\mb{proc}(x[c'], P@d_1)$ attempts to receive along a channel $y {\in} \Delta_2$, i.e.
\[\begin{array}{l}
     \Delta'_2, y{:}\oplus\{\ell{:}A_\ell\}_{\ell\in L}[c_1]\Vdash \mb{proc}(x[c'], \mb{case}y^{c_1}(\ell \Rightarrow P'_\ell)@d_1):: x{:}T[c'],\; \mbox{or}\\ \Delta'_2, y{:}A\otimes B[c_1] \Vdash \mb{proc}(x[c'], v \leftarrow \mb{recv}y^c; P' @d_1):: x{:}T[c'], \; \mbox{or}  \\
     \Delta'_2, y{:}1[c_1] \Vdash \mb{proc}(x[c'],\mb{wait} y^c;P' @d_1):: x{:}T[c'].
\end{array}\]

\item $\mathcal{C}_2= \mc{C}_2'\mb{msg}(P)$ such that $\mb{msg}(P)$ is a positive message along $w{:}A'[c]$, i.e.
\[\begin{array}{l}
 x{:}A_k[c] \Vdash \mb{msg}(P)::w{:}\oplus\{\ell{:}A_\ell\}_{\ell \in L}[c], \; \mbox{or}\\
 x{:}B[c], z{:}A[c] \Vdash \mb{msg}(P):: w{:}A\otimes B[c],\; \mbox{or}\\ 
 \cdot \Vdash \mb{msg}(P):: w{:}1[c],
\end{array}\]
and subconfiguration $\mc{C}'_2$  is poised. 
\item   $\mathcal{C}_2= \mb{proc}(w[c], P@d_1)\,\mc{C}_2'$ such that $\mb{proc}(w[c], P@d_1)$ attempts to receive along $w{:}A'[c]$, i.e. 
\[\begin{array}{l}
\Lambda \Vdash \mb{proc}(w[c], \mb{case}w^c(\ell \Rightarrow P'_\ell)@d_1):: w{:}\&\{\ell{:}A_\ell\}_{\ell \in L}[c], \; \mbox{or}\\ \Lambda \Vdash \mb{proc}(w[c], v \leftarrow \mb{recv}w^c; P' @d_1):: w{:}A\multimap B[c].
\end{array}\]
\item $\mathcal{C}_2= \mc{C}_2'\mb{proc}(w^c \leftarrow x^c @d_1)\,\mc{C}_2''$.
    
\end{itemize}
\end{definition}

\begin{theorem}[Preservation]\label{thm:preservation}
If $\Psi; \Delta \Vdash \mc{C} :: \Delta'$ and $\mc{C}\mapsto_{\Delta \Vdash \Delta'} \mc{C}'$, then  $\Psi; \Delta \Vdash \mc{C}' :: \Delta'$. Moreover $\mc{C}'<\mc{C}$ by the multiset ordering.
\end{theorem}
\begin{proof}
The proof is by considering different cases of $\mc{C}\mapsto_{\Delta \Vdash \Delta'} \mc{C}'$. And then by inversion on the typing derivations. We only consider a couple of interesting cases here. The proof of other cases is similar.
\begin{description}
\item {\bf Case 1. ($\m{Cut}$)} \[\mc{C}_1 \mb{proc}(y_\alpha[c],  (x^d \leftarrow P_{x^d})@d_2 ; Q_{x^d}\,@d_1) \mc{C}_2  \mapsto_{\Delta \Vdash \Delta'}  \mc{C}_1 \mb{proc}(x_0[d],  P_{x_0^d}@d_2)\mb{proc}(y_\alpha[c],  Q_{x_0^d}\,@d_1) \mc{C}_2.\]

{\bf By assumption of the theorem:} $\Psi; \Delta \Vdash \mc{C}_1 \mb{proc}(y_\alpha[c],  (x^d \leftarrow P_{x^d})@d_2 ; Q_{x^d}\,@d_1) \mc{C}_2:: \Delta'$. 

{\bf By \lemref{lem:inv}:} $\Psi; \Lambda_1 \Vdash \mc{C}_1:: \Delta_1, \Delta_2, \Lambda'_1$ and $\Psi; \Delta_1, \Lambda_2 \Vdash \mb{proc}(y_\alpha[c],  (x^d \leftarrow P_{x^d})@d_2 ; Q_{x^d}\,@d_1) :: y_\alpha{:}A[c]$. If $y_\alpha{:}A[c] \not \in \Delta'$ then $\Psi; \Lambda_3, \Delta_2, y_\alpha{:}A[c] \Vdash \mc{C}_2 :: \Lambda'_2$ and otherwise $\Psi; \Lambda_3, \Delta_2 \Vdash \mc{C}_2 :: \Lambda'_2$.
Where $\Delta=\Lambda_1, \Lambda_2, \Lambda_3$ and $\Delta'=\Lambda'_1, \Lambda'_2$ .

{\bf By inversion on $\mb{proc}$ rule:}
$\Psi; \Delta_1, \Lambda_2 \vdash  ((x^d \leftarrow P_{x^d})@d_2) ; Q_{x^d}\,@d_1 :: y_\alpha{:}A[c]$. Moreover, $(\star)\; \; \Psi \Vdash d_1 \sqsubseteq c$ and $\forall u_\gamma{:}T[d'] \in \Delta_1, \Lambda_2. \, \Psi \Vdash d' \sqsubseteq c$. 

{\bf By inversion on $\m{Cut}$ rule:} $\Psi; \Delta'_1, \Lambda'_2 \vdash  P_{x_0^d}@d_2:: x_0{:}B[d]$ and $\Psi; \Delta''_1, \Lambda''_2, x_0{:}B[d] \Vdash   Q_{x_0^d}\,@d_1 :: y_\alpha{:}A[c]$ where $\Delta_1=\Delta'_1,\Delta''_1$ and $\Lambda_2=\Lambda'_2,\Lambda''_2$. Moreover, $(\star')\;\;\Psi \Vdash d_1 \sqsubseteq d_2\sqsubseteq d \sqsubseteq c$ and $\forall u_\gamma{:}T[d'] \in \Delta'_1, \Lambda'_2. \, \Psi \Vdash d' \sqsubseteq d$.

{\bf By $\mb{proc}$ rule, $(\star)$ and $(\star')$:} $\Psi; \Delta'_1, \Lambda'_2 \Vdash \mb{proc}(x_0[d],  P_{x_0^d}@d_2):: x_0{:}B[d]$ and $\Psi; \Delta''_1, \Lambda''_2,x_0{:}B[d] \Vdash \mb{proc}(y_\alpha[c],  Q_{x_0^d}\,@d_1) :: y_\alpha{:}A[c]$ where $\Delta_1=\Delta'_1,\Delta''_1$ and $\Lambda_2=\Lambda'_2,\Lambda''_2$.

{\bf By configuration typing rules:}
$\Psi; \Delta \Vdash \mc{C}_1 \mb{proc}(x_0[d],  P_{x_0^d}@d_2) \mb{proc}(y_\alpha[c],  Q_{x_0^d}\,@d_1) \mc{C}_2:: \Delta'$. \\

Moreover, after taking this step,  $\Psi; \Delta_1, \Lambda_2 \vdash  ((x^d \leftarrow P_{x^d})@d_2) ; Q_{x^d}\,@d_1 :: y_\alpha{:}A[c]$ is replaced in the configuration by two smaller typing judgments $\Psi; \Delta'_1, \Lambda'_2 \vdash  P_{x_0^d}@d_2:: x_0{:}B[d]$ and $\Psi; \Delta''_1, \Lambda''_2, x_0{:}B[d] \vdash   Q_{x_0^d}\,@d_1 :: y_\alpha{:}A[c]$. By the definition of multiset ordering, we have \[\mc{C}_1 \mb{proc}(x_0[d],  P_{x_0^d}@d_2) \mb{proc}(y_\alpha[c],  Q_{x_0^d}\,@d_1) \mc{C}_2\,<\,\mc{C}_1 \mb{proc}(y_\alpha[c],  (x^d \leftarrow P_{x^d})@d_2 ; Q_{x^d}\,@d_1) \mc{C}_2.\]

\item {\bf Case 2. ($\otimes$)} {\small \[\mc{C}_1 \mb{msg}(\mathbf{send}\,x_\beta^{c}\,y_\alpha^{c};y_{\alpha}^{c}\leftarrow v_\delta^{c}) \mc{C}' \mb{proc}(u_\gamma[c'],w_\eta\leftarrow \mathbf{recv}\,y_\alpha^{c}; P\,@d_1) \mc{C}_2 \mapsto_{\Delta \Vdash \Delta'}    \mc{C}_1\mc{C}'\mb{proc}(u_{\gamma}[c'], [x_\beta^c/w_\eta][v^c_{\delta}/y^c_{\alpha}] P\,@d_1\sqcup c) \mc{C}_2.\]}

{\bf By assumption of the theorem:} $\ctype{\Psi}{\Delta}{\mc{C}_1 \mb{msg}(\mathbf{send}\,x_\beta^{c}\,y_\alpha^{c};y_{\alpha}^{c}\leftarrow v_\delta^{c}) \mc{C}' \mb{proc}(u_\gamma^{c'},w_\eta\leftarrow \mathbf{recv}\,y_\alpha^{c}; P\,@d_1) \mc{C}_2}{\Delta'}$.

{\bf By \lemref{lem:inv}:} $\ctype{\Psi}{\Lambda_1}{\mc{C}_1}{\Delta_1, \Delta_2, \Delta_3,\Delta_4,\Lambda'_1}$ and $\ctype{\Psi}{\Lambda_2,\Delta'_1, x_\beta{:}A[c]}{\mb{msg}(\mathbf{send}\,x_\beta^{c}\,y_\alpha^{c};y_{\alpha}^{c}\leftarrow v_\delta^{c})}{y_\alpha{:}A\otimes B[c]}$ and $\ctype{\Psi}{\Lambda_3, \Delta_2}{\mc{C}'}{\Lambda'_2, \Delta_5, \Delta_6}$ and
$\ctype{\Psi}{\Lambda_4, \Delta_3, y_\alpha{:}A\otimes B[c], \Delta_5}{\mb{proc}(u_\gamma^{c'},w_\eta\leftarrow \mathbf{recv}\,y_\alpha^{c}; P\,@d_1)}{(u_\gamma{:}C[c'])}$ and
$\ctype{\Psi}{\Lambda_5, \Delta_4, \Delta'_6}{\mc{C}_2}{\Lambda'_3}$.

Where $\Delta=\Lambda_1,\Lambda_2,\Lambda_3,\Lambda_4,\Lambda_5$ and $\Delta'=\Lambda'_1, \Lambda'_2, \Lambda'_3$, and  $\Delta'_1= \Delta_1$ if $x_\beta{:}A[c] \in \Delta$, and $\Delta'_1, x_\beta{:}A[c] \in \Delta= \Delta_1$ otherwise. Also, $\Delta'_6 = \Delta_6$ if $u_\gamma{:}C[c'] \in \Delta'$ and otherwise $\Delta'_6 = \Delta_6, u_\gamma{:}C[c']$.

{\bf By inversion on $\mb{msg}$ rule}
$\ptype{\Psi}{\Lambda_2,\Delta'_1, x_\beta{:}A[c]}{(\mathbf{send}\,x_\beta^{c}\,y_\alpha^{c};y_{\alpha}^{c}\leftarrow v_\delta^{c})@c}{y}{c}{\alpha}{A\otimes B}$. Moreover, $(\star)\; \; \forall u_\gamma{:}T[d'] \in \Lambda_2, \Delta'_1, x_\beta{:}A[c]. \, \Psi \Vdash d' \sqsubseteq c$. 
 
{\bf By inversion on $\mb{proc}$ rule}
$\ptype{\Psi}{\Lambda_4,\Delta_3, y_\alpha{:} A \otimes B[c], \Delta_5}{(w_\eta \leftarrow \mathbf{recv}\,y_\alpha^{c}; P)\,@d_1}{u}{c'}{\gamma}{C}$. Moreover, $(\star')\; \; \forall u_\gamma{:}T[d'] \in \Lambda_4, \Delta_3, y_\alpha{:}A\otimes B[c], \Delta_5. \, \Psi \Vdash d' \sqsubseteq c'$, and $\Psi \Vdash d_1 \sqsubseteq c'$.

{\bf By inversion on $\otimes\, L$  rule}
$\ptype{\Psi,  \psi = c }{\Lambda_4,\Delta_3, y_{\alpha}{:}B[c], w_{\eta}{:}A[\psi], \Delta_5}{P\,@d_1 \sqcup c}{u}{c'}{\gamma}{C}$.

{\bf By substitution of  $x^c_\beta$ for $w_\eta$ and $v_\delta$ for $y_{\alpha}$:}
\[\ptype{\Psi, c = c }{\Lambda_4,\Delta_3, v_{\delta}{:}B[c], x_{\beta}{:}A[c], \Delta_5}{[v_\delta/y_{\alpha}][x_\beta^c/w_\eta]P\,@d_1 \sqcup c}{u}{c'}{\gamma}{C}.\] Moreover, we have $\Psi = \Psi, c=c$.

{\bf By $\m{proc}$ rule, $(\star)$ and $(\star')$:} 
$\ctype{\Psi}{\Lambda_4, \Delta_3, v_{\delta}{:}B[c], x_\beta{:}A[c], \Delta_5}{\mb{proc}(u_{\gamma}[c'], [x_\beta^c/w_\eta][v_{\delta}/y_{\alpha}] P\,@d_1\sqcup c) \mc{C}_2}{u_\gamma{:}C[c']}.$

{\bf By configuration typing rules}
$\ctype{\Psi}{\Delta}{\mc{C}_1\mc{C}'\mb{proc}(u_{\gamma}[c'], [x_\beta^c/w_\eta][v_{\delta}/y_{\alpha}] P\,@d_1\sqcup c) \mc{C}_2}{\Delta'}$

Moreover, after taking this step, $\ptype{\Psi}{\Lambda_4,\Delta_3, y_\alpha{:} A \otimes B[c], \Delta_5}{(w_\eta \leftarrow \mathbf{recv}\,y_\alpha^{c}; P)\,@d_1}{u}{c'}{\gamma}{C}$ is replaced in the configuration by the typing judgment $\ptype{\Psi }{\Lambda_4,\Delta_3, v_{\delta}{:}B[c], x_{\beta}{:}A[c], \Delta_5}{[v_\delta/y_{\alpha}][x_\beta^c/w_\eta]P\,@d_1 \sqcup c}{u}{c'}{\gamma}{C}$  which has a smaller process term. The typing judgment for the message disappears after taking the step. This observation is enough to show that $\mc{C}_1\mc{C}'\mb{proc}(u_{\gamma}[c'], [x^c_\beta/w_\eta][v_{\delta}/y_{\alpha}] P\,@d_1\sqcup c) \mc{C}_2<\mc{C}_1 \mb{msg}(\mathbf{send}\,x_\beta^{c}\,y_\alpha^{c};y_{\alpha}^{c}\leftarrow v_\delta^{c}) \mc{C}' \mb{proc}(u_\gamma[c'],w_\eta \leftarrow \mathbf{recv}\,y_\alpha^{c}; P\,@d_1) \mc{C}_2$.
\end{description}
\end{proof}

 \begin{theorem}[Progress]\label{thm:progress}
If $\Psi; \Delta \Vdash \mc{C}::\Delta'$, then 
either $\mc{C} \mapsto_{\Delta\Vdash \Delta'}\mc{C}'$  or $\mc{C}$ is poised.
\end{theorem}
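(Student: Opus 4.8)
The plan is to induct on the derivation of the configuration typing $\Psi; \Delta \Vdash \mc{C} :: \Delta'$, following the five rules $\mathbf{emp}_1$, $\mathbf{emp}_2$, $\mathbf{comp}$, $\mathbf{proc}$, and $\mathbf{msg}$. Throughout, I rely on the fact that the rewriting rules of \figref{fig:dynamics} close under configuration context, so that a step of any subconfiguration lifts to a step of the whole; hence in each inductive case I first appeal to the induction hypothesis on the immediate subconfiguration and am done if that subconfiguration can step. The base cases $\mathbf{emp}_1$ and $\mathbf{emp}_2$ are immediate, since the configuration is empty and therefore poised by the first clause of the poised definition.

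For $\mathbf{comp}$, write $\mc{C} = \mc{C}_0, \mc{C}_1$ with $\Psi; \Delta_0 \Vdash \mc{C}_0 :: \Delta$ and $\Psi; \Delta'_0 \Vdash \mc{C}_1 :: x{:}A[d]$. The two trees are siblings and, by the shape of $\mathbf{comp}$, expose disjoint interfaces, so neither can communicate with the other. Applying the induction hypothesis to each: if either steps, $\mc{C}$ steps; otherwise both are poised, and $\mc{C}$ is poised by the compositional clause of the poised definition, which splits $\Lambda, w{:}A'[c]$ exactly along the $\mathbf{comp}$ boundary.

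The heart of the argument is the cases $\mathbf{proc}$ and $\mathbf{msg}$, where a root node is singled out on top of a subtree $\mc{C}_0$ providing $\Delta$. I apply the induction hypothesis to $\mc{C}_0$; if it steps I am done, so I assume $\mc{C}_0$ is poised and case-analyze the root. For a root process $\mathbf{proc}(x[d], P@d_1)$: if $P$ performs any positive (sending) action --- a label send, a channel send, $\mathbf{close}$, a spawn ($\m{Cut}$), or an enabled forward onto an internal channel --- then the matching rule of \figref{fig:dynamics} fires and $\mc{C}$ steps, since in an asynchronous semantics senders never block. If instead $P$ receives along its own offered channel $x \in \Delta'$, along an external input in $\Delta'_0$, or is a forward onto the external channel $x$ (on which $\m{fwd}$ is by design undefined), then the relevant poised clause applies and $\mc{C}$ is poised. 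The root message case of $\mathbf{msg}$ is symmetric: a positive message sitting on the offered channel $x \in \Delta'$ is poised, and a negative message sitting on an external input is poised.

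The one genuinely delicate sub-case --- and the step I expect to be the main obstacle --- is when the root receives, or a negative root message must be delivered, along an \emph{internal} channel $y \in \Delta$ provided by the poised subtree $\mc{C}_0$. Here I invert the poised judgment for $\mc{C}_0$ restricted to its $y$-component, using \lemref{lem:inv} to isolate that component and the permutation lemma to bring the relevant node adjacent to the receiver. Because $y$ carries a positive type from the provider's side, a poised $\mc{C}_0$ either already exposes a positive $\mathbf{msg}(\cdot)$ on $y$ or a forward into $y$ --- in which case the receive/forward redex at the interface fires and $\mc{C}$ steps --- or else the provider of $y$ is itself blocked, in which case its blocking is witnessed at an external channel of $\mc{C}_0$ and, since the intuitionistic tree discipline rules out any cyclic wait, that witness persists in $\mc{C}$, making $\mc{C}$ poised with the root merely transitively blocked. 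Establishing that this dichotomy is exhaustive, i.e. that a receive on an internal channel can never deadlock, is exactly where acyclicity of the configuration tree is essential, and verifying the no-new-redex side condition for the poised conclusion (that $\mc{C}_0$ offers neither a message nor a forward on $y$) is the bookkeeping that must be carried out carefully.
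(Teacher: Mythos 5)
Your proposal is correct and takes essentially the same route as the paper's proof: induction on the configuration typing, trivial $\mathbf{emp}$/$\mathbf{comp}$ cases, and for $\mathbf{proc}$/$\mathbf{msg}$ an appeal to the induction hypothesis on the provider subtree followed by inversion of its poisedness at the internal interface channel, yielding the same trichotomy (an interface redex fires, a root forward fires, or the blocking is witnessed at an external channel or an empty subtree, so the whole configuration is poised). The only slip is in the negative-root-message sub-case: there the internal channel $y$ has \emph{negative} type, so the poised $y$-subtree exposes a process attempting to receive along $y$ rather than a positive message, and the same inversion then closes that case exactly as in the paper's $\mathbf{msg}$ case.
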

\begin{proof}
The proof is by induction on the configuration typing of $\mc{C}$. If $\mc{C} \mapsto \mc{C}'$ then the proof is complete. Otherwise we consider  the last rule in the typing of $\mc{C}$.
\begin{description}
\item {\bf Case 1.}\[\infer[\mathbf{emp}_1]{\Psi; x{:}A[d] \Vdash\cdot :: ( x{:}A[d]) }{}\qquad \infer[\mathbf{emp}_2]{\Psi; \cdot \Vdash\cdot :: (\cdot) }{}\]
In this case $\mc{C}$ is empty and thus poised. 
is empty and thus poised.\item {\bf Case 2.}\[
    \infer[\mathbf{proc}]{\Psi; \Delta_0, \Delta'_0 \Vdash \mathcal{C}', \mathbf{proc}(x[d], P@d_1):: (x{:}A[d]) }{\Psi; \Delta_0 \Vdash \mathcal{C}':: \Delta & \deduce{\Psi; \Delta'_0, \Delta \vdash P@d_1:: (x{:}A[d])}{\forall (y{:}B[d']) \in \Delta'_0, \Delta & \deduce{(\Psi \Vdash d' \sqsubseteq d)}{\Psi \Vdash d_1 \sqsubseteq d}} }\]
If  $\mathbf{proc}(x[d], P@d_1):: (x{:}A[d])$ attempts to receive on $x$ or is forward then the proof is complete. Otherwise, if it wants to send or spawn along one of its channels $\mc{C}$ can take a step. It remains to consider the case in which $\mc{C}$ wants to receive along one of its positive resources  $y{:}B[c] \in \Delta'_0, \Delta$. If $y{:}B[c]\in {\Delta}'_0$, then the proof is complete by the definition of poised configurations. If $y{:}B[c]\in \Delta$, we apply the induction hypothesis on $\mc{C}'$. If $\mc{C}'$ can take a step, so does $\mc{C}$ and the proof is complete. 
If $\mc{C}'$ is poised, the subconfiguration $\Delta_0''\Vdash \mc{C}''::y{:}B[c]$ for some $\Delta''_0 \subseteq \Delta_0$ is also poised. If $\mc{C}''$ is empty, then $y{:}B[c]\in \Delta''_0 \subseteq \Delta_0$ and by definition $\mathcal{C}$ is poised. Otherwise, by assumption $\mc{C}''$ cannot attempt to receive along $y{:}B[c]$ since $B$ is positive.  If it offers a positive message along $y{:}B[c]$, then the proof is complete since $\mc{C}$ can take a step. If $\mc{C}'$ has a forwarding on the root, then $\mc{C}'$ can take a step again. In the other cases, poisedness of $\mc{C}$ follows by definition.

\item {\bf Case 3.}\[  \infer[\mathbf{msg}]{\Psi; \Delta_0, \Delta'_0 \Vdash \mathcal{C}', \mathbf{msg}(P):: (x{:}A[d]) }{\Psi; \Delta_0 \Vdash \mathcal{C}':: \Delta & \deduce{\Psi; \Delta'_0, \Delta \vdash P@d:: (x{:}A[d])}{\forall (y{:}B[d']) \in \Delta'_0, \Delta & {(\Psi \Vdash d' \sqsubseteq d)}} }\]
If $\mathbf{msg}(P)$ is a positive message then the proof is complete by induction on $\mc{C}'$: if it can take a step, so does $\mc{C}$, and if it is poised, so is $\mc{C}$. If $\mathbf{msg}(P)$ is a negative message along $y_\alpha{:}A[c] \in \Delta$ then we proceed the proof by induction on $\mc{C}'$ again: if it can take a step, so does $\mc{C}$. If $\mc{C}'$ is poised, the subconfiguration $\Delta_0''\Vdash \mc{C}''::y_\alpha{:}A[c]$ for some $\Delta_0''\subseteq \Delta_0$ is also poised. If $\mc{C}''$ is empty then by typing rules $y_\alpha{:}A[c]\in \Delta_0''\subseteq \Delta_0$ and by definition $\mathcal{C}$ is poised. Otherwise, by assumption $\mc{C}''$ cannot offer a positive message along $y_\alpha{:}A[c]$. If it attempts to receive along $y_\alpha{:}A[c]$ then the proof is complete since $\mc{C}$ can take a step. If it has a forwarding on its root, then $\mc{C}$ can take a step again. In the other cases, poisedness of $\mc{C}$ follows by definition. If $\mathbf{msg}(P)$ is a negative message along $y_\alpha{:}A[c] \in \Delta'_0$ then with a similar argument by induction on $\mc{C}'$ we can prove the progress statement. 
\item {\bf Case 4.}\[   \infer[\mathbf{comp}]{\Psi; \Delta_0, \Delta'_0 \Vdash \mathcal{C}', \mc{C}'' :: \Delta, x{:} A[d]}{\Psi; \Delta_0 \Vdash \mathcal{C}':: \Delta & {\Psi; \Delta'_0 \Vdash \mc{C}'':: x{:}A[d]} }\]
where $\mc{C}=\mc{C}', \mc{C}''$. By induction hypothesis, either (i) $\mc{C}'$ can take a step or  (ii) $\mc{C}'$ is empty or (iii) $\mc{C}'$ is poised. 
In (i) the proof is complete, since $\mc{C}'\mc{C}''$ also can take a step. In (ii) the proof is complete since $\mc{C}=\mc{C}''$ and we can apply the induction hypothesis on $\mc{C}''$. In (iii) we apply the induction hypothesis on $\mc{C}''$ and consider the cases: (i') $\mc{C}''$ can take a step which completes the proof, or (ii') $\mc{C}''$ is empty which again completes the proof, or (iii') $\mc{C}''$ is poised, which is enough to prove that $\mc{C}$ is poised and completes the proof.
\end{description}
\end{proof}

\begin{definition}
$\mc{P}_{\mid \xi}$ stands for a rewriting of the process term $\mc{P}$ by renaming any number of its  channels with higher than or incomparable to the observer level.
\end{definition}

\begin{lemma}\label{lem:indinvariant}
Consider $(\lr{\mc{C}_1}{\mc{D}_1}{\mc{F}_1}; \lr{\mc{C}_2}{\mc{D}_2}{\mc{F}_2}) \in \m{Tree}_\Psi(\Delta \Vdash K)$. 
If $\lr{\mc{C}_1} {\mc{D}_1}{ \mc{F}_1} \mapsto_{\Delta \Vdash K}\lr{\mc{C}'_1} {\mc{D}'_1}{ \mc{F}'_1}$  and $\cproj{\mc{D}_1}{\xi}=\cproj{\mc{D}_2}{\xi}$, then for some $\lr{\mc{C}'_2}{\mc{D}'_2} {\mc{F}'_2}$, we have $\lr{\mc{C}_2} {\mc{D}_2}{ \mc{F}_2} \mapsto^{0,1}_{\Delta \Vdash K}\lr{\mc{C}'_2} {\mc{D}'_2}{ \mc{F}'_2}$ such  that $\cproj{\mc{D}'_1}{\xi}=\cproj{\mc{D}'_2}{\xi}$.
\end{lemma}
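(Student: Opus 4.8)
The plan is to reduce everything to the behaviour of the middle component, since the projection $\cproj{\cdot}{\xi}$ only inspects $\mc{D}$: a step localized in $\mc{C}_1$ or $\mc{F}_1$ leaves $\mc{D}_1$, and hence $\cproj{\mc{D}_1}{\xi}$, untouched, so I match it with zero steps of the second run and the projections stay equal by hypothesis. The real content of the lemma is the case where the transition is a step of $\mc{D}_1$ itself, and there I case-split on whether the node firing the rule is \emph{relevant} in the sense of \defref{def:relevant}, using the zero-or-one freedom in the conclusion to align the runs: an irrelevant firing is matched by zero steps of $\mc{D}_2$, and a relevant firing by exactly one.

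For an irrelevant step I would establish $\cproj{\mc{D}_1}{\xi}=\cproj{\mc{D}'_1}{\xi}$ and again take zero steps in the second run. The core sub-claim is that a node of quasi running secrecy above or incomparable to $\xi$ (or with no relevant channel) can interact only with other irrelevant nodes, so its firing never touches the relevant region. This rests on the secrecy \emph{pas de deux}: any node about to send along a channel of maximal secrecy $c \sqsubseteq \xi$ must have running secrecy $\sqsubseteq c \sqsubseteq \xi$, hence, being a sending process, quasi running secrecy $\sqsubseteq \xi$ by \defref{def:quasi}; so an irrelevant node cannot send along a relevant (in particular observable) channel, and dually a receive that \emph{would} raise a process above $\xi$ is already classified irrelevant by the quasi-running-secrecy lookahead before it fires. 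That lookahead is precisely what lets the runs drift out of lockstep on non-observable receives, as with the two $\mb{case}\,y^c(\cdots)@d_1$ processes discussed before \defref{def:quasi}, without disturbing the projection. I then check rule by rule (\figref{fig:dynamics}) that an irrelevant firing leaves every relevant node and channel fixed: for $\m{Cut}$ the premise $d_1 \sqsubseteq d_2$ forces the spawned process to inherit at least the spawner's running secrecy, so an irrelevant spawner yields an irrelevant child.

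For a relevant step the firing node lies in $\cproj{\mc{D}_1}{\xi}$, so by $\cproj{\mc{D}_1}{\xi}=\cproj{\mc{D}_2}{\xi}$ the matching node — identical up to renaming of channels above or incomparable to $\xi$ — occurs in $\mc{D}_2$; I fire the same rule there, giving the single permitted step, and then verify $\cproj{\mc{D}'_1}{\xi}=\cproj{\mc{D}'_2}{\xi}$. A relevant receive is along an \emph{internal} channel of maximal secrecy $c \sqsubseteq \xi$ (a receive along an interface channel is blocked in a $\Delta \Vdash K$ step, and a receive bumping the process above $\xi$ is already irrelevant), so both message and receiver are relevant, present in both runs, and the post-receive running secrecy $d_1 \sqcup c$ stays $\sqsubseteq \xi$. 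A relevant send produces a fresh message whose relevancy I read off its quasi running secrecy from data (parent running secrecy and carrier secrecy) shared by the two runs, so it is relevant in both or irrelevant in both, consistently; and for the forwarding rule the offering channel is relevant, hence of maximal secrecy $c \sqsubseteq \xi$, so the broadcast substitution acts on observable names that already agree in the two projections.

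The main obstacle I anticipate is the reparenting rules $\otimes$ and $\multimap$: because they change the tree shape, I must argue that the relevant region — defined in \defref{def:relevant} as a bottom-up reachability from observable channels through nodes of quasi running secrecy strictly below $\xi$ — is invariant under the move. The clean way to discharge this is to observe that relevancy never propagates across a node of quasi running secrecy above or incomparable to $\xi$; hence the relocated subtree is either separated from the relevant region by such a barrier (when the firing node is irrelevant, noting its carrier and the spawned message have maximal secrecy $c \not\sqsubseteq \xi$) or is relocated identically in both runs (when the firing node is relevant), and in either case the anchored region, and thus the projection, is preserved. Getting the strict-versus-nonstrict bookkeeping of \defref{def:relevant} exactly right — $< \xi$ for channel propagation but $\sqsubseteq \xi$ for node membership — across these structural steps is the delicate part of the argument.
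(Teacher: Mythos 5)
Your proposal follows essentially the same route as the paper's proof: a case analysis over the dynamics rules, dispatching steps in $\mc{C}_1$/$\mc{F}_1$ trivially with zero matching steps, and for each step of $\mc{D}_1$ splitting on relevancy of the firing node — matching an irrelevant firing with zero steps of $\mc{D}_2$ (showing the projection is unchanged) and a relevant firing with the identical rule applied to the counterpart node guaranteed by $\cproj{\mc{D}_1}{\xi}=\cproj{\mc{D}_2}{\xi}$. Your key supporting observations (the quasi-running-secrecy lookahead, the $d_1 \sqsubseteq d_2$ premise of $\m{Cut}$, the symmetric treatment of relocated subtrees under $\otimes$/$\multimap$, and the agreement of forwarded names) are exactly the points the paper's case-by-case argument establishes.
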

\begin{proof}
The proof is by cases on the possible $\mapsto_{\Delta \Vdash K}$ steps. If the step takes place in $\mathcal{C}_1$ or $\mathcal{F}_1$, the proof trivially holds. We only consider the cases in which $\mathcal{D}$ steps. In each case we prove that either the step does not change relevancy of any process in $\mc{D}_1$ or we can step $\mc{D}_2$ such that the same change of relevancy occurs in it too.\\
 {\color{ForestGreen} \bf Case 1. $\mc{D}_1=\mc{D}'_1\mb{proc}(y_\alpha[c],  y^c_\alpha.k ;P @d_1) \mc{D}''_1$ and
\[ \mc{D}'_1\mb{proc}(y_\alpha[c],  y^c_\alpha.k ;P @d_1) \mc{D}''_1\mapsto_{\Delta \Vdash K} \mc{D}'_1\mb{proc}(y_{\alpha+1}[c], [y^c_{\alpha+1}/y^c_{\alpha}] P@d_1) \mb{msg}(y^c_\alpha.k; y^c_{\alpha+1}\leftarrow y^c_{\alpha}) \mc{D}''_1\]}
We consider subcases based on relevancy of process offering along $y_\alpha[c]$: 
\begin{description}
\item {\color{Aquamarine}\bf Subcase 1. $\mb{proc}(y_\alpha[c],  y^c_\alpha.k ;P @d_1)$ is not relevant.} By inversion on the typing rules $d_1 \sqsubseteq c$. By definition either $d_1 \not\sqsubseteq \xi $ or none of the channels connected to $P$ including its offering channel $y_\alpha^c$ are relevant. In both cases neither $\mb{proc}(y_{\alpha+1}[c], [y^c_{\alpha+1}/y^c_{\alpha}] P@d_1)$, nor  $\mb{msg}(y^c_\alpha.k; y^c_{\alpha+1} \leftarrow y_c^\alpha)$ are relevant in the post step. Note that from $d_1 \sqsubseteq c$ and $d_1 \not\sqsubseteq \xi$, we get $c \not\sqsubseteq \xi $. Channel $y^c_\alpha$ is not relevant in the pre-step, and both $y^c_\alpha$ and $y^c_{\alpha+1}$ are not relevant in pre-step and post-step configurations. Every not relevant resource of $\mb{proc}(y^c_\alpha,  y^c_\alpha.k ;P @d_1)$ will remain irrelevant in the post-step too.

In this subcase, our goal is to show 
\[\cproj{\mc{D}'_1\mb{proc}(y_{\alpha+1}[c], [y^c_{\alpha+1}/y^c_{\alpha}] P @d_1) \mb{msg}(y^c_\alpha.k;y^c_{\alpha+1}\leftarrow y^c_{\alpha}) \mc{D}''_1}{\xi} \meq \cproj{\mc{D}'_1 \mc{D}''_1}{\xi}\meq \cproj{\mc{D}_1}{\xi}\meq \cproj{\mc{D}_2}{\xi}.\]

To prove this we need two observations:
\begin{itemize}
    \item Neither $\mb{proc}(y_{\alpha+1}[c], [y^c_{\alpha+1}/y^c_{\alpha}] P@d_1)$ nor  $\mb{msg}(y^c_\alpha.k;y^c_{\alpha+1}\leftarrow y^c_{\alpha})$ are relevant and they will be dismissed by the projection. (As explained above.)
    \item Replacing $\mb{proc}(y_\alpha[c],  y^c_\alpha.k ;P @d_1)$ with these two nodes, does not affect relevancy of the rest of processes in $\mc{D}'_1\mc{D}''_1$. Relevancy of processes in $\mc{D}''_1$ remains intact since $y^c_\alpha$ and $y^c_{\alpha+1}$ are irrelevant. 
    
    The relevancy of processes in $\mc{D}'_1$ remains intact too as we replace their irrelevant root with another irrelevant process. However, we need to be careful about the changes in the quasi running secrecy of a process and their effect on its (grand)children. The quasi-running secrecy of the process offering along $y^c_{\alpha+1}$ may be higher or incomparable to $d_1$ based on the code of $P$ (if it starts with a $\mb{recv}$ or $\mb{case}$). This is of significance only if $d_1 \sqsubseteq \xi$, and in the pre-step the process has a chain of positive messages along a channel with secrecy level lower than or equal to the observer level (e.g. $x:\_[d]$ where $d\sqsubseteq \xi$) as (grand)children. But by the assumption of the subcase, $x:\_[d]$ cannot be a relevant channel in the pre-step. Thus the chain of messages is not relevant in neither the pre-step and nor the post-step.
\end{itemize}

\item {\color{Aquamarine}\bf Subcase 2. $\mb{proc}(y_\alpha[c],  y^c_\alpha.k ;P @d_1)$ is relevant.}
By assumption ($\cproj{\mc{D}_1}{\xi}\meq\cproj{\mc{D}_2}{\xi}$):
\[\mc{D}_2=\mc{D}'_2\mb{proc}(z_\delta[c],  z^c_\delta.k ;P_{\mid \xi} @d_1) \mc{D}''_2,\] such that $z_\delta=y_\alpha$ if $c\sqsubseteq \xi$ and $P_{\mid \xi}$ is equal to $P$ modulo renaming of some channels with higher or incomparable secrecy to the observer. We have \[\lr{\mc{C}_2}{\mc{D}_2}{\mc{F}_2} \mapsto_{\Delta \Vdash K}\lr{\mc{C}_2}{\mc{D}'_2\mb{proc}(z_{\delta+1}[c], [z^c_{\delta+1}/z^c_\delta]P_{\mid \xi} @d_1) \mb{msg}(z^c_\delta.k; z^c_{\delta+1}\leftarrow z^c_{\delta})\mc{D}''_2}{\mc{F}_2}\]

It is enough to show 
{\small\[\cproj{\mc{D}'_1\mb{proc}(y_{\alpha+1}[c], [y^c_{\alpha+1}/y^c_{\alpha}] P @d_1) \mb{msg}(y^c_\alpha.k;  y^c_{\alpha+1}\leftarrow y^c_{\alpha}) \mc{D}''_1}{\xi} \meq \cproj{\mc{D}'_2\mb{proc}(z_{\delta+1}[c], [z^c_{\delta+1}/z^c_\delta]P_{\mid \xi} @d_1) \mb{msg}(z^c_\delta.k;z^c_{\delta+1}\leftarrow z^c_{\delta})\mc{D}''_2}{\xi}.\]}

If $c\not\sqsubseteq \xi$, then $\mb{msg}(y^c_\alpha.k; y^c_{\alpha+1} \leftarrow y^c_{\alpha})$ and $\mb{msg}(z^c_\delta.k; z^c_{\delta+1} \leftarrow z^c_{\delta})$ are not relevant in both runs, and will be dismissed by the projections. Moreover, in this case neither $y^c_\alpha$, nor $z^c_\delta$ are relevant in the pre-step and post-step configurations. Thus the relevancy of processes in $\mc{D}''_1$ and $\mc{D}''_2$ will remain intact.

If $c \sqsubseteq \xi$, then $z^c_\delta=y^c_\alpha$ are relevant in the pre-step in both runs. In the post-step, $z^c_{\delta+1}=y^c_{\alpha+1}$ are relevant in both runs. Relevancy of messages $\mb{msg}(y^c_\alpha.k; y^c_{\alpha+1} \leftarrow y^c_{\alpha})$ and $z^c_\delta=y^c_\alpha$ in the post-steps are determined by the quasi running secrecies ($d$ and $d'$) of their parents ($X$ and $X'$) in $\mc{D}''_1$ and $\mc{D}''_2$. If $d\sqsubseteq \xi$, then the parent ($X$) is relevant in the first run and by assumption is equal to a relevant $X'$ in the second run. Thus messages $\mb{msg}(y^c_\alpha.k; y^c_{\alpha+1} \leftarrow y^c_{\alpha})$ are relevant in both runs and $z^c_\delta=y^c_\alpha$ are relevant in the post-step too. The same holds when $d'\sqsubseteq \xi$.

Otherwise, in both runs the quasi running secrecy of the parent is higher than or incomparable to the observer (the parents are both irrelevant). Thus messages $\mb{msg}(y^c_\alpha.k; y^c_{\alpha+1} \leftarrow y^c_{\alpha})$ are not relevant in the post step of both runs, and will be dismissed by the projections. The channels $z^c_\delta=y^c_\alpha$ will be irrelevant in the post-step too. However, this does not affect the processes in $\mc{D}''_1$ and $\mc{D}''_2$ as the parents of messages ($X$ and $X'$) are already irrelevant in the pre-step.

It remains to show that projections of $\mc{D}'_1$ and $\mc{D}'_2$ are equal in the post-step too. The resources with secrecies higher than or incomparable to the observer offered along $\mc{D}'_1$ and $\mc{D}'_2$ in the pre-step will remain higher than or incomparable to the observer and thus irrelevant in the post-step too. For a relevant resources ($w$) offered along $\mc{D}_1$ and $\mc{D}_2$, we need to consider the change in quasi running secrecy. The quasi running secrecy of the processes offering along $y^c_{\alpha+1}$ and $z^c_{\delta+1}$ may increase based on their code (if the code of $P$ and $P_{\mid \xi}$ starts with a $\mb{recv}$ or $\mb{case}$) and become irrelevant. This means that a relevant sub-tree $\mc{T}_i$ of $\mc{D}'_i$ offering along $w$ in the pre-step will become irrelevant in the post-step.  But by the assumption of the theorem, we know that $\mc{T}_1=\mc{T}_2$. In the post step, we project out the equal subtrees $\mc{T}_1$ and $\mc{T}_2$ from the relevant parts of $\mc{D}'_1$ and $\mc{D}'_2$; the projections will remain equal in the post-step.
\end{description}

 {\color{ForestGreen}\bf Case 2. $\mc{D}_1=\mc{D}'_1\mb{proc}(x_\beta[d],  y^c_\alpha.k ;P @d_1) \mc{D}''_1$ and
 \[ \mc{D}'_1\mb{proc}(x_\beta[d],  y^c_\alpha.k ;P @d_1) \mc{D}''_1\mapsto_{\Delta \Vdash K} \mc{D}'_1\mb{msg}(y^c_\alpha.k;  y^c_{\alpha}\leftarrow y^c_{\alpha+1})\mb{proc}(x_\beta[d], [y^c_{\alpha+1}/y^c_{\alpha}]P @d_1)  \mc{D}''_1\]}
We consider subcases based on relevancy of the process offering along $x_d^\beta$: 
\begin{description}
\item {\color{Aquamarine}\bf Subcase 1. $\mb{proc}(x_\beta[d],  y^c_\alpha.k ;P @d_1)$ is irrelevant.} By inversion on the typing rules, $d_1 \sqsubseteq c \sqsubseteq d$. By definition either $d_1 \not \sqsubseteq \xi$  or none of the channels connected to $P$ including $y^c_\alpha$ and $x^d_\beta$ are relevant. In both cases, neither $\mb{msg}(y^c_\alpha.k;y^c_{\alpha}\leftarrow y^c_{\alpha+1})$ nor $\mb{proc}(x_\beta[d], [y^c_{\alpha+1}/y^c_{\alpha}]P @d_1)$ are relevant. Channel $x^d_\beta$  is irrelevant in the pre-step and post-step configurations.

Channel $y^c_\alpha$  is irrelevant in the pre-step, and both $y^c_\alpha$ and $y^c_{\alpha+1}$ are irrelevant in pre-step and post-step configurations. Every other irrelevant resource of $\mb{proc}(x_\beta[d],  y^c_\alpha.k ;P @d_1)$ will remain irrelevant in the post-step too.

In this subcase, our goal is to show 
\[\cproj{\mc{D}'_1\mb{msg}(y^c_\alpha.k;y^c_{\alpha}\leftarrow y^c_{\alpha+1})\mb{proc}(x_\beta[d], [y^c_{\alpha+1}/y^c_{\alpha}]P @d_1) \mc{D}''_1}{\xi} \meq \cproj{\mc{D}'_1 \mc{D}''_1}{\xi}\meq \cproj{\mc{D}_1}{\xi}\meq \cproj{\mc{D}_2}{\xi}\]

With a same argument as in {\bf \color{ForestGreen}Case 1.} {\bf \color{Aquamarine}Subcase 1.}, we can prove that the relevancy of processes in $\mc{D}'_1$ and $\mc{D}''_1$ remain intact.

\item {\color{Aquamarine}\bf Subcase 2. $\mb{proc}(x_\beta[d],  y^c_\alpha.k ;P @d_1)$ is relevant.} 
By assumption that $\cproj{\mc{D}_1}{\xi}\meq\cproj{\mc{D}_2}{\xi}$, and definition of $\meq$: \[\mc{D}_2=\mc{D}'_2\mb{proc}(u_\gamma[d],  z^c_\delta.k ;P_{\mid \xi} @d_1) \mc{D}''_2,\] such that $z_\delta=y_\alpha$ if $c\sqsubseteq \xi$ and $u_\gamma=x_\beta$ if $d \sqsubseteq \xi$ and $P_{\mid \xi}$ is equal to $P$ modulo renaming of some channels with higher than or incomparable to the observer level. We have \[\lr{\mc{C}_2}{\mc{D}_2}{\mc{F}_2} \mapsto_{\Delta \Vdash K}\lr{\mc{C}_2}{\mc{D}'_2\mb{msg}(z^c_\delta.k; z^c_{\delta}\leftarrow z^c_{\delta+1})\mb{proc}(u_{\gamma}[d], [z^c_{\delta+1}/z^c_\delta]P_{\mid \xi} @d_1) \mc{D}''_2}{\mc{F}_2}\]

If $d \sqsubseteq \xi$, then $u_\gamma=x_\beta$ are relevant in the pre-steps of both runs and  remain relevant in the post-steps. Even if the quasi running secrecy increases based on the code of $P$ and $P_{\mid \xi}$, it will be lower than or equal to the observer level by the tree invariant. Thus the relevancy of processes in $\mc{D}''_i$ remain intact. Moreover, every resource of the processes in $\mc{D}'_i$ is relevant in the pre-steps and post-steps.

If $d \not \sqsubseteq \xi$, then $u_\gamma$ and $x_\beta$ remain irrelevant in the pre-steps and post-steps of both runs: the relevancy of processes in $\mc{D}''_i$ remain intact.

It remains to show that the projections of $\mc{D}'_1$ and $\mc{D}'_2$ in post-steps are still equal. Here we only consider the trees offered along $y^c_\alpha$ and $z^c_\delta$ in both runs. The argument for the rest of $\mc{D}'_i$ is similar to {\bf \color{ForestGreen}Case 1.} {\bf \color{Aquamarine}Subcase 2.}

The adaptive running secrecy of the negative message $\mb{msg}(z^c_\delta.k; z^c_{\delta}\leftarrow z^c_{\delta+1})$ is $c$ in the post-steps. If $c\sqsubseteq \xi$ then the same message exists in both runs, and the tree offered along $y^c_\alpha=z^c_\delta$ is relevant in the pre-steps and post-steps.

If $c \not \sqsubseteq \xi$ then the message is irrelevant in both runs. $y^c_\alpha$ and $z^c_\delta$ are both irrelevant in the pre-step and remain irrelevant in the post-step too. By typing rules $\mb{msg}(z^c_\delta.k; z^c_{\delta}\leftarrow z^c_{\delta+1})$ is not a parent of any positive messages.

\end{description}
{\color{ForestGreen} \bf Case 3. $\mc{D}_1=\mc{D}'_1\mc{C}''_1\mb{proc}(y^c_\alpha, \mb{send} x^c_\beta\, y^c_\alpha @d_1) \mc{D}''_1$ and
\[\mc{D}'_1\mc{C}''_1\mb{proc}(y^c_\alpha, \mb{send} x^c_\beta\, y^c_\alpha ;P @d_1) \mc{D}''_1\mapsto_{\Delta \Vdash K} \mc{D}'_1 \mc{C}''_1\mb{proc}(y^c_{\alpha+1}, [y^c_{\alpha+1}/y^c_{\alpha}]P @d_1)\mb{msg}(\mb{send} x^c_\beta\, y^c_\alpha; y^c_{\alpha+1}\leftarrow y^c_{\alpha} )  \mc{D}''_1\]}
such that $\Delta=\Delta'\Delta''$ and $\Delta' \Vdash \mc{D}'_1:: \Lambda$ and  $\Delta'' \Vdash \mc{C}''_1:: (x_{\beta}{:}A[c])$ and $\Lambda, x_{\beta}{:}A[c] \Vdash  \mb{proc}(y_\alpha[c], \mb{send} x^c_\beta\, y^c_\alpha ;P @d_1):: (y_\alpha{:}A \otimes B[c])$. (In the case that $\mc{C}''_1$ is empty, we have $\Delta=\Delta', x_\beta{:}A[c]$.)

We consider subcases based on relevancy of process offering along $y^c_\alpha$: 
\begin{description}
\item {\color{Aquamarine}\bf Subcase 1. $\mb{proc}(y_c^\alpha, \mb{send} x^c_\beta\, y^c_\alpha ;P @d_1)$ is not relevant.} By inversion on the typing rules $d_1 \sqsubseteq c$. By definition either $d_1 \not \sqsubseteq \xi $ or none of the channels connected to $P$ including $y^c_\alpha$ and $x^c_\beta$ are relevant. In both cases, neither $\mb{proc}(y_{\alpha+1}[c], [y^c_{\alpha+1}/y^c_{\alpha}]P @d_1)$ nor $\mb{msg}(\mb{send} x^c_\beta\, y^c_\alpha; y^c_{\alpha+1}\leftarrow y^c_{\alpha})$ are relevant. Channel $y^c_\alpha$  is irrelevant in the pre-step, and both $y^c_\alpha$ and $y^c_{\alpha+1}$ are irrelevant in pre-step and post-step configurations. In this subcase, our goal is to show

\[\cproj{\mc{D}'_1\mc{C}''_1\mb{proc}(y_{\alpha+1}[c], [y^c_{\alpha+1}/y^c_{\alpha}]P @d_1) \mb{msg}(\mb{send} x^c_\beta\, y^c_\alpha; y^c_{\alpha+1}\leftarrow y^c_{\alpha}) \mc{D}''_1}{\xi} \meq \cproj{\mc{D}'_1 \mc{C}''_1 \mc{D}''_1}{\xi}\meq \cproj{\mc{D}_1}{\xi}\meq \cproj{\mc{D}_2}{\xi}.\]

We first prove that the relevancy status of  $\mc{C}''_1$ remain intact too. Note that all channels in $\mc{C}''_1$, except $x^c_\beta$ have the same connections in the pre-step and post-step. So it is enough to consider the changes made to the tree rooted at $x^c_\beta$. If $c \not \sqsubseteq \xi$, then $x^c_\beta$ is not relevant in the pre-step and the post-step. Thus relevancy status of $\mc{C}''_1$ remains intact. Moreover, the message is irrelevant. The tree rooted at $x^c_\beta$ offers to a node with a  quasi running secrecy higher than or incomparable to the obsever before and after the step. Thus, the relevancy of tree rooted at $x^c_\beta$ is not affected. With a same argument as in {\bf \color{ForestGreen}Case 1.} {\bf \color{Aquamarine}Subcase 1.} and the one given for $\mc{C}''_1$, we can prove that the relevancy status of processes in $\mc{D}'_1$ and $\mc{D}''_1$ remain intact.

\item {\color{Aquamarine}\bf Subcase 2. $\mb{proc}(y^c_\alpha, \mb{send} x^c_\beta\, y^c_\alpha ;P @d_1)$ is relevant.}
By assumption that $\cproj{\mc{D}_1}{\xi}\meq\cproj{\mc{D}_2}{\xi}$, and definition of $\meq$:
\[\mc{D}_2=\mc{D}'_2\mc{C}''_2\mb{proc}(z_\delta[c], \mb{send} u^c_\gamma\, z^c_\delta ;P_{\mid \xi} @d_1)\mc{D}''_2\]

 such that $z_\delta=y_\alpha$ and $u_\gamma=x_\beta$, if $c \sqsubseteq \xi$, and $P_{\mid \xi}$ is equal to $P$ modulo renaming of some  channels with higher than or incomparable to the observer level.
 
 We have \[\lr{\mc{C}_2}{\mc{D}_2}{\mc{F}_2} \mapsto_{\Delta \Vdash K}\lr{\mc{C}_2}{\mc{D}'_2 \mc{C}''_2\mb{proc}(z^c_{\delta+1}, [z^c_{\delta+1}/z^c_\delta]P_{\mid \xi} @d_1) \mb{msg}(\mb{send} u^c_\gamma\, z^c_\delta; z^c_{\delta+1}\leftarrow z^c_{\delta})\mc{D}''_2}{\mc{F}_2}\]
 If $c \not\sqsubseteq \xi$, then $\mb{msg}(\mb{send} x^c_\beta\, y^c_\alpha;y^c_{\alpha+1}\leftarrow y^c_{\alpha})$ and $\mb{msg}(\mb{send} u^c_\gamma\, z^c_\delta; z^c_{\delta+1}\leftarrow z^c_{\delta})$ are not relevant in both runs, and will be dismissed by the projections. Moreover, neither $y^c_\alpha$, nor $z^c_\delta$ are relevant in the pre-step and post-step configurations. Thus the relevancy of processes in $\mc{D}''_1$ and $\mc{D}''_2$ will remain intact. Moreover, in this case $x^c_\beta$ and $u^c_\gamma$ are irrelevant in both pre-steps and post-steps. Which means that relevancy status of $\mc{C}''_1$ and $\mc{C}''_2$ remains intact.

If $ c \sqsubseteq \xi$, then $z^c_\delta=y^c_\alpha$ are relevant in the pre-step in both runs. We also know that $x_\beta= u_\gamma$ are relevant in the pre-step and $\mc{C}''_1=\mc{C}''_2$ are relevant in the pre-step. In the post-step, $z^c_{\delta+1}=y^c_{\alpha+1}$ are relevant in both runs. Relevancy of messages $\mb{msg}(\mb{send} x^c_\beta\, y^c_\alpha;  y^c_{\alpha+1}\leftarrow y^c_{\alpha})$  in the post-steps are determined by the quasi running secrecy ($d$ and $d'$) of their parents ($X$ and $X'$) in $\mc{D}''_1$ and $\mc{D}''_2$. If $d\sqsubseteq \xi$, then the parent ($X$) is relevant in the first run and by assumption is equal to a relevant $X'$ in the second run. Thus messages $\mb{msg}(\mb{send} x^c_\beta\, y^c_\alpha;  y^c_{\alpha+1}\leftarrow y^c_{\alpha})$ are relevant in both runs, $z^c_\delta=y^c_\alpha$ are relevant in the post-step, and trees $\mc{C}''_1=\mc{C}''_2$ and their offering channels $x^d_\beta=u^d_\gamma$ are relevant in the post-steps too.  The same holds when $d'\sqsubseteq \xi$.

Otherwise, in both runs the quasi running secrecy of the parent is higher than or incomparable to the observer level (the parents are both irrelevant). Thus messages $\mb{msg}(\mb{send} x^c_\beta\, y^c_\alpha;  y^c_{\alpha+1}\leftarrow y^c_{\alpha})$ are not relevant in the post step of both runs, and will be dismissed by the projections. The channels $z^c_\delta=y^c_\alpha$ will be irrelevant in the post-step too. However, this does not affect the processes in $\mc{D}''_1$ and $\mc{D}''_2$ as the parents of messages ($X$ and $X'$) are already irrelevant in the pre-step. The channels $x^c_\beta=u^c_\gamma$ both become irrelevant in the post-steps. However, we still have $\cproj{\mc{C}''_1}{\xi} =\cproj{\mc{C}''_2}{\xi}$ as they have the same type and their parents have the same quasi running secrecy.

We show that projections of $\mc{D}'_1$ and $\mc{D}'_2$ are equal in the post-step too. The  resources with secrecy level higher than or incomparable to the observer level offered along $\mc{D}'_1$ and $\mc{D}'_2$ in the pre-step will remain higher than or incomparable to and thus irrelevant in the post-step too. For a relevant resources ($w$) offered along $\mc{D}_1$ and $\mc{D}_2$, we need to consider the change in quasi running secrecy as in {\bf \color{ForestGreen}Case 1.} {\bf \color{Aquamarine}Subcase 2.} Moreover, we need to consider the scenario that a relevant resource ($w^{c'}$) in the pre-step loses its relevancy in the post-step because the channel offered along $x_\beta^c$ is transferred to the message. This case only happens if $c', c \sqsubseteq \xi$ and thus the trees $\mc{T}_1$ and $\mc{T}_2$ offered along $w^{c'}$ is present in both runs and $\mc{T}_1=\mc{T}_2$. We know that $w^{c'}$ is irrelevant in the post-step of both runs, and the quasi-running secrecy of the processes using the resource $w^{c'}$ in both runs are the same.

The relevant and irrelevant processes in $\mc{D}''_i$ remain intact.


\end{description}

{\color{ForestGreen} \bf Case 4. $\mc{D}_1=\mc{D}'_1\mc{C}''_1\mb{proc}(w_\eta[c'], \mb{send} x^c_\beta\, y^c_\alpha @d_1) \mc{D}''_1$ and
\[\mc{D}'_1\mc{C}''_1\mb{proc}(w_\eta[c'], \mb{send} x^c_\beta\, y^c_\alpha ;P @d_1) \mc{D}''_1\mapsto_{\Delta \Vdash K} \mc{D}'_1\mc{C}''_1\mb{msg}(\mb{send} x^c_\beta\, y^c_\alpha )\mb{proc}(w_\eta[c'], [y^c_{\alpha+1}/y^c_{\alpha}]P @d_1)  \mc{D}''_1\]}

such that $\Delta=\Delta_1\Delta_2$ and $\Delta_1 \Vdash \mc{D}'_1:: \Lambda, y_\alpha{:}A\multimap B[c]$ and  $\Delta_2 \Vdash \mc{C}''_1:: (x_{\beta}{:}A[c])$ and \[\Lambda, y_\alpha{:}A \multimap B[c], x_{\beta}{:}A[c] \Vdash  \mb{proc}(w_\eta[c'], \mb{send} x^c_\beta\, y^c_\alpha ;P @d_1):: (w_\eta{:}C[c']).\] In the case where $\mc{C}''_1$ is empty we have  $\Delta_2=x_{\beta}{:}A[c]$. We proceed by considering subcases based on relevancy of the process offering along $w_\eta[c']$:

\item {\color{Aquamarine}\bf Subcase 1. $ \mb{proc}(w_\eta[c'], \mb{send} x^c_\beta\, y^c_\alpha ;P @d_1)$ is not relevant. By inversion on the typing rules $ d_1 \sqsubseteq c \sqsubseteq c'$} By definition either $d_1 \not \sqsubseteq \xi$  or none of the channels connected to $P$ including $y^c_\alpha$, and $x^c_\beta$ are relevant. In both cases, neither $\mb{msg}(\mb{send} x^c_\beta\, y^c_\alpha )$ nor $\mb{proc}(w^{c'}_\eta, [y^c_{\alpha+1}/y^c_{\alpha}]P @d_1)$  are relevant. Channel $w^{c'}_\eta$  is irrelevant in the pre-step and post-step configurations.

Channel $y^c_\alpha$  is irrelevant in the pre-step, and both $y^c_\alpha$ and $y^c_{\alpha+1}$ are irrelevant in pre-step and post-step configurations. Every other irrelevant resource of the process in the pre-step will remain irrelevant in the post-step too. See {\bf \color{ForestGreen}Case 2.} {\bf \color{Aquamarine}Subcase 1.} for the discussion on the relevancy of $\mc{C}''_1$.

\[\cproj{\mc{D}'_1\mc{C}''_1 \mb{msg}(\mb{send} x^c_\beta\, y^c_\alpha) \mb{proc}(w^{c'}_\eta, [y^c_{\alpha+1}/y^c_{\alpha}]P @d_1) \mc{D}''_1}{\xi} \meq \cproj{\mc{D}'_1 \mc{D}''_1}{\xi}\meq \cproj{\mc{D}_1}{\xi}\meq \cproj{\mc{D}_2}{\xi}\]

\item {\color{Aquamarine}\bf Subcase 2. $\mb{proc}(w^{c'}_\eta, \mb{send} x^c_\beta\, y^c_\alpha ;P @d_1)$ is relevant.}
By assumption that $\cproj{\mc{D}_1}{\xi}\meq\cproj{\mc{D}_2}{\xi}$, and definition of $\meq$:

\[\mc{D}_2=\mc{D}'_2\mc{C}''_2\mb{proc}(v^{c'}_\omega, \mb{send} u^c_\gamma\, z^c_\delta ;P_{\mid \xi} @d_1)\mc{D}''_2\]

 such that $z_\delta=y_\alpha$ and $u_\gamma=x_\beta$ if $c \sqsubseteq \xi$, and $P_{\mid \xi}$ is equal to $P$ modulo renaming of some channels with secrecy level higher than or incomparable secrecy level to the observer.
 
 We have \[\lr{\mc{C}_2}{\mc{D}_2}{\mc{F}_2} \mapsto_{\Delta \Vdash K}\lr{\mc{C}_2}{\mc{D}'_2\mc{C}''_2\mb{msg}(\mb{send} u^c_\gamma\, z^c_\delta) \mb{proc}(v^{c'}_{\omega}, [z^c_{\delta+1}/z^c_\delta]P_{\mid \xi} @d_1)\mc{D}''_2}{\mc{F}_2}\]
 
 With the same argument as in {\bf \color{ForestGreen}Case 2.} {\bf \color{Aquamarine}Subcase 2.} we can show that relevancy of $\mc{D}''_i$ remains intact.
 
 For $\mc{C}''_i$, we argue that if $ c \sqsubset \xi$, then $\mc{C}''_1=\mc{C}''_2$ is relevant in the pre-step and remains relevant in the post-step too. If $c \not \sqsubseteq \xi$, then the relevancy of $\mc{C}''_i$ remain intact from pre-step to post-step. 
 This is enough to show that $\cproj{\mc{C}''_1}{\xi}= \cproj{\mc{C}''_2}{\xi}$ in the post-step. (See {\bf \color{ForestGreen}Case 3.} {\bf \color{Aquamarine}Subcase 2.} for a more detailed discussion on transferring a tree via message)
 
The discussion on relevancy of $\mc{D}'_i$ is similar to the previous cases.

{\color{ForestGreen} \bf Case 5. $\mc{D}_1=\mc{D}'_1\mb{msg}(y^c_\alpha.k;  y_\alpha^c  \leftarrow v^c) \mb{proc}(x^d_\beta, \mb{case} \,y^c_\alpha (\ell \Rightarrow P_\ell)_{\ell \in L} @d_1)  \mc{D}''_1$ and
\[ \mc{D}'_1\mb{msg}(y^c_\alpha.k;y_\alpha^c  \leftarrow v^c)\mb{proc}(x^d_\beta, \mb{case} \,y^c_\alpha (\ell \Rightarrow P_\ell)_{\ell \in L} @d_1)  \mc{D}''_1 \mapsto_{\Delta \Vdash K} \mc{D}'_1\mb{proc}(w^d_{\beta}, [v^c/y^c_{\alpha}] P_k@c) \mc{D}''_1\]}
We consider sub-cases based on relevancy of process offering along $w^d_\beta$. Observe that  $y^c_\alpha$  is relevant if an only if $v^c$ is relevant, since they share a message of secrecy $c$.
\begin{description}
\item {\color{Aquamarine}\bf Subcase 1. $\mb{proc}(x^d_\beta, \mb{case} \,y^c_\alpha (\ell \Rightarrow P_\ell)_{\ell \in L} @d_1)$ is not relevant.} By definition either $d_1 \sqcup c \not \sqsubseteq \xi  $ or none of the channels connected to $P$ including its offering channel $y_\alpha^c$ are relevant.  In both cases the messages $\mb{msg}(y^c_\alpha.k)$ and the continuation process $\mb{proc}(w^d_{\beta}, [v^c/y^c_{\alpha}] P_k@c\sqcup d_1)$ are not relevant either. It is straightforward to see that   
\[\cproj{\mc{D}'_1\mb{proc}(w^d_{\beta}, [v^c/y^c_{\alpha}] P_k@c\sqcup d_1)\mc{D}''_1}{\xi} \meq \cproj{\mc{D}'_1 \mc{D}''_1}{\xi}\meq \cproj{\mc{D}_1}{\xi}\meq \cproj{\mc{D}_2}{\xi}.\]

\item {\color{Aquamarine}\bf Subcase 2. $\mb{proc}(x^d_\beta, \mb{case} \,y^c_\alpha (\ell \Rightarrow P_\ell)_{\ell \in L} @d_1)$ is relevant.}
By definition of relevancy, we get that $c \sqcup d_1\sqsubseteq \xi$ and thus $y^c_\alpha$ is relevant. This means that $\mb{msg}(y^c_\alpha.k; y_\alpha^c  \leftarrow v^c)$ is relevant too.
By assumption that $\cproj{\mc{D}_1}{\xi}\meq\cproj{\mc{D}_2}{\xi}$, and definition of $\meq$: \[\mc{D}_2=\mc{D}'_2\mb{msg}(y^c_\alpha.k; y_\alpha^c  \leftarrow v^c)\mb{proc}(u^d_\gamma,   \mb{case}\,y^c_\alpha( \ell \Rightarrow {P_\ell}_{\mid \xi})_{\ell\in L} @d_1)  \mc{D}''_2,\] such that ${P_\ell}_{\mid \xi}$ is equal to $P_\ell$ modulo renaming of some channels with secrecy level higher than or incomparable to the observer.

We have \[\lr{\mc{C}_2}{\mc{D}_2}{\mc{F}_2} \mapsto_{\Delta \Vdash K}\lr{\mc{C}_2}{\mc{D}'_2\mb{proc}(u^d_{\gamma}, [v^c/y^c_\alpha]{P_k}_{\mid \xi} @c) \mc{D}''_2}{\mc{F}_2}\]
This completes the proof of the subcase as we know that the relevancy of channels in $\mc{D}'_i$ and $\mc{D}''_i$ remain intact.

\end{description}

{\color{ForestGreen} \bf Case 6. $\mc{D}_1=\mc{D}'_1\mb{proc}(y^c_\alpha, \mb{case} \,y^c_\alpha (\ell \Rightarrow P_\ell)_{\ell \in L} @d_1) \mb{msg}(y^c_\alpha.k; x_1^c \leftarrow y_\alpha^c) \mc{D}''_1$ and
\[ \mc{D}'_1\mb{proc}(y^c_\alpha, \mb{case} \,y^c_\alpha (\ell \Rightarrow P_\ell)_{\ell \in L} @d_1) \mb{msg}(y^c_\alpha.k; x_1^c \leftarrow y_\alpha^c) \mc{D}''_1 \mapsto_{\Delta \Vdash K} \mc{D}'_1\mb{proc}(x_1^c, [x_1^c/y^c_{\alpha}] P_k@c) \mc{D}''_1\]}
We consider sub-cases based on relevancy of process offering along $y^c_\alpha$. Observe that $y^c_\alpha$  is relevant if an only if $x_1^c$ is relevant, since they share a message of secrecy $c$.
\begin{description}
\item {\color{Aquamarine}\bf Subcase 1. $\mb{proc}(y^c_\alpha, \mb{case} \,y^c_\alpha (\ell \Rightarrow P_\ell)_{\ell \in L} @d_1)$ is not relevant.} By definition either $d_1 \sqcup c = c \not \sqsubseteq \xi$ or none of the channels connected to $P$ including its offering channel $y_\alpha^c$ are relevant.  In both cases, means that $x_1^c$ is not relevant and $\mb{msg}(y^c_\alpha.k; x_1^c \leftarrow y_\alpha^c)$ is not relevant either.
Moreover the continuation process $\mb{proc}(x_1^c, [x_1^c/y^c_{\alpha}] P_k@c)$ won't be relevant. And   
\[\cproj{\mc{D}'_1\mb{proc}(x_1^c, [x_1^c/y^c_{\alpha}] P_k@c)\mc{D}''_1}{\xi} \meq \cproj{\mc{D}'_1 \mc{D}''_1}{\xi}\meq \cproj{\mc{D}_1}{\xi}\meq \cproj{\mc{D}_2}{\xi}\]
\item {\color{Aquamarine}\bf Subcase 2. $\mb{proc}(y^c_\alpha, \mb{case} \,y^c_\alpha (\ell \Rightarrow P_\ell)_{\ell \in L} @d_1)$ is relevant.}
By definition of relevancy, we get that $c \sqcup d_1=c \sqsubseteq \xi$ and thus $y^c_\alpha$ is relevant. This means that $\mb{msg}(y^c_\alpha.k; x_1^c \leftarrow y_\alpha^c)$ is relevant too.
By assumption that $\cproj{\mc{D}_1}{\xi}\meq\cproj{\mc{D}_2}{\xi}$, and definition of $\meq$: \[\mc{D}_2=\mc{D}'_2\mb{proc}(y^c_\alpha,   \mb{case}\,y^c_\alpha( \ell \Rightarrow {P_\ell}_{\mid \xi})_{\ell\in L} @d_1) \mb{msg}(y^c_\alpha.k; x_1^c \leftarrow y_\alpha^c) \mc{D}''_2,\] such that ${P_\ell}_{\mid \xi}$ is equal to $P_\ell$ modulo renaming of some channels with secrecy level higher than or incomparable to the observer.

We have \[\lr{\mc{C}_2}{\mc{D}_2}{\mc{F}_2} \mapsto_{\Delta \Vdash K}\lr{\mc{C}_2}{\mc{D}'_2\mb{proc}(x_1^c, [x_1^c/y^c_\alpha]{P_k}_{\mid \xi} @c) \mc{D}''_2}{\mc{F}_2}\]
This completes the proof of the subcase as we know that the relevancy of channels in $\mc{D}'_i$ and $\mc{D}''_i$ remain intact.

\end{description}

{\color{ForestGreen}\bf Case 7. $\mc{D}_1=\mc{D}'_1\mb{msg}(\mb{send}x_\eta^c y_\alpha^c; y_\alpha^c \leftarrow x_1^c) \mb{proc}(v^{c'}_\eta,  w \leftarrow \mb{recv} y^c_\alpha; P @d_1)  \mc{D}''_1$ and
\[ \mc{D}'_1\mb{msg}(\mb{send} x_\eta^c\, y_\alpha^c;y_\alpha^c \leftarrow x_1^c) \mb{proc}(v^{c'}_\eta,  w \leftarrow \mb{recv} y^c_\alpha; P @d_1)\mc{D}''_1 \mapsto_{\Delta \Vdash K} \mc{D}'_1\mb{proc}(v^{c'}_{\eta}, [x^c_\eta/w][x_1^c/y^c_{\alpha}] P@c\sqcup d_1) \mc{D}''_1\]}
We consider sub-cases based on relevancy of process offering along $v^{c'}_\eta$. 
\begin{description}
\item {\color{Aquamarine}\bf Subcase 1. $ \mb{proc}(v^{c'}_\eta,  w \leftarrow \mb{recv} y^c_\alpha; P @d_1)$ is not relevant.} By definition either $d_1 \sqcup c \not \sqsubseteq \xi$ or none of the channels connected to $P$ including $y_\alpha^c$ are relevant.

In both cases by the definition of quasi running secrecy we know that neither $\mb{msg}(\mb{send}x_\eta^c y_\alpha^c;x_1^c \leftarrow y_\alpha^c)$ nor the continuation process $\mb{proc}(v^{c'}_{\eta}, [x^c_\eta/w][x_1^c/y^c_{\alpha}] P@c\sqcup d_1)$  are relevant. It is then straightforward to see that

\[\cproj{\mc{D}'_1\mb{proc}(v^{c'}_{\eta}, [x^c_\eta/w][x_1^c/y^c_{\alpha}] P@c\sqcup d_1) \mc{D}''_1}{\xi} \meq \cproj{\mc{D}'_1 \mc{D}''_1}{\xi}\meq \cproj{\mc{D}_1}{\xi}\meq \cproj{\mc{D}_2}{\xi}.\]

\item {\color{Aquamarine}\bf Subcase 2. $\mb{proc}(v^{c'}_\eta,  w \leftarrow \mb{recv} y^c_\alpha; P @d_1)$ is relevant.}
By definition of relevancy, we get that $c \sqcup d_1\sqsubseteq \xi$. This implies that $y^c_\alpha$ is relevant in the pre-step. From relevancy of $y^c_\alpha$ and the quasi running secrecy lower than or equal to the observer of the positive message $\mb{msg}(\mb{send} x_\eta^c\, y_\alpha^c;y_\alpha^c \leftarrow x_1^c)$ we get that the message is relevant too.
By assumption:
\[\mc{D}_2=\mc{D}'_2\mb{msg}(\mb{send} x_\eta^c\, y_\alpha^c;y_\alpha^c \leftarrow x_1^c)\mb{proc}(u^{c'}_\gamma,    w \leftarrow \mb{recv} y^c_\alpha; P_{\mid \xi} @d_1)  \mc{D}''_2,\] such that ${P_\ell}_{\mid \xi}$ is equal to $P_\ell$ modulo renaming of some  channels with secrecy level higher than or incomparable to the observer.

We have \[\lr{\mc{C}_2}{\mc{D}_2}{\mc{F}_2} \mapsto_{\Delta \Vdash K}\lr{\mc{C}_2}{\mc{D}'_2\mb{proc}(u^{c'}_{\gamma}, [x^c_\eta/w][x_1^c/y^c_{\alpha}] {P}_{\mid \xi} @d_1 \sqcup c) \mc{D}''_2}{\mc{F}_2}\]
We need to consider  that the quasi running secrecy of the process may increases in the post step based on the code of $P$ and $P_{\mid_\xi}$. The argument for this case is similar to the previous cases of the proof. See {\bf \color{ForestGreen}Case 1.}{\bf \color{Aquamarine} Subcase 2.}. One interesting situation is when  the relevancy of chain of positive and relevant messages in the pre-step of $\mc{D}'_i$ changes in the post-step. By relevancy in the pre-step we know that these chains exist in both runs, so the same chain of messages will become irrelevant in the post-step of both runs.  

\end{description}

{\color{ForestGreen}\bf Case 8. $\mc{D}_1=\mc{D}'_1\mb{proc}(y^{c}_\alpha,  w \leftarrow \mb{recv} y^c_\alpha; P @d_1)\mb{msg}(\mb{send}x_\eta^c y_\alpha^c; x_1^c \leftarrow y_\alpha^c)   \mc{D}''_1$ and
\[ \mc{D}'_1 \mb{proc}(y^{c}_\alpha,  w \leftarrow \mb{recv} y^c_\alpha; P @d_1)\mb{msg}(\mb{send}x_\eta^c y_\alpha^c;x_1^c \leftarrow y_\alpha^c) \mc{D}''_1 \mapsto_{\Delta \Vdash K} \mc{D}'_1\mb{proc}(x_1^c, [x^c_\eta/w][x_1^c/y^c_{\alpha}] P@c\sqcup d_1) \mc{D}''_1\]}

We consider sub-cases based on relevancy of process offering along $y^{c}_\alpha$.
\begin{description}
\item {\color{Aquamarine}\bf Subcase 1. $\mb{proc}(y^{c}_\alpha,  w \leftarrow \mb{recv} y^c_\alpha; P @d_1)$ is not relevant.} By definition either $d_1 \sqcup c =c \not \sqsubseteq \xi $ or none of the channels connected to $P$ including $y_\alpha^c$ are relevant. In both cases the negative message $\mb{msg}(\mb{send}x_\eta^c y_\alpha^c; x_1^c \leftarrow y_\alpha^c)$  and the continuation process $\mb{proc}(x_1^c, [x^c_\eta/w][x_1^c/y^c_{\alpha}] P@c\sqcup d_1)$  are not relevant either. It is then straightforward to see that   
\[\cproj{\mc{D}'_1\mb{proc}(x_1^c, [x^c_\eta/w][x_1^c/y^c_{\alpha}] P@c\sqcup d_1) \mc{D}''_1}{\xi} \meq \cproj{\mc{D}'_1 \mc{D}''_1}{\xi}\meq \cproj{\mc{D}_1}{\xi}\meq \cproj{\mc{D}_2}{\xi}.\]

\item {\color{Aquamarine}\bf Subcase 2. $\mb{proc}(y^{c}_\alpha,  w \leftarrow \mb{recv} y^c_\alpha; P @d_1)$ is relevant.}
By definition of relevancy, we get that $c \sqcup d_1=c\sqsubseteq \xi$ and $y^c_\alpha$ is relevant. This means that $\mb{msg}(\mb{send} x_\eta^c\, y_\alpha^c; x_1^c \leftarrow y_\alpha^c)$ and the channel and $x^c_\eta$ are relevant. By assumption that $\cproj{\mc{D}_1}{\xi}\meq\cproj{\mc{D}_2}{\xi}$, and definition of $\meq$:
\[\mc{D}_2=\mc{D}'_2\mb{proc}(y^{c}_\alpha,    w \leftarrow \mb{recv} y^c_\alpha; P_{\mid \xi} @d_1)\mb{msg}(\mb{send} x_\eta^c\, y_\alpha^c; x_1^c \leftarrow y_\alpha^c)  \mc{D}''_2,\] such that ${P}_{\mid \xi}$ is equal to $P$ modulo renaming of some channels with secrecy level higher than or incomparable to the observer.

We have \[\lr{\mc{C}_2}{\mc{D}_2}{\mc{F}_2} \mapsto_{\Delta \Vdash K}\lr{\mc{C}_2}{\mc{D}'_2\mb{proc}(x_1^c, [x^c_\eta/w][x_1^c/y^c_{\alpha}] {P}_{\mid \xi} @d_1 \sqcup c) \mc{D}''_2}{\mc{F}_2}\]

The proof is similar to previous cases.

\end{description}

{\color{ForestGreen} \bf Case 9. $\mc{D}_1=\mc{D}'_1\mb{proc}(y^c_\alpha, y^c_\alpha\leftarrow x^c_\beta @d_1) \mc{D}''_1$ and
\[ \mc{D}'_1\mb{proc}(y^c_\alpha, y^c_\alpha\leftarrow x^c_\beta @d_1)  \mc{D}''_1 \mapsto_{\Delta \Vdash K} \mc{D}'_1 [x^c_{\beta}/y^c_{\alpha}]\mc{D}''_1\]}
We consider sub-cases based on relevancy of process offering along $y^c_\alpha$. Observe that $y^c_\alpha$ is relevant if and only if $x^c_\beta$ is relevant.
\begin{description}
\item {\color{Aquamarine}\bf Subcase 1. $\mb{proc}(y^c_\alpha, y^c_\alpha\leftarrow x^c_\beta @d_1)$ is not relevant.} By definition either $c \not \sqsubseteq \xi $ or none of the channels $y_\alpha^c$ and $x_\beta^c$ are relevant.  In both cases, it means that $y_\alpha^c$ and $x_\beta^c$ are not relevant. As a result, we can safely make the substitution $[x^c_\beta/y^c_{\alpha}]$ in $\mc{D}''_1$. If $c \not \sqsubseteq \xi$, it is only the matter of renaming  channels with secrecy level higher than or incomparable to the observer level. And if $c \sqsubseteq \xi$, then $y_\alpha^c$ does not occur in any relevant process and we can rename it to $x_\beta^c$. 

Moreover, deleting $\mb{proc}(y^c_\alpha, y^c_\alpha\leftarrow x^c_\beta @d_1)$ from the configuration does not decrease the quasi-running secrecy of any of its (grand)children since by the tree invariant $d_1 \sqsubseteq c$. In particular, a chain of positive messages offered along $x^c_\beta$, has its minimum quasi running secrecy of $d_1\sqsubseteq c=c$. Removing $\mb{proc}(y^c_\alpha, y^c_\alpha\leftarrow x^c_\beta @d_1)$ may increase the quasi running secrecy of such chain of messages. This case is only of significance if $c\sqsubseteq \xi$. By the assumption of subcase, we know that $x_\beta^c$ is irrelevant, and thus the chain of messages have to be irrelevant in the pre-state.

\[\cproj{\mc{D}'_1\mb{proc}(y^c_\alpha, y^c_\alpha\leftarrow x^c_\beta @d_1) \mc{D}''_1}{\xi} \meq \cproj{\mc{D}'_1 [x^c_\beta/y^c_{\alpha}]\mc{D}''_1}{\xi}\meq \cproj{\mc{D}_1}{\xi}\meq \cproj{\mc{D}_2}{\xi}\]

\item {\color{Aquamarine}\bf Subcase 2. $\mb{proc}(y^c_\alpha, y^c_\alpha\leftarrow x^c_\beta @d_1)$ is relevant.}
By definition of relevancy, we get that $c \sqsubseteq \xi$ and both $y^c_\alpha$ and $x^c_\beta$ are relevant. By assumption that $\cproj{\mc{D}_1}{\xi}\meq\cproj{\mc{D}_2}{\xi}$, and definition of $\meq$: \[\mc{D}_2=\mc{D}'_2\mb{proc}(y^c_\alpha, y^c_\alpha\leftarrow x^c_\beta @d_1) \mc{D}''_2.\] 

We have \[\lr{\mc{C}_2}{\mc{D}_2}{\mc{F}_2} \mapsto_{\Delta \Vdash K}\lr{\mc{C}_2}{\mc{D}'_2 [x^c_\beta/y^c_\beta]\mc{D}''_2}{\mc{F}_2}\]
With the same reasoning as in  {\color{Aquamarine}\bf Subcase 1.}, the quasi running secrecy of a chain of messages offered along $x^c_\beta$ is $c$ and does not decrease after deleting the process. However, it may increase based on the quasi-running secrecy of the parents of  $\mb{proc}(y^c_\alpha, y^c_\alpha\leftarrow x^c_\beta @d_1)$ in $\mc{D}''_i$. If the parent in one run has a quasi running secrecy lower than or equal to the observer level, then it has to be relevant and thus there is a counterpart in the other run with a  quasi running secrecy lower than or equal to the observer level. Thus in both runs the relevancy of the chain of messages does not change in the post-step. If both of the parents have running secrecy higher than or incomparable to the observer level, then the same chain of messages become irrelevant in both runs.
\end{description}

{\color{ForestGreen} \bf Case 10. $\mc{D}_1=\mc{D}'_1\mb{proc}(y^c_\alpha, (x^d \leftarrow P) @d_2; Q @d_1)\mc{D}''_1$ and
\[ \mc{D}'_1\mb{proc}(y^c_\alpha, (x^d \leftarrow P)@d_2; Q @d_1)  \mc{D}''_1 \mapsto_{\Delta \Vdash K} \mc{D}'_1 \mb{proc}(a^d_0, [a^d_0/x^d] P @d_2)\mb{proc}(y^c_\alpha, [a^d_0/x^d] Q @d_1) \mc{D}''_1\]}
We consider sub-cases based on relevancy of process offering along $y^c_\alpha$.
\begin{description}
\item {\color{Aquamarine}\bf Subcase 1. $\mb{proc}(y^c_\alpha, (x^d \leftarrow P)@d_2; Q @d_1)$ is not relevant.} By definition either $d_1 \not \sqsubseteq \xi $ or none of the channels of this process including $y_\alpha^c$ are relevant.  

In both cases, it means that both $\mb{proc}(a^d_0, [a^d_0/x^d] P @d_2)$ and $\mb{proc}(y^c_\alpha, [a^d_0/x^d] Q @d_1)$ are not relevant either. Note that $d_1 \sqsubseteq d_2$ and thus $d_2 \not \sqsubseteq \xi$.

\[\cproj{\mc{D}'_1\mb{proc}(a^d_0, [a^d_0/x^d] P @d_2)\mb{proc}(y^c_\alpha, [a^d_0/x^d] Q @d_1) \mc{D}''_1}{\xi} \meq \cproj{\mc{D}'_1 \mc{D}''_1}{\xi}\meq \cproj{\mc{D}_1}{\xi}\meq \cproj{\mc{D}_2}{\xi}\]

\item {\color{Aquamarine}\bf Subcase 2. $\mb{proc}(y^c_\alpha, (x^d \leftarrow P)@d_2; Q @d_1)$ is relevant.}
By definition of relevancy, we get that $d_1 \sqsubseteq \xi$ and all  channels  of this process with secrecy levels lower than or equal to the observer level are relevant. By assumption that $\cproj{\mc{D}_1}{\xi}\meq\cproj{\mc{D}_2}{\xi}$, and definition of $\meq$: \[\mc{D}_2=\mc{D}'_2\mb{proc}(z^c_\delta, (x^d \leftarrow P_{\mid \xi})@d_2; Q_{\mid \xi} @d_1) \mc{D}''_2.\] 

We have \[\lr{\mc{C}_2}{\mc{D}_2}{\mc{F}_2} \mapsto_{\Delta \Vdash K}\lr{\mc{C}_2}{\mc{D}'_2 \mb{proc}(a^d_0, [a^d_0/x^d] P_{\mid\xi} @d_2)\mb{proc}(z^c_\delta, [a^d_0/x^d] Q_{\mid\xi} @d_1)\mc{D}''_2}{\mc{F}_2}\]
Remark:we can assume that the fresh channel being spawned will be $a$ in both runs.

Note that if $a^d_0$ has secrecy level lower than or equal to the observer level, then taking this step won't change relevancy of any channels. Otherwise some resource of the process may become irrelevant after this step since $a^d_0$ may block their relevancy path or in the case where $d_2\not \sqsubseteq \xi$ the process becomes irrelevant. But this happens to the processes in the both runs. (similar to the cases 3 and 4 for $\otimes$ and $\multimap$)
\end{description}
\end{proof}

\lemref{buildmn} indicates how two related configurations ${(\mc{B}_1, \mc{B}_2) \in \mc{E}_{\Psi}^\xi\llbracket \Delta \Vdash K \rrbracket}$ can be broken down into  $\lre{\mc{C}_i}{\mc{D}_i}{\mc{F}_i}$ such that $\ctype{\Psi}{\cdot}{\mc{C}_i}{\Delta}$, and $\ctype{\Psi}{\Delta}{\mc{D}_i}{K}$, and $\ctype{\Psi}{K}{\mc{F}_i}{\cdot}$.  \figref{fig:meganode_formation} illustrates interesting key cases, indicating that trees rooted at non-observable channels are internalized into $\mc{D}_i$.

\begin{lemma}[Build a Meganode] \label{buildmn}
Consider $\mc{C}_1\mc{D}_1{\mc{F}_1}$, and ${\mc{C}_2}{\mc{D}_2}{\mc{F}_2}$ such that  $\Psi; \cdot \Vdash \mc{C}_1 :: \Delta_1$ and 
$\Psi; \cdot \Vdash \mc{C}_2 :: \Delta_2$, and $\Psi; \Delta_1 \Vdash \mc{D}_1 :: x^c_{\alpha}{:}A$, and
$\Psi; \Delta_2 \Vdash \mc{D}_2 :: y^d_{\beta}{:}B$, and $\Psi; x^c_{\alpha}{:}A\Vdash \mc{F}_1 :: \cdot$, and
$\Psi; y^d_{\beta}{:}B \Vdash \mc{F}_2 :: \cdot$.

If $\cproj{\Delta_1}{\xi}=\cproj{\Delta_2}{\xi}=\Delta$ and $\cproj{x^c_{\alpha}{:}A}{\xi}=\cproj{y^d_{\beta}{:}B}{\xi}=K$, then we can rewrite  $\mc{C}_i\mc{D}_i\mc{F}_i$ as
$\mc{C}'_i\mc{D}'_i\mc{F}'_i$    such that \[{(\lr{\mc{C}'_1}{\mc{D}'_1}{\mc{F}'_1}; \lr{\mc{C}'_2}{\mc{D}'_2}{\mc{F}'_2}) \in \m{Tree}_\Psi(\Delta \Vdash K)}.\] Moreover, if $\cproj{\mc{D}_1}{\xi}=\cproj{\mc{D}_2}{\xi}$, then $\cproj{\mc{D}'_1}{\xi}=\cproj{\mc{D}'_2}{\xi}$.
\end{lemma}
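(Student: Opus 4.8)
The plan is to re-cut the closed configuration $\mc{B}_i = \mc{C}_i\mc{D}_i\mc{F}_i$ along the observable channels, internalizing every non-observable tree into the middle component. Write $\Delta_i = \Delta, \Delta_i^{\mathsf{n}}$, where $\Delta$ collects the observable roots (so $\cproj{\Delta_i}{\xi} = \Delta$) and $\Delta_i^{\mathsf{n}}$ the non-observable ones (maximal secrecy $\not\sqsubseteq \xi$). Since $\mc{C}_i$ provides $\Delta_i$ from the empty context, it is a forest whose roots are exactly the channels of $\Delta_i$; let $\mc{C}^{\mathsf{o}}_i$ and $\mc{C}^{\mathsf{n}}_i$ be the sub-forests rooted at $\Delta$ and at $\Delta_i^{\mathsf{n}}$, respectively. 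By the tree invariant, every tree rooted at an observable channel is entirely observable, so $\cdot \Vdash \mc{C}^{\mathsf{o}}_i :: \Delta$, whereas a non-observable tree may contain observable internal channels but its root always remains non-observable.

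First I would construct the new decomposition. Using the Permutation of Configurations lemma I reorder $\mc{B}_i$ so that $\mc{C}^{\mathsf{n}}_i$ sits immediately below $\mc{D}_i$; this is legitimate because the channels of $\Delta_i^{\mathsf{n}}$ are consumed only by $\mc{D}_i$. Repeated application of Lemma~\ref{lem:inv} then yields $\cdot \Vdash \mc{C}^{\mathsf{o}}_i :: \Delta$, $\cdot \Vdash \mc{C}^{\mathsf{n}}_i :: \Delta_i^{\mathsf{n}}$, and $\Delta, \Delta_i^{\mathsf{n}} \Vdash \mc{D}_i :: x_\alpha{:}A$. I then set $\mc{C}'_i := \mc{C}^{\mathsf{o}}_i$ and split on whether the offering channel is observable. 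If $\cproj{x_\alpha{:}A}{\xi} = K = x_\alpha{:}A$, I take $\mc{D}'_i := \mc{C}^{\mathsf{n}}_i\mc{D}_i$ and $\mc{F}'_i := \mc{F}_i$; the configuration typing rules give $\Delta \Vdash \mc{D}'_i :: K$, while $\mc{F}_i$ already has type $K \Vdash \mc{F}_i :: \cdot$. If instead $\cproj{x_\alpha{:}A}{\xi} = K = \cdot$, I additionally absorb $\mc{F}_i$, setting $\mc{D}'_i := \mc{C}^{\mathsf{n}}_i\mc{D}_i\mc{F}_i$ and $\mc{F}'_i := \cdot$; here $\mc{C}^{\mathsf{n}}_i\mc{D}_i$ provides $x_\alpha{:}A$ and $\mc{F}_i$ consumes it, so the composite has type $\Delta \Vdash \mc{D}'_i :: \cdot$, and $\mc{F}'_i$ is typed by $\mathbf{emp}_2$. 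In either case the three typing obligations of $\m{Tree}_\Psi(\Delta \Vdash K)$ (line~15 of \figref{fig:def_lr}) are met, establishing the first claim.

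The real work is the \emph{moreover} clause, for which I would show that internalizing $\mc{C}^{\mathsf{n}}_i$ (and, in the second case, $\mc{F}_i$) adds nothing relevant. The key is a firewall argument: every channel of $\Delta_i^{\mathsf{n}}$ is non-observable and is the unique link between the interior of its tree in $\mc{C}^{\mathsf{n}}_i$ and the rest of $\mc{D}'_i$. Since relevancy (Definition~\ref{def:relevant}) is the closure of the observable channels under sharing a low-quasi-secrecy node with an already-relevant channel, and since a channel of maximal secrecy $\not\sqsubseteq \xi$ can never be added to this closure, no relevancy path can cross a root of $\Delta_i^{\mathsf{n}}$ into the interior of a non-observable tree; I would make this precise by induction on the closure construction, the only subtle point being sibling sharing through a node of $\mc{D}_i$, which is harmless because the shared non-observable root still fails the maximal-secrecy side condition. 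Hence $\cproj{\mc{C}^{\mathsf{n}}_i}{\xi}$ is empty. The same reasoning applied to $\mc{F}_i$ in the non-observable case uses the tree invariant to conclude that every ancestor of $x_\alpha{:}A$ again has maximal secrecy $\not\sqsubseteq \xi$, so $\cproj{\mc{F}_i}{\xi}$ is empty too. Finally, because the observable seeds $\Delta$ and $K$ and the quasi running secrecies of the nodes of $\mc{D}_i$ are exactly those used when computing $\cproj{\mc{D}_i}{\xi}$ in isolation, the relevancy status of every node of $\mc{D}_i$ is unchanged, so $\cproj{\mc{D}'_i}{\xi} = \cproj{\mc{D}_i}{\xi}$; the hypothesis $\cproj{\mc{D}_1}{\xi} = \cproj{\mc{D}_2}{\xi}$ then gives $\cproj{\mc{D}'_1}{\xi} = \cproj{\mc{D}'_2}{\xi}$.

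The main obstacle I anticipate is the firewall argument together with the accompanying claim that the quasi running secrecies along the $\mc{D}_i$-boundary are unaffected by the internalization. The positive-message case of Definition~\ref{def:quasi}, whose quasi running secrecy depends on the parent, requires checking that the newly attached parent (in $\mc{F}_i$, non-observable case) does not pull a quasi running secrecy down below $\xi$ — which holds precisely because the boundary channel $x_\alpha{:}A$ is non-observable, forcing $c \not\sqsubseteq \xi$ and hence a quasi running secrecy $\not\sqsubseteq \xi$ regardless of the parent. The configuration-surgery steps, by contrast, are routine given Lemma~\ref{lem:inv} and the permutation lemma.
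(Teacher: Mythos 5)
Your proposal is correct and follows essentially the same route as the paper's proof: split $\mc{C}_i$ into the forests rooted at observable and non-observable channels, internalize the non-observable forest (and, when the offered channel is non-observable, also $\mc{F}_i$) into the middle component, and justify the \emph{moreover} clause by the firewall argument that the internalized trees can contribute nothing relevant, since the boundary channels have maximal secrecy $\not\sqsubseteq \xi$ and a positive message along such a channel keeps quasi running secrecy $\not\sqsubseteq \xi$ regardless of its new parent. The only cosmetic difference is that the paper dispatches the observable-$K$ case immediately — by the tree invariant $\Delta_1=\Delta_2=\Delta$, so no rewrite is needed — whereas your uniform construction covers that case with $\mc{C}^{\mathsf{n}}_i$ forced empty for the same reason.
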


\begin{proof}
We break down the proof into the following cases based on the structure of $K$ and $\Delta$.
\begin{description}
\item {\bf Case 1.} $\cproj{x^c_{\alpha}{:}A}{\xi}=\cproj{y^d_{\beta}{:}B}{\xi}=K \neq \cdot$. This means that $x^c_{\alpha}{:}A=y^d_{\beta}{:}B$ and $c\sqsubseteq \xi$. By the tree invariant, we know that $\Delta_1=\Delta_2=\Delta.$ So without any rewrite configurations $\mc{C}_i\mc{D}_i\mc{F}_i$ satisfy the properties that we are looking for.

\item {\bf Case 2.} $\cproj{x^c_{\alpha}{:}A}{\xi}=\cproj{y^d_{\beta}{:}B}{\xi}=\cdot$, and $\Delta_1= \Lambda_1, \Delta$, and $\Delta_2=\Lambda_2, \Delta$. By typing of configurations we have $\mc{C}_i= \mc{T}_i\mc{T}'_i$ such that $\Psi; \cdot \Vdash \mc{T}_1 :: \Delta$, and 
$\Psi; \cdot \Vdash \mc{T}'_1 :: \Lambda_1$, and $\Psi; \cdot \Vdash \mc{T}_2 :: \Delta$, and $\Psi; \cdot \Vdash \mc{T}'_2 :: \Lambda_2$.

From the definition of projections, we have $c \not \sqsubseteq \xi$ and $d \not \sqsubseteq \xi$ and for every $u^{c'}_\gamma\in \Lambda_1, \Lambda_2$, we know that $c' \not \sqsubseteq \xi$.
We build $\mc{D}'_i= \mc{T}'_i\mc{D}_i\mc{F}_i$, and $\mc{C}'_i=\mc{T}_i$, and $\mc{F}'_i=\cdot$. By the typing rules, we know that $\Psi; \cdot \Vdash \mc{F}'_i ::\cdot$ and $\Psi; \cdot \Vdash \mc{T}_i ::\Delta$ and $\Psi; \Delta\Vdash  \mc{D}'_i:: \cdot$ as we need to establish 

\[{(\lr{\mc{C}'_1}{\mc{D}'_1}{\mc{F}'_1}; \lr{\mc{C}'_2}{\mc{D}'_2}{\mc{F}'_2}) \in \m{Tree}_\Psi(\Delta \Vdash K)}.\]

Moreover, by the definition of projections, since  $c \not \sqsubseteq \xi$ and $d \not \sqsubseteq \xi$, we know that none of the processes in $\mc{F}_i$ will be relevant in $\mc{D}'_i$. Also, adding $\mc{F}_1$ as a parent of the process/message offering along  $x^c_\alpha$, does not switch relevancy of any process in $\mc{D}_1$, since we already know that $c \not \sqsubseteq \xi$. In particular if we have a message offering along $x^c_\alpha$, its quasi running secrecy is higher than or incomparable to the observer level before adding $\mc{F}_1$ as its parent and will stay higher than or incomparable to the observer level after too. The same reasoning goes with $\mc{F}_2$ and $y^c_{\beta}$. 

Similarly, since $c' \not \sqsubseteq \xi$ for every $u^{c'}_\gamma\in \Lambda_1, \Lambda_2$, and the fact that $\mc{T}'_i$ does not use any resources in $\Delta$, we know that all trees in $\mc{T}'_i$ are irrelevant. As a result, we have $\cproj{\mc{D}'_1}{\xi}= \cproj{\mc{D}_1}{\xi} = \cproj{\mc{D}_2}{\xi}= \cproj{\mc{D}'_2}{\xi}$.

\end{description}

\end{proof}

\begin{figure}
\begin{center}
\includegraphics[scale=0.53]{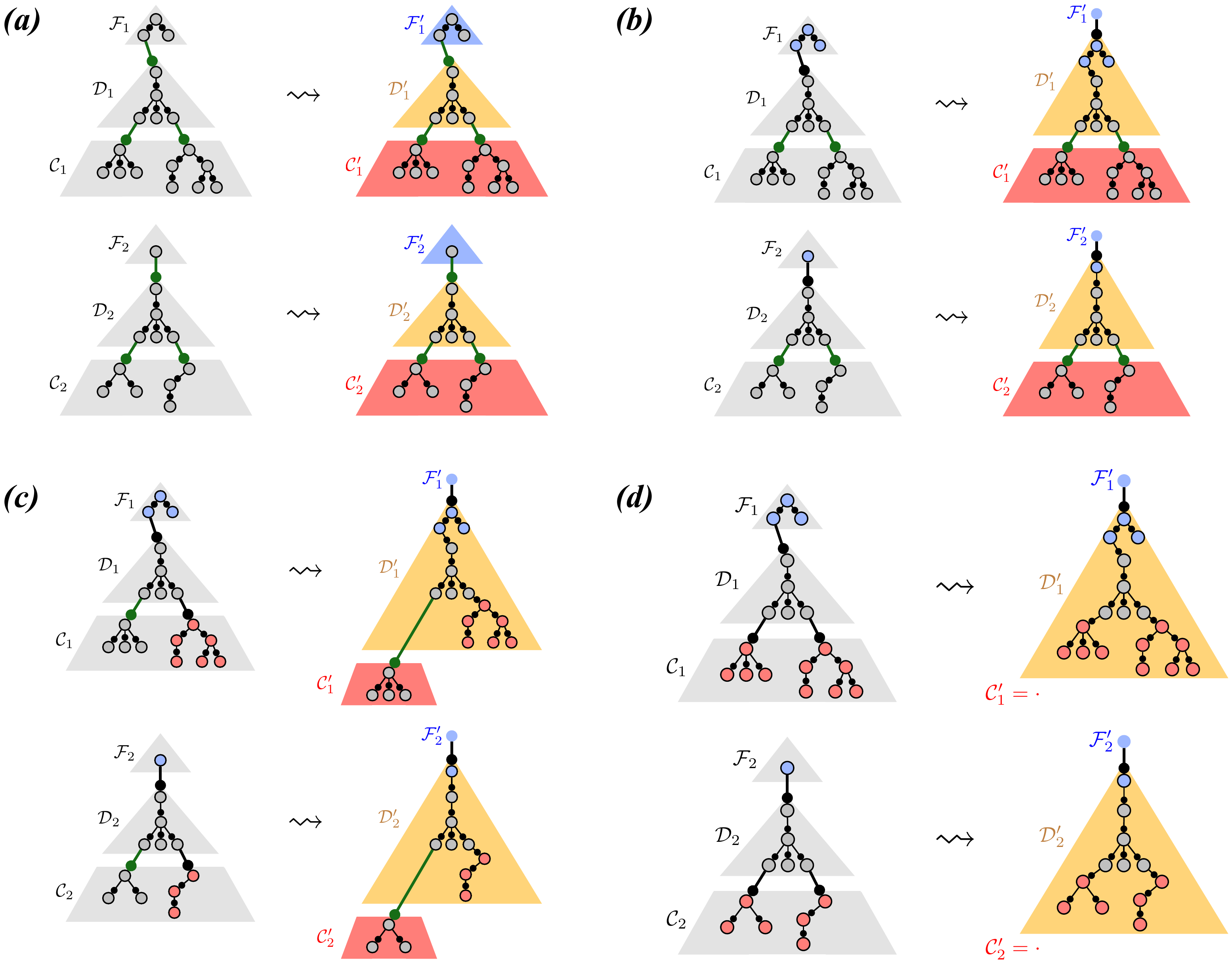}
\caption{Schematic illustration of \lemref{buildmn}.  Observable channels are depicted in green.}
\label{fig:meganode_formation}
\end{center}
\end{figure}

\begin{lemma}[Diamond Property]\label{lem:diamond}
If $\lr{\mc{C}'_1}{\mc{D}'_1}{\mc{F}'_1}\in \m{Tree}(\Delta \Vdash K)$ and  $\lr{\mc{C}_1}{\mc{D}_1}{\mc{F}_1}\mapsto_{\Delta \Vdash K}\lr{\mc{C}'_1}{\mc{D}'_1}{\mc{F}'_1}$ and $\lr{\mc{C}_1}{\mc{D}_1}{\mc{F}_1} \mapsto_{\Delta \Vdash K} \lr{\mc{C}''_1}{\mc{D}''_1}{\mc{F}''_1}$, and $\lr{\mc{C}'_1}{\mc{D}'_1}{\mc{F}'_1} \neq \lr{\mc{C}''_1}{\mc{D}''_1}{\mc{F}''_1}$ then there is a configuration $\lr{\mc{C}}{\mc{D}}{\mc{F}}$ such that 
$\lr{\mc{C}'_1}{\mc{D}'_1}{\mc{F}'_1} \mapsto_{\Delta \Vdash K} \lr{\mc{C}}{\mc{D}}{\mc{F}}$, and $\lr{\mc{C}''_1}{\mc{D}''_1}{\mc{F}''_1} \mapsto_{\Delta \Vdash K} \lr{\mc{C}}{\mc{D}}{\mc{F}}$
\end{lemma}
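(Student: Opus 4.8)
The plan is to prove local confluence by a case analysis on the two reduction steps, exploiting the fact that the asynchronous dynamics of \secref{sec:type-system} is a \emph{local} rewriting system and that linearity forbids genuine contention over any single channel. First I would reduce the claim to the diamond property for the single-configuration reduction relation $\mapsto$: by the definition of $\mapsto_{\Delta \Vdash K}$ on triples, a step of $\lr{\mc{C}_1}{\mc{D}_1}{\mc{F}_1}$ decomposes into a reduction internal to one of the components $\mc{C}_1$, $\mc{D}_1$, $\mc{F}_1$, which are connected only across the (poised) interface $\Delta$, $K$ and hence never communicate with one another during stepping. Two steps touching different components thus commute outright, each producing the same $\lr{\mc{C}}{\mc{D}}{\mc{F}}$. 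It therefore suffices to treat two steps within the same component, and I would proceed by case analysis on the pair of redexes $(R_1,R_2)$ driving the two steps, where a redex is the unique process activated by a spawn, send, close, or forward rule, or the adjacent message/receiver pair activated by a receive rule.

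Second, for the bulk of cases the two redexes are \emph{node-disjoint}. Here the rewrites modify disjoint parts of the configuration, and by linearity the channels they manipulate cannot coincide, so the two rewrites commute: applying $R_2$ to the post-state of $R_1$ and $R_1$ to the post-state of $R_2$ produce the same configuration $\lr{\mc{C}}{\mc{D}}{\mc{F}}$. The only bookkeeping needed is to ensure that the fresh names introduced by $\m{Cut}$ (a channel at generation $0$) and by the sending rules (the incremented generation $y_{\alpha+1}$) do not clash; this is handled by choosing the fresh names consistently, so that the two derivations literally reconverge up to $\alpha$-renaming of the newly allocated channels.

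Third, I would show that two \emph{overlapping} redexes must in fact be identical, contradicting the hypothesis $\lr{\mc{C}'_1}{\mc{D}'_1}{\mc{F}'_1}\neq\lr{\mc{C}''_1}{\mc{D}''_1}{\mc{F}''_1}$, so these cases are vacuous. The two supporting facts are that a process term deterministically fixes a process's next action --- there is no internal nondeterminism, so a given process can be the active node of at most one step --- and that linearity makes each channel connect exactly one provider with exactly one client, so a message has a unique receiver and a blocked process waits on a unique channel. Consequently any shared node forces $R_1$ and $R_2$ to be the same redex performing the same rewrite.

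The step I expect to be the main obstacle is the forward rule $\m{fwd}$, which is the one reduction that is \emph{not} node-local: contracting $\mb{proc}(y_\alpha[c], (y_\alpha \leftarrow x_\beta)@d_1)$ performs a global substitution $[x_\beta/y_\alpha]$ over the remainder of the configuration, so it touches every node mentioning $y_\alpha$. I would argue separately that a forward commutes with any other step: since a forward neither sends nor receives, a second step involving $y_\alpha$ must be a send or receive performed by a distinct node, and applying $[x_\beta/y_\alpha]$ merely renames its carrier channel, so the two orders of execution reconverge after the substitution; for any step not mentioning $y_\alpha$ the substitution commutes trivially. Throughout, I would appeal to \thmref{thm:preservation} and the tree invariant to keep all intermediate configurations well-typed and to justify the uniqueness-of-endpoint reasoning used in the overlap cases.
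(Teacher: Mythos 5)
Your proposal is correct and matches the paper's proof in approach: the paper disposes of this lemma with a one-line ``straightforward by cases'' argument, and your case analysis on pairs of redexes --- disjoint steps commute, overlapping redexes are forced to coincide by linearity and the determinism of process terms --- is precisely the case analysis being alluded to. Your explicit treatment of the two points the paper leaves implicit (consistent choice of fresh/generation names for $\m{Cut}$ and the sending rules, and the non-local substitution performed by $\m{fwd}$) only makes the argument more complete than the paper's own.
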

\begin{proof}
The proof is straightforward by cases.
\end{proof}

Confluence of multistep reduction follows by two standard inductions from the diamond property.

\begin{lemma}[Forward closure]\label{lem:fwdC} If $(\lre{\mc{C}_1}{\mc{D}_1}{\mc{F}_1}, \lre{\mc{C}_2}{\mc{D}_2}{\mc{F}_2}) \in \mc{E}^\xi_\Psi \llbracket \Delta \Vdash K\rrbracket$, and $(\lr{\mc{C}_1}{\mc{D}_1}{\mc{F}_1}; \lr{\mc{C}_2}{\mc{D}_2}{\mc{F}_2})\in \m{Tree}_{\Psi}(\Delta \Vdash K)$, and $\lr{\mc{C}_1}{\mc{D}_1}{\mc{F}_1} \mapsto_{\Delta \Vdash K}\lr{\mc{C}'_1}{\mc{D}'_1}{\mc{F}'_1}$\, then 
$(\lre{\mc{C}'_1}{\mc{D}'_1}{\mc{F}'_1}, \lre{\mc{C}_2}{\mc{D}_2}{\mc{F}_2}) \in \mc{E}^\xi_\Psi \llbracket \Delta \Vdash K\rrbracket$.
\end{lemma}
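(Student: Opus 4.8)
The plan is to exploit that the poised reduct appearing in clause (14) of \figref{fig:def_lr} (the definition of $\mc{E}$) is \emph{uniquely} determined, so that a single forward step in the first component simply lands on a configuration reducing to the \emph{same} poised normal form, leaving the value-relation witness untouched. First I would unfold the hypothesis $(\lre{\mc{C}_1}{\mc{D}_1}{\mc{F}_1}, \lre{\mc{C}_2}{\mc{D}_2}{\mc{F}_2}) \in \mc{E}^\xi_\Psi \llbracket \Delta \Vdash K\rrbracket$ to obtain poised witnesses $\lr{\hat{\mc{C}}_1}{\hat{\mc{D}}_1}{\hat{\mc{F}}_1}$ and $\lr{\hat{\mc{C}}_2}{\hat{\mc{D}}_2}{\hat{\mc{F}}_2}$ with $\lr{\mc{C}_1}{\mc{D}_1}{\mc{F}_1} \mapsto^{\m{poised}}_{\Delta \Vdash K} \lr{\hat{\mc{C}}_1}{\hat{\mc{D}}_1}{\hat{\mc{F}}_1}$ and $\lr{\mc{C}_2}{\mc{D}_2}{\mc{F}_2} \mapsto^{\m{poised}}_{\Delta \Vdash K} \lr{\hat{\mc{C}}_2}{\hat{\mc{D}}_2}{\hat{\mc{F}}_2}$, together with $(\lr{\hat{\mc{C}}_1}{\hat{\mc{D}}_1}{\hat{\mc{F}}_1}; \lr{\hat{\mc{C}}_2}{\hat{\mc{D}}_2}{\hat{\mc{F}}_2}) \in \mc{V}^\xi_\Psi \llbracket \Delta \Vdash K\rrbracket$.

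Next I would discharge the tree-decomposition premise required of the conclusion. The step $\lr{\mc{C}_1}{\mc{D}_1}{\mc{F}_1} \mapsto_{\Delta \Vdash K} \lr{\mc{C}'_1}{\mc{D}'_1}{\mc{F}'_1}$ is, by definition of $\mapsto_{\Delta \Vdash K}$, an internal step in exactly one of the three meganodes (it never fires along a channel of $\Delta$ or $K$). Hence \thmref{thm:preservation} applies to that meganode and preserves its typing; combined with the hypothesis $(\lr{\mc{C}_1}{\mc{D}_1}{\mc{F}_1}; \lr{\mc{C}_2}{\mc{D}_2}{\mc{F}_2})\in \m{Tree}_{\Psi}(\Delta \Vdash K)$ this yields $(\lr{\mc{C}'_1}{\mc{D}'_1}{\mc{F}'_1}; \lr{\mc{C}_2}{\mc{D}_2}{\mc{F}_2})\in \m{Tree}_{\Psi}(\Delta \Vdash K)$. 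The same appeal to preservation shows that every configuration on the reduction path from $\lr{\mc{C}_1}{\mc{D}_1}{\mc{F}_1}$ stays in $\m{Tree}_{\Psi}(\Delta \Vdash K)$, so that the diamond property is applicable at each step along the way.

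The heart of the argument is confluence. A poised triple cannot take any $\mapsto_{\Delta \Vdash K}$ step, i.e.\ it is a normal form. From $\lr{\mc{C}_1}{\mc{D}_1}{\mc{F}_1} \mapsto^*_{\Delta \Vdash K} \lr{\hat{\mc{C}}_1}{\hat{\mc{D}}_1}{\hat{\mc{F}}_1}$ and the single step $\lr{\mc{C}_1}{\mc{D}_1}{\mc{F}_1} \mapsto_{\Delta \Vdash K} \lr{\mc{C}'_1}{\mc{D}'_1}{\mc{F}'_1}$, confluence of $\mapsto^*_{\Delta \Vdash K}$ (obtained from the diamond property of \lemref{lem:diamond} by the standard tiling induction noted after that lemma) provides a common reduct of the two. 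Since $\lr{\hat{\mc{C}}_1}{\hat{\mc{D}}_1}{\hat{\mc{F}}_1}$ is irreducible, this common reduct must be $\lr{\hat{\mc{C}}_1}{\hat{\mc{D}}_1}{\hat{\mc{F}}_1}$ itself, whence $\lr{\mc{C}'_1}{\mc{D}'_1}{\mc{F}'_1} \mapsto^{\m{poised}}_{\Delta \Vdash K} \lr{\hat{\mc{C}}_1}{\hat{\mc{D}}_1}{\hat{\mc{F}}_1}$: the stepped configuration reaches the very same poised normal form as the original.

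Finally, the second component $\lre{\mc{C}_2}{\mc{D}_2}{\mc{F}_2}$ is untouched and still reduces to its poised witness $\lr{\hat{\mc{C}}_2}{\hat{\mc{D}}_2}{\hat{\mc{F}}_2}$, and the value-relation membership $(\lr{\hat{\mc{C}}_1}{\hat{\mc{D}}_1}{\hat{\mc{F}}_1}; \lr{\hat{\mc{C}}_2}{\hat{\mc{D}}_2}{\hat{\mc{F}}_2}) \in \mc{V}^\xi_\Psi\llbracket \Delta \Vdash K\rrbracket$ is exactly the witness already supplied. All three clauses of the definition of $\mc{E}$ are thereby met, giving $(\lre{\mc{C}'_1}{\mc{D}'_1}{\mc{F}'_1}, \lre{\mc{C}_2}{\mc{D}_2}{\mc{F}_2}) \in \mc{E}^\xi_\Psi \llbracket \Delta \Vdash K\rrbracket$. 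I expect the only delicate point to be the justification that poised triples are genuinely irreducible normal forms for $\mapsto_{\Delta \Vdash K}$ — so that confluence pins the poised reduct down uniquely; this rests on the definition of poisedness together with the fact that $\mapsto_{\Delta \Vdash K}$ performs only steps internal to the three meganodes and never communicates across the observable interface $\Delta$ or $K$.
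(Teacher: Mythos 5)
Your proof is correct and takes essentially the same approach as the paper's: unfold the definition of $\mc{E}^\xi_\Psi$ to obtain the poised witnesses and their $\mc{V}^\xi_\Psi$ membership, apply confluence (derived from the diamond property of \lemref{lem:diamond}) together with the fact that poised configurations are irreducible to conclude that $\lr{\mc{C}'_1}{\mc{D}'_1}{\mc{F}'_1}$ reaches the very same poised reduct, and invoke \thmref{thm:preservation} for the tree-typing premise. The paper's proof is terser, citing confluence and preservation directly, but your explicit argument that the common reduct must coincide with the poised witness is precisely what its appeal to confluence implicitly relies on.
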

\begin{proof}
By definition of $\mc{E}^\xi_\Psi$, we know that $(\lr{\mc{C}_1}{\mc{D}_1}{\mc{F}_1}; \lr{\mc{C}_2}{\mc{D}_2}{\mc{F}_2})\in \m{Tree}_{\Psi}(\Delta \Vdash K),$ and
  \[\lr{\mc{C}_1}{\mc{D}_1}{\mc{F}_1}\mapsto^*_{\Delta \Vdash K} \lr{\mc{C}''_1}{\mc{D}''_1}{\mc{F}''_1} \,\m{and}\,  \lr{\mc{C}_2}{\mc{D}_2}{\mc{F}_2}\mapsto^*_{\Delta \Vdash K} \lr{\mc{C}''_2}{\mc{D}''_2}{\mc{F}''_2}.\]
 
 By confluence, we have 
   \[\star \; \lr{\mc{C}'_1}{\mc{D}'_1}{\mc{F}'_1}\mapsto^*_{\Delta \Vdash K} \lr{\mc{C}''_1}{\mc{D}''_1}{\mc{F}''_1}.\]
   
By \thmref{thm:preservation}, we get
   \[\star'\; (\lr{\mc{C}'_1}{\mc{D}'_1}{\mc{F}'_1}; \lr{\mc{C}_2}{\mc{D}_2}{\mc{F}_2})\in \m{Tree}_{\Psi}(\Delta \Vdash K).\]
$\star$ and $\star'$ completes the proof.
\end{proof}

\begin{lemma}[Backward Closure]\label{lem:backC} If $(\lre{\mc{C}_1}{\mc{D}_1}{\mc{F}_1}, \lre{\mc{C}_2}{\mc{D}_2}{\mc{F}_2}) \in \mc{E}^\xi_\Psi \llbracket \Delta \Vdash K\rrbracket$, and $(\lr{\mc{C}'_1}{\mc{D}'_1}{\mc{F}'_1}; \lr{\mc{C}_2}{\mc{D}_2}{\mc{F}_2})\in \m{Tree}_{\Psi}(\Delta \Vdash K)$, and $\lr{\mc{C}'_1}{\mc{D}'_1}{\mc{F}'_1} \mapsto_{\Delta \Vdash K}\lr{\mc{C}_1}{\mc{D}_1}{\mc{F}_1}$\, then 
$(\lre{\mc{C}'_1}{\mc{D}'_1}{\mc{F}'_1}, \lre{\mc{C}_2}{\mc{D}_2}{\mc{F}_2}) \in \mc{E}^\xi_\Psi \llbracket \Delta \Vdash K \rrbracket$.
\end{lemma}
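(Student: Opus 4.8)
The plan is to unfold the definition of $\mc{E}^\xi_\Psi$ (\figref{fig:def_lr}, line 14) on the first hypothesis, extract the two poised reductions and the value witness they yield, and then simply \emph{prepend} the single given step to the side-one reduction so that the pre-step configuration reaches the very same poised form. Concretely, from $(\lre{\mc{C}_1}{\mc{D}_1}{\mc{F}_1}, \lre{\mc{C}_2}{\mc{D}_2}{\mc{F}_2}) \in \mc{E}^\xi_\Psi\llbracket \Delta \Vdash K\rrbracket$ I obtain poised configurations $\lr{\mc{C}''_1}{\mc{D}''_1}{\mc{F}''_1}$ and $\lr{\mc{C}''_2}{\mc{D}''_2}{\mc{F}''_2}$ with $\lr{\mc{C}_1}{\mc{D}_1}{\mc{F}_1}\mapsto^{\m{poised}}_{\Delta\Vdash K}\lr{\mc{C}''_1}{\mc{D}''_1}{\mc{F}''_1}$ and $\lr{\mc{C}_2}{\mc{D}_2}{\mc{F}_2}\mapsto^{\m{poised}}_{\Delta\Vdash K}\lr{\mc{C}''_2}{\mc{D}''_2}{\mc{F}''_2}$, together with $(\lr{\mc{C}''_1}{\mc{D}''_1}{\mc{F}''_1};\lr{\mc{C}''_2}{\mc{D}''_2}{\mc{F}''_2}) \in \mc{V}^\xi_\Psi\llbracket\Delta\Vdash K\rrbracket$.

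Next I would combine the third hypothesis with the side-one reduction: since $\lr{\mc{C}'_1}{\mc{D}'_1}{\mc{F}'_1}\mapsto_{\Delta\Vdash K}\lr{\mc{C}_1}{\mc{D}_1}{\mc{F}_1}$ and $\lr{\mc{C}_1}{\mc{D}_1}{\mc{F}_1}\mapsto^*_{\Delta\Vdash K}\lr{\mc{C}''_1}{\mc{D}''_1}{\mc{F}''_1}$, composing gives $\lr{\mc{C}'_1}{\mc{D}'_1}{\mc{F}'_1}\mapsto^*_{\Delta\Vdash K}\lr{\mc{C}''_1}{\mc{D}''_1}{\mc{F}''_1}$ with a poised target, i.e. $\lr{\mc{C}'_1}{\mc{D}'_1}{\mc{F}'_1}\mapsto^{\m{poised}}_{\Delta\Vdash K}\lr{\mc{C}''_1}{\mc{D}''_1}{\mc{F}''_1}$. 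The side-two reduction is untouched, the $\m{Tree}$ membership required of the primed triple with side two is exactly the second hypothesis, and the value witness $(\lr{\mc{C}''_1}{\mc{D}''_1}{\mc{F}''_1};\lr{\mc{C}''_2}{\mc{D}''_2}{\mc{F}''_2}) \in \mc{V}^\xi_\Psi\llbracket\Delta\Vdash K\rrbracket$ is reused verbatim on the same poised pair. Repackaging these three facts under line 14 of \figref{fig:def_lr}, with $\mc{B}_1=\mc{C}'_1\mc{D}'_1\mc{F}'_1$ and $\mc{B}_2=\mc{C}_2\mc{D}_2\mc{F}_2$, yields the conclusion.

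In contrast to forward closure (\lemref{lem:fwdC}), this direction should need no appeal to confluence (\lemref{lem:diamond}): prepending one step to an already-exhibited reduction into a poised normal form is immediate, whereas the forward direction had to push a step \emph{through} the existing reduction. The only points I would check are that (i) the given step $\mapsto_{\Delta\Vdash K}$ is literally one instance of the relation iterated by $\mapsto^*_{\Delta\Vdash K}$, so the composition is well-formed; (ii) \thmref{thm:preservation} keeps the interface $\Delta\Vdash K$ stable across the step, so the $\mc{V}$- and $\m{Tree}$-clauses in the conclusion are indexed by the same $\Delta\Vdash K$; and (iii) the poised pair carried over is genuinely the one from the hypothesis, so its $\mc{V}$-membership transfers unchanged. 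The mildest subtlety, and the only thing resembling an obstacle, is confirming that $\lr{\mc{C}'_1}{\mc{D}'_1}{\mc{F}'_1}$ is not already poised, which holds because it takes a $\mapsto_{\Delta\Vdash K}$ step and hence one of its subconfigurations is non-poised; this guarantees the constructed poised reduction is the intended witness.
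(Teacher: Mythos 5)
Your proposal is correct and matches the paper's own proof: both unfold line 14 of the definition of $\mc{E}^\xi_\Psi$, prepend the given single step to the exhibited reduction on side one (no confluence needed, exactly as you observe), and close by reusing the assumed $\m{Tree}_\Psi(\Delta \Vdash K)$ membership and the untouched poised witness. Your extra check that $\lr{\mc{C}'_1}{\mc{D}'_1}{\mc{F}'_1}$ is not poised is harmless but unnecessary, since $\mapsto^{\m{poised}}_{\Delta \Vdash K}$ only constrains the target of the multi-step reduction.
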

\begin{proof}
By definition of $\mc{E}^\xi_\Psi$, we know that $(\lr{\mc{C}_1}{\mc{D}_1}{\mc{F}_1}; \lr{\mc{C}_2}{\mc{D}_2}{\mc{F}_2})\in \m{Tree}_{\Psi}(\Delta \Vdash K),$ and
  \[\lr{\mc{C}_1}{\mc{D}_1}{\mc{F}_1}\mapsto^*_{\Delta \Vdash K} \lr{\mc{C}''_1}{\mc{D}''_1}{\mc{F}''_1} \,\m{and}\,  \lr{\mc{C}_2}{\mc{D}_2}{\mc{F}_2}\mapsto^*_{\Delta \Vdash K} \lr{\mc{C}''_2}{\mc{D}''_2}{\mc{F}''_2}.\]
 
 By definition of $\mapsto^*_{\Delta\Vdash K}$, we have 
   \[\star \; \lr{\mc{C}'_1}{\mc{D}'_1}{\mc{F}'_1}\mapsto^*_{\Delta \Vdash K}\lr{\mc{C}_1}{\mc{D}_1}{\mc{F}_1}\mapsto^*_{\Delta \Vdash K} \lr{\mc{C}''_1}{\mc{D}''_1}{\mc{F}''_1} \,\m{and}\,  \lr{\mc{C}_2}{\mc{D}_2}{\mc{F}_2}\mapsto^*_{\Delta \Vdash K} \lr{\mc{C}''_2}{\mc{D}''_2}{\mc{F}''_2}.\]
 $\star$ along with the assumption of $(\lr{\mc{C}'_1}{\mc{D}'_1}{\mc{F}'_1}; \lr{\mc{C}_2}{\mc{D}_2}{\mc{F}_2})\in \m{Tree}_{\Psi}(\Delta \Vdash K)$ completes the proof.
\end{proof}

\begin{theorem}[Fundamental Theorem]\label{thm:main}
For all security levels $\xi$, and configurations ${\Psi; \Delta_1 \Vdash \mathcal{D}_1:: u_\alpha^{c_1} {:}T_1}$ and ${\Psi; \Delta_2 \Vdash \mathcal{D}_2:: v_\beta^{c_2} {:}T_2}$  with $\cproj{\mathcal{D}_1}{\xi} \meq \cproj{\mathcal{D}_2}{\xi}$, $\Delta_1 \Downarrow \xi = \Delta_2 \Downarrow \xi$, and $u_\alpha^{c_1} {:}T_1\Downarrow \xi = v_\beta^{c_2} {:}T_2 \Downarrow \xi$  we have \[(\Delta_1 \Vdash \mathcal{D}_1:: u_\alpha^{c_1} {:}T_1)  \equiv^\Psi_{\xi} (\Delta_2 \Vdash \mathcal{D}_2:: v_\beta^{c_2} {:}T_2).\]  
\end{theorem}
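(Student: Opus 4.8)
The plan is to unfold the definition of $\equiv^\Psi_\xi$ and reduce the goal to membership in the term relation $\mc{E}^\xi_\Psi$. Fixing the observable interface $\Delta = \Delta_1\Downarrow\xi = \Delta_2\Downarrow\xi$ and $K = u_\alpha^{c_1}{:}T_1\Downarrow\xi = v_\beta^{c_2}{:}T_2\Downarrow\xi$, I would take arbitrary well-typed closing configurations $\mc{C}_1,\mc{C}_2,\mc{F}_1,\mc{F}_2$ and must show $(\mc{C}_1\mc{D}_1\mc{F}_1,\mc{C}_2\mc{D}_2\mc{F}_2)\in\mc{E}^\xi_\Psi\llbracket\Delta\Vdash K\rrbracket$. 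The first step is to invoke \lemref{buildmn} to rewrite each closed configuration $\mc{C}_i\mc{D}_i\mc{F}_i$ into a decomposed form $\lr{\mc{C}'_i}{\mc{D}'_i}{\mc{F}'_i}$ lying in $\m{Tree}_\Psi(\Delta\Vdash K)$; crucially \lemref{buildmn} also transports the relevancy hypothesis, yielding $\cproj{\mc{D}'_1}{\xi}\meq\cproj{\mc{D}'_2}{\xi}$ from the assumed $\cproj{\mc{D}_1}{\xi}\meq\cproj{\mc{D}_2}{\xi}$. This isolates the argument to the triple decomposition, with the non-observable trees absorbed into the middle component.

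The core is then an induction over the well-founded order described in \secref{sec:logicalrel}, namely the lexicographic combination of the structure of the observable types in $\Delta$ and $K$ with the multiset order $<$ on configuration sizes. To establish $(\mc{C}'_1\mc{D}'_1\mc{F}'_1,\mc{C}'_2\mc{D}'_2\mc{F}'_2)\in\mc{E}^\xi_\Psi\llbracket\Delta\Vdash K\rrbracket$ (line 14 of \figref{fig:def_lr}), I would first run each triple to a poised state. Progress (\thmref{thm:progress}), Preservation (\thmref{thm:preservation}), and well-foundedness of $<$ guarantee termination, so each run reaches a poised $\lr{\mc{C}''_i}{\mc{D}''_i}{\mc{F}''_i}$. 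Throughout this silent computation the relevancy invariant must be maintained: here \lemref{lem:indinvariant} is the workhorse, showing that whenever one run steps the other takes zero or one matching steps so that $\cproj{\mc{D}''_1}{\xi}\meq\cproj{\mc{D}''_2}{\xi}$ persists. Forward and backward closure (\lemref{lem:fwdC} and \lemref{lem:backC}) justify that membership in $\mc{E}^\xi_\Psi$ is insensitive to these reductions, and confluence (via \lemref{lem:diamond}) makes the poised normal form essentially unique.

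Once both runs are poised, the remaining obligation is to place the poised pair in the value relation $\mc{V}^\xi_\Psi\llbracket\Delta\Vdash K\rrbracket$, which I would prove by case analysis on the connective and the direction—communicating on the \emph{left} along $\Delta$ versus on the \emph{right} along $K$—of the pending observable exchange, matching lines 1--13 of \figref{fig:def_lr}. In the cases where a \emph{positive} message is fired on the right or a \emph{negative} one on the left (lines 2, 4, 8, 10, and the closing and forwarding cases), the firing node has quasi running secrecy at most $\xi$ on an observable channel, hence is \emph{relevant}; the invariant $\cproj{\mc{D}''_1}{\xi}\meq\cproj{\mc{D}''_2}{\xi}$ then forces both runs to emit the same message, discharging the \emph{assertion}. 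In the dual cases (lines 3, 5, 6, 7, 9, 12) the identical message is \emph{assumed} to arrive from the closing configurations and simply pushed into the middle component, any received higher-order channel being added to the observable interface. In every case the continuation is handled by the induction hypothesis: pushing a message strictly shrinks the observable type (e.g.\ $\oplus\{\ell{:}A_\ell\}$ to $A_k$), while the closing and forwarding cases decrease configuration size under $<$, so the appeal is always at a strictly smaller position in the well-founded order.

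The step I expect to be the principal obstacle is maintaining the relevancy invariant across asynchronous internal computation, i.e.\ \lemref{lem:indinvariant} and its deployment here. The delicacy is that the two runs need not receive their non-observable messages in lockstep: a process may become \emph{irrelevant} the instant it receives along a non-observable channel, after which the runs may freely diverge, yet the \emph{relevant} fragments must stay identical up to renaming of non-observable channels. Getting the \emph{quasi running secrecy} lookahead right—so that a process about to receive on a non-observable channel is already classified irrelevant, and so that a positive message inherits the adjusted running secrecy of its recipient parent—is precisely what makes the zero-or-one-step matching succeed, and is where the bulk of the case analysis resides.
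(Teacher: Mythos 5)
Your proposal matches the paper's proof essentially step for step: the same unfolding of $\equiv^\Psi_\xi$, the same appeal to \lemref{buildmn} to obtain decomposed triples in $\m{Tree}_\Psi(\Delta\Vdash K)$ with the relevancy hypothesis transported, the same lexicographic induction on (type structure, multiset order), the same use of Progress/Preservation plus \lemref{lem:indinvariant} and backward closure to handle internal steps, and the same poised-state case analysis against the value relation with the assert/assume split and induction at smaller types or smaller configurations. You have also correctly identified \lemref{lem:indinvariant} and the quasi-running-secrecy bookkeeping as the technical heart, which is exactly where the bulk of the paper's appendix proof lies.
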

\begin{proof}


Put  $\Delta=\cproj{\Delta_1}{\xi} = \cproj{\Delta_2}{\xi}$, and $K=\cproj{u_\alpha^{c_1} {:}T_1}{\xi} = \cproj{v_\beta^{c_2} {:}T_2}{\xi}$. The goal is to prove that  for all $\mathcal{C}_1, \mathcal{C}_2$ and $\mathcal{F}_1, \mathcal{F}_2$ such that
${\ctype{\Psi}{\cdot}{\mc{C}_1}{\Delta_1}}$
and 
${\ctype{\Psi}{\cdot}{\mc{C}_2}{\Delta_2}}$
and 
${\ctype{\Psi}{u^{c_1}_{\alpha}{:}T_1}{\mc{F}_1}{\cdot}}$
and 
${\ctype{\Psi}{v^{c_1}_{\beta}{:}T_2}{\mc{F}_2}{\cdot}}$
\noindent {\color{Aquamarine} $\;\dagger \; (\mathcal{C}_1\mathcal{D}_1 \mathcal{F}_1;\mathcal{C}_2\mathcal{D}_2 \mathcal{F}_2) \in \mathcal{E}^\xi_\Psi\llbracket \Delta \Vdash K \rrbracket.$}

The proof is by induction on a lexicographic order on (a) the size of types $\Delta, K$ and (b) the multiset $\mc{M}$ of derivations $\ptype{\Psi}{\Delta'}{P}{y}{d}{\gamma},{T'}$ used in the derivations of $\ctype{\Psi}{\cdot}{\mc{C}_1\mc{D}_1\mc{F}_1}{\cdot}$ and $\ctype{\Psi}{\cdot}{\mc{C}_2\mc{D}_2\mc{F}_2}{\cdot}$. Derivations are ordered in the standard way and we use the multiset ordering derived from this as the basis of our induction, using $<$ to present it. 
 
By \lemref{buildmn} we can rewrite  $\mc{C}_1\mc{D}_1\mc{F}_1$ as $\mc{C}'_1\mc{D}'_1\mc{F}'_1$ and $\mc{C}_2\mc{D}_2\mc{F}_2$ as $\mc{C}'_2\mc{D}'_2\mc{F}'_2$ such that ${(\lr{\mc{C}'_1}{\mc{D}'_1}{\mc{F}'_1}; \lr{\mc{C}'_2}{\mc{D}'_2}{\mc{F}'_2}) \in \m{Tree}_\Psi(\Delta \Vdash K)}$. Moreover, $\cproj{\mc{D}'_1}{\xi}=\cproj{\mc{D}'_2}{\xi}$.

By progress, either 1) at least one of the configurations can take a step, i.e. ${\lr{\mc{C}'_1}{\mc{D}'_1}{\mc{F}'_1} \mapsto_{\Delta\Vdash K} \lr{\mc{C}''_1}{\mc{D}''_1}{\mc{F}''_1}}$ or $\lr{\mc{C}'_2}{\mc{D}'_2}{\mc{F}'_2} \mapsto_{\Delta\Vdash K} \lr{\mc{C}''_2}{\mc{D}''_2}{\mc{F}''_2}$ such that $({\mc{C}''_i}{\mc{D}''_i}{\mc{F}''_i})<{\mc{C}'_i}{\mc{D}'_i}{\mc{F}'_i}$  in the multiset ordering, or 2)  $\lr{\mathcal{C}'_1}{\mathcal{D}'_1}{ \mathcal{F}'_1}$ and $\lr{\mathcal{C}'_2}{\mathcal{D}'_2} {\mathcal{F}'_2}$ are poised with regard to $\Delta \Vdash K$, or 3) both $\mathcal{C}'_1\mathcal{D}'_1 \mathcal{F}'_1$ and $\mathcal{C}'_2\mathcal{D}'_2 \mathcal{F}'_2$ are empty with $\Delta = K = \cdot$.


The proof of Case 3) is trivial by definition. We consider 1) and 2) separately. Here, without loss of generality, we only consider the cases for ${\lr{\mc{C}'_1}{\mc{D}'_1}{\mc{F}'_1} \mapsto_{\Delta\Vdash K} \lr{\mc{C}''_1}{\mc{D}''_1}{\mc{F}''_1}}$  and ${\lr{\mc{C}'_1}{\mc{D}'_1}{\mc{F}'_1} }$  being poised. 
\begin{enumerate}
 \item {\color{ForestGreen} \bf  $\lr{\mc{C}'_1}{\mc{D}'_1}{\mc{F}'_1} \mapsto_{\Delta\Vdash K} \lr{\mc{C}''_1}{\mc{D}''_1}{\mc{F}''_1}$}:

By \lemref{lem:indinvariant},  $\lr{\mc{C}'_2}{\mc{D}'_2}{\mc{F}'_2}\mapsto^{0,1}_{\Delta\Vdash K}\lr{\mc{C}''_2}{\mc{D}''_2}{\mc{F}''_2}$ with 
$\Delta \Vdash {\mc{D}''_2}::K$, and $\cproj{\mc{D}''_1}{\xi}=\cproj{\mc{D}''_2}{\xi}$. 


We can apply the induction hypothesis on  $(\mc{C}''_1 \mc{D}''_1 \mc{F}''_1 ;\mathcal{C}''_2\mathcal{D}''_2 \mathcal{F}''_2)  < (\mc{C}_1\mc{D}_1\mc{F}_1, \mc{C}_2\mc{D}_2\mc{F}_2)$ to get 
{\color{Aquamarine}  \[\star\;\;  (\mc{C}''_1 \mc{D}''_1 \mc{F}''_1 ;\mathcal{C}''_2\mathcal{D}''_2 \mathcal{F}''_2) \in \mathcal{E}^\xi_\Psi\llbracket \Delta \Vdash K\rrbracket.\]}
By Backward closure lemma (\lemref{lem:backC}):

{\color{Aquamarine}  \[\dagger \;\; (\mc{C}'_1 \mc{D}'_1 \mc{F}'_1 ;\mathcal{C}_2\mathcal{D}_2 \mathcal{F}_2) \in \mathcal{E}^\xi_\Psi\llbracket \Delta \Vdash K\rrbracket.\]}

{\bf \item \color{ForestGreen} $\lr{\mc{C}'_1}{\mc{D}'_1}{\mc{F}'_1}$ and $\lr{\mc{C}'_2}{\mc{D}'_2}{\mc{F}'_2}$ are both poised with regard to $\Delta \Vdash K$:}\\
{{\bf Case 1.} \color{RedViolet} $K=u_\alpha^{c_1}{:}T_1=u_\alpha^{c_1}{:}1$}, and $\Delta= \cdot$, and $\lr{\mc{C}'_1}{\mc{D}'_1}{\mc{F}'_1}= \lr{\cdot}{\msg{\mb{close}\,u^{c_1}_\alpha}}{\mc{F}'_1}$.
Note that by the definition of relevancy, all processes in $\mc{D}'_1$ and $\mc{D}'_2$ are relevant.
As a result, $\mathcal{D}'_{2}=\mathbf{msg}(\mb{close}\,u^{c_1}_\alpha),$ and $\mc{C}'_2=\cdot$.
    By line 1 of the definition, we get 
$\dagger \, (\lr{\mc{C}'_1}{\mc{D}'_1}{\mc{F}'_1}, \lr{\mc{C}'_2}{\mc{D}'_2}{\mc{F}'_2}) \in \mc{E}^\xi_\Psi\llbracket \cdot \Vdash u^{c_1}_\alpha{:}1\rrbracket$.\\

  {\bf Case 2.} {\color{RedViolet} $K=u_\alpha^{c_1}{:}T_1=u_\alpha^{c_1}{:}\oplus\{\ell{:}A_{\ell}\}_{\ell \in I},$}  and $\mathcal{D}'_{1}=\mc{D}''_1\mathbf{msg}(u_\alpha^{c_1}.k).$
    By the assumption of the theorem, $\mc{D}'_1$ and $\mc{D}'_2$ are both relevant, and we have $\mathcal{D}'_{2}=\mc{D}''_2\mathbf{msg}(u_\alpha^{c_1}.k; u_\alpha^{c_1} \leftarrow w_\gamma^{c_1}).$ 
 
Removing $\mb{msg}(u^{c_1}_\alpha.k)$ from $\mc{D}'_1$ and $\mc{D}'_2$ does not change relevancy of the remaining configuration when $c_1 \sqsubseteq \xi$: \[\cproj{\mc{D}''_1}{\xi}=\cproj{\mc{D}''_2}{\xi}.\]
 
We can apply the induction hypothesis on $\dagger_1$ and $\star_2$ and the smaller type $K$ to get 
  \[\begin{array}{l}
{{\color{Aquamarine}}\star}  (\lre{\mc{C}'_1}{\mc{D}''_1} {\msg{u^{c_1}_\alpha.k; u_\alpha^{c_1} \leftarrow w_\gamma^{c_1}}\mc{F}'_1} ,\lre{\mathcal{C}'_2}{\mathcal{D}''_2} {\msg{u^{c_1}_\alpha.k; u_\alpha^{c_1} \leftarrow w_\gamma^{c_1}}\mathcal{F}'_2)} \in \mathcal{E}^\xi_\Psi\llbracket \Delta \Vdash w^{c_1}_{\gamma}{:}A_k\rrbracket.
\end{array}\]

By line (2) in the definition of $\mathcal{V}^\xi_\Psi$:
{\color{Aquamarine}  \[\begin{array}{l}
 \dagger  (\mc{C}'_1 \mc{D}''_1 \msg{u^{c_1}_\alpha.k; u_\alpha^{c_1} \leftarrow w_\gamma^{c_1}}\mc{F}'_1 ;\mathcal{C}'_2\mathcal{D}''_2 \msg{u^{c_1}_\alpha.k; u_\alpha^{c_1} \leftarrow w_\gamma^{c_1}}\mathcal{F}'_2)\in \mathcal{E}^\xi_\Psi\llbracket \Delta \Vdash u_\alpha^{c_1}{:}\oplus\{\ell:A_{\ell}\}_{\ell \in I}\rrbracket.
\end{array}
\]}\\

 {{\bf Case 3.} \color{RedViolet} $K=u_\alpha^{c_1}{:}T_1=u_\alpha^{c_1}{:}\&\{\ell{:}A_{\ell}\}_{\ell \in I}$}, and $\lr{\mc{C}'_1}{\mc{D}'_1}{\mc{F}'_1}= \lr{\mc{C}'_1}{\mc{D}'_1}{\msg{u^{c_1}_\alpha.k; w_\gamma^{c_1}\leftarrow u^{c_1}_\alpha}\mc{F}''_1}$.

We consider two subcases:
\begin{description}
 \item{\bf Subcase 1.}${\mc{F}'_2\neq \msg{u^{c_1}_\alpha.k; w_\gamma^{c_1}\leftarrow u^{c_1}_\alpha}\mc{F}''_2}.$
 By line (3) in the definition of $\mathcal{V}^\xi_\Psi$:
{\color{Aquamarine}
\[\begin{array}{l}
\dagger  (\mc{C}'_1 \mc{D}'_1 \msg{u^{c_1}_\alpha.k; w_\gamma^{c_1}\leftarrow u^{c_1}_\alpha
}\mc{F}''_1 ;\mathcal{C}'_2\mathcal{D}_2 \mathcal{F}''_2) \in \mathcal{E}^\xi_\Psi\llbracket \Delta \Vdash u_\alpha^{c_1}{:}\&\{\ell:A_{\ell}\}_{\ell \in I}\rrbracket.
\end{array}\]}\\
 \item{\bf Subcase 2.}${\mc{F}'_2= \msg{u^{c_1}_\alpha.k; w_\gamma^{c_1}\leftarrow u^{c_1}_\alpha}\mc{F}''_2}.$
 
 We can apply the induction hypothesis on the smaller type $K$, but first we need to show that the invariant of the induction holds. From $\cproj{\mc{D}'_1}{\xi}=\cproj{\mc{D}'_2}{\xi}$ and  $c_1 \sqsubseteq \xi$, we get \[\cproj{\mc{D}'_1\msg{u^{c_1}_\alpha.k; w_\gamma^{c_1}\leftarrow u^{c_1}_\alpha
}}{\xi}=\cproj{\mc{D}'_2\msg{u^{c_1}_\alpha.k; w_\gamma^{c_1}\leftarrow u^{c_1}_\alpha
}}{\xi}.\]

 By induction hypothesis
 \[\begin{array}{l}
        {\color{Aquamarine}\star} (\lre{\mc{C}'_1}{\mc{D}'_1 \msg{u^{c_1}_\alpha.k; w_\gamma^{c_1}\leftarrow u^{c_1}_\alpha
}} {\mc{F}''_1} ,\lre{\mathcal{C}'_2}{\mathcal{D}'_2 \msg{u^{c_1}_\alpha.k; w_\gamma^{c_1}\leftarrow u^{c_1}_\alpha
}} {\mathcal{F}''_2)} \in \mathcal{E}^\xi_\Psi\llbracket \Delta \Vdash w^{c_1}_{\gamma}{:}A_k\rrbracket.
\end{array}\]
By line (3) in the definition of $\mathcal{V}^\xi_\Psi$:
{\color{Aquamarine}
\[\begin{array}{l}
\dagger  (\mc{C}'_1 \mc{D}'_1 \msg{u^{c_1}_\alpha.k; w_\gamma^{c_1}\leftarrow u^{c_1}_\alpha
}\mc{F}''_1 ;\mathcal{C}'_2\mathcal{D}'_2 \msg{u^{c_1}_\alpha.k; w_\gamma^{c_1}\leftarrow u^{c_1}_\alpha
}\mathcal{F}''_2) \in \mathcal{E}^\xi_\Psi\llbracket \Delta \Vdash u_\alpha^{c_1}{:}\&\{\ell:A_{\ell}\}_{\ell \in I}\rrbracket.
\end{array}\]}\\

\end{description}

 {\bf Case 4.} {\color{RedViolet} $K=y_\alpha^{c_1}{:}T_1=u_\alpha^{c_1}{:} A \otimes B$}, and
    \[\mathcal{D}'_1=\mathcal{D}''_1\mc{T}_1\mathbf{msg}(\mb{send}x_\beta^{c_1}\,u_\alpha^{c_1}; u_\alpha^{c_1} \leftarrow w_\gamma^{c_1})\]
     where $\Delta=\Lambda_1, \Lambda_2$ and $\Psi; \Lambda_2\Vdash \mc{T}_1::(x_\beta^{c_1}{:}A)$.
     
     By assumption of the theorem and $ c_1 \sqsubseteq \xi$, we know $\mc{D}'_1$ and $\mc{D}'_2$ are both relevant and \[{\mathcal{D}'_{2}=\mc{D}''_2\mc{T}_2\mathbf{msg}(\mb{send}x_\beta^{c_1}\, u_\alpha^{c_1}; u_\alpha^{c_1} \leftarrow w_\gamma^{c_1}),}\] such that $\Psi; \Lambda_2\Vdash \mc{T}_2::(x_\beta^{c_1}{:}A)$.Moreover, by relevancy of $\mc{D}'_i$ (and relevancy of $x_\beta^{c_1}$) we get $\cproj{\mc{D}''_1}{\xi}=\cproj{\mc{D}''_2}{\xi}$ and 
  $\cproj{\mc{T}_1}{\xi}=\cproj{\mc{T}_2}{\xi}$. 
     
      By configuration typing  we can break down $\mc{C}'_i$ into $\mc{H}_i$ and $\mc{H}'_i$ such that  $\ctype{\Psi}{\cdot}{\mc{H}_i}{\Lambda_1}$, and $\ctype{\Psi}{\cdot}{\mc{H}'_i}{\Lambda_2}$. 
In the case where $x^{c_1}_\beta{:}A \in \Delta$, we have $\mc{T}_i=\cdot$ and $\Lambda_2=x^{c_1}_\beta{:}A$.

 We can apply the induction hypothesis on the smaller type  to get 

\[\begin{array}{l}
{\color{Aquamarine}\star'}  (\lre{\mc{H}'_1}{\mc{T}_1} {\mc{H}_1\mc{D}''_1\msg{\mb{send}x_\beta^{c_1}\,u_\alpha^{c_1}; u_\alpha^{c_1} \leftarrow w_\gamma^{c_1}}\mc{F}'_1}, \lre{\mathcal{H}'_2}{\mathcal{T}_2} {\mc{H}'_2\mc{D}''_1\msg{\mb{send}x_\beta^{c_1}\,u_\alpha^{c_1}; u_\alpha^{c_1} \leftarrow w_\gamma^{c_1}}\mathcal{F}'_2)} \in \mathcal{E}^\xi_\Psi\llbracket \Lambda_2 \Vdash x^{c_1}_{\beta}{:}A\rrbracket.
\end{array}\]

and

\[\begin{array}{l}
{\color{Aquamarine}\star'}  (\lre{\mc{H}_1}{\mc{D}''_1} {\mc{H}'_1\mc{T}_1\msg{\mb{send}x_\beta^{c_1}\,u_\alpha^{c_1}; u_\alpha^{c_1} \leftarrow w_\gamma^{c_1}}\mc{F}'_1}, \lre{\mathcal{H}_2}{\mathcal{D}''_2} {\mc{H}'_2\mc{T}_2\msg{\mb{send}x_\beta^{c_1}\,u_\alpha^{c_1}; u_\alpha^{c_1} \leftarrow w_\gamma^{c_1}}\mathcal{F}'_2)} \in \mathcal{E}^\xi_\Psi\llbracket \Lambda_1 \Vdash w^{c_1}_{\gamma}{:}B\rrbracket.
\end{array}\]

By line (5) in the definition of $\mathcal{V}^\xi_\Psi$:
{\color{Aquamarine} 
\[\begin{array}{l}
    \dagger  (\mc{C}'_1 \mc{D}''_1 \mc{T}_1\msg{\mb{send}x_\beta^{c_1}\,u_\alpha^{c_1};u_\alpha^{c_1} \leftarrow w_\gamma^{c_1}}\mc{F}'_1 ;\mathcal{C}'_2\mathcal{D}''_2 \mc{T}_2\msg{\mb{send}x_\beta^{c_1}\,u_\alpha^{c_1};u_\alpha^{c_1} \leftarrow w_\gamma^{c_1}}\mathcal{F}'_2) \in \mathcal{E}^\xi_\Psi\llbracket \Delta \Vdash u_\alpha^{c_1}{:}A \otimes B\rrbracket.
\end{array}\]}

{\bf Case 5.}  {\color{RedViolet} $K=u_\alpha^{c_1}{:}T_1=u_\alpha^{c_1}{:}A \multimap B$}
 and $\lr{\mc{C}'_1}{\mc{D}'_1}{\mc{F}'_1}= \lr{\mc{C}'_1}{\mc{D}'_1}{\mc{T}_1\msg{\mb{send} x^{c_1}_{\beta}\,u^{c_1}_{\alpha};  w_\gamma^{c_1} \leftarrow u_\alpha^{c_1}}\mc{F}''_1}$.
There are two subcases to consider:
 \begin{description}
  \item {\bf Subcase 1.} ${\mc{F}'_2 \neq \mc{T}_2\msg{\mb{send} x^{c_1}_{\beta}\,u^{c_1}_{\alpha}}\mc{F}''_2,}$. By line (4) in the definition of $\mathcal{V}^\xi_\Psi$:
{\color{Aquamarine}
\[\begin{array}{l}
\dagger  (\mc{C}'_1 \mc{D}'_1 \mc{T}_1\msg{\mb{send}x^{c_1}_{\beta}\, u^{c_1}_{\alpha};w_\gamma^{c_1} \leftarrow u_\alpha^{c_1}}\mc{F}''_1 ;\mathcal{C}'_2\mathcal{D}'_2 \mathcal{F}'_2)\in \mathcal{E}^\xi_\Psi\llbracket \Delta \Vdash u_\alpha^{c_1}{:}A \multimap B\rrbracket.
\end{array}\]}
  
  \item{\bf Subcase 2.} ${\mc{F}'_2 = \mc{T}_2\msg{\mb{send}x^{c_1}_{\beta}\, u^{c_1}_{\alpha}}\mc{F}''_2,}$.
  
By assumption and $c_1 \sqsubseteq \xi$, we know that $\mc{D}'_1$ and $\mc{D}'_2$ are relevant. Adding a negative message along $c_1 \sqsubseteq \xi$ as the root does not change their relevancy: \[\cproj{\mc{D}'_1\msg{\mb{send}x^{c_1}_{\beta}\, u^{c_1}_{\alpha}}}{\xi}=\cproj{\mc{D}'_2\msg{\mb{send}x^{c_1}_{\beta}\, u^{c_1}_{\alpha}}}{\xi}\]

We can apply the induction hypothesis on the smaller type:
\[\begin{array}{l}{\color{Aquamarine}  \star}\;\;
         (\lre{\mc{C}'_1 \mc{T}_1}{\mc{D}'_1 \msg{\mb{send}x^{c_1}_{\beta}\, u^{c_1}_{\alpha};w_\gamma^{c_1} \leftarrow u_\alpha^{c_1}}} {\mc{F}''_2} ,  \lre{\mathcal{C}'_2\mc{T}_2}{\mathcal{D}'_2 \msg{\mb{send}x^{c_1}_{\beta}\, u^{c_1}_{\gamma};w_\gamma^{c_1} \leftarrow u_\alpha^{c_1}}} {\mathcal{F}''_2)} \in \mathcal{E}^\xi_\Psi\llbracket \Delta, x_\beta^{c_1}{:}A \Vdash w^{c_1}_{\gamma}{:}B\rrbracket.
\end{array}\]

By line (4) in the definition of $\mathcal{V}^\xi_\Psi$:
{\color{Aquamarine}
\[\begin{array}{l}
\dagger  (\mc{C}'_1 \mc{D}'_1 \mc{T}_1\msg{\mb{send} x^{c_1}_{\beta}\,u^{c_1}_{\alpha};w_\gamma^{c_1} \leftarrow u_\alpha^{c_1}}\mc{F}''_1 ;\mathcal{C}'_2\mathcal{D}'_2 \mc{T}_2\msg{\mb{send} x^{c_1}_{\beta}\,u^{c_1}_{\alpha};w_\gamma^{c_1} \leftarrow u_\alpha^{c_1}}\mathcal{F}''_2)\in \mathcal{E}^\xi_\Psi\llbracket \Delta \Vdash u_\alpha^{c_1}{:}A \multimap B\rrbracket.
\end{array}\]}\\

 \end{description}

  {\bf Case 6.} {\color{RedViolet} $K=u_{\alpha}^{c_1}{:}A$, and $\mathcal{D}'_{1}=\mc{D}''_1\mathbf{proc}(u^{c_1}_{\alpha},u^{c_1}_{\alpha}\leftarrow w^{c_1}_{\gamma} @d_1).$}
  
    By assumption of the theorem, $\mathcal{D}'_{2}=\mc{D}''_2 \mathbf{proc}(u^{c_1}_{\alpha},u^{c_1}_{\alpha}\leftarrow w^{c_1}_{\gamma} @d_1).$

 By assumption of the theorem, $\cproj{\mc{D}''_1}{\xi}=\cproj{\mc{D}''_2}{\xi}$, since $w^{c_1}_\gamma$ is relevant in $\mc{D}''_i$, and
 we can apply the induction hypothesis on $\dagger_1$ and $\star_2$ and the smaller configuration with respect to $<$ order to get 
 
   \[\begin{array}{l}
{\color{Aquamarine}\star\;}  (\lre{\mc{C}'_1}{\mc{D}''_1} {[w_{\gamma}^{c_1}/u_{\alpha}^{c_1}]\mc{F}'_1} ,\lre{\mathcal{C}'_2}{\mathcal{D}''_2} {[w_{\gamma}^{c_1}/u_{\alpha}^{c_1}]\mathcal{F}'_2)}\in \mathcal{E}^\xi_\Psi\llbracket \Delta \Vdash w^{c_1}_{\gamma}{:}A\rrbracket.
\end{array}\]

By line (11) in the definition of $\mathcal{V}^\xi_\Psi$:
{\color{Aquamarine} 
\begin{equation}
\hspace{-0.5cm}
\begin{array}{l}
 \dagger  (\mc{C}'_1 \mc{D}''_1 \mathbf{proc}(u^{c_1}_{\alpha},u^{c_1}_{\alpha}\leftarrow w^{c_1}_{\gamma} @d_1)\mc{F}'_1 ;\mathcal{C}'_2\mathcal{D}''_2 \mathbf{proc}(u^{c_1}_{\alpha},u^{c_1}_{\alpha}\leftarrow w^{c_1}_{\gamma} @d_1)\mathcal{F}'_2) \in \mathcal{E}^\xi_\Psi\llbracket \Delta \Vdash u_\alpha^{c_1}{:}A\rrbracket.
\end{array}
\nonumber
\end{equation}
 }

{{\bf Case 7.} \color{RedViolet} $\Delta=\Delta', x_\gamma^{c}{:}1$}, and $\lr{\mc{C}'_1}{\mc{D}'_1}{\mc{F}'_1}= \lr{\mc{C}''_1\msg{\mb{close}\,x^{c}_\gamma}}{\mc{D}'_1}{\mc{F}'_1}$.

There are two subcases to consider: 

\begin{description}
 \item{\bf Subcase 1.} $\mathcal{C}'_{2}\neq\mc{C}''_2\mathbf{msg}(\mb{close}\,x^{c}_\gamma)$.
 
 By line 6 in the definition of $\mathcal{V}^\xi_\Psi$,  
{\color{Aquamarine} \[\dagger \,({\mc{C}''_1\msg{\mb{close}\,x^{c}_\gamma}}{\mc{D}'_1}{\mc{F}'_1}, {\mc{C}'_2}{\mc{D}'_2}{\mc{F}'_2}) \in \mathcal{E}^\xi_\Psi\llbracket \Delta', x^c_\gamma{:}1 \Vdash K\rrbracket\]}
 \item{\bf Subcase 2.} $\mathcal{C}'_{2}= \mc{C}''_2\mathbf{msg}(\mb{close}\,x^{c}_\gamma)$.
 
 We first briefly explain why the invariant of induction holds after we bring the closing message inside $\mc{D}'_i$ and remove the channel $x_\gamma^c$ from $\Delta$, i.e.
\[\cproj{\msg{\mb{close}\,x^{c}_\gamma}\mc{D}'_1}{\xi}= \cproj{\msg{\mb{close}\,x^{c}_\gamma}\mc{D}'_2}{\xi}. \]

If the parent of $\msg{\mb{close}\,x^{c}_\gamma}$ in $\mc{D}'_1$ is relevant in  $\mc{D}'_1$ before bringing the message insider then by the assumption of theorem, it is the same as the parent of $\msg{\mb{close}\,x^{c}_\gamma}$ in $\mc{D}'_2$. If after adding  the message $\mc{D}'_1$ the parent still remains relevant, it means that it has at least one other relevant channel other than $x^c_\gamma$ in both $\mc{D}'_1$ which also exists in $\mc{D}'_2$ and will be relevant after adding the message to $\mc{D}'_2$. If after adding the message $\mc{D}'_1$ becomes irrelevant, it means that it does not have a relevant path to any other channel in $\Delta'$ and $K$. By the assumption of theorem the parent of the message in $\mc{D}'_2$ does not have such path either. The same argument holds for any other node that becomes irrelevant because of adding the closing message to $\mc{D}'_i$. As a result the same processes becomes irrelevant in both $\mc{D}'_1$ and $\mc{D}'_2$ after adding the message to them and the proof of this case is complete. The same argument holds for the case in which the parent of $\msg{\mb{close}\,x^{c}_\gamma}$ in $\mc{D}'_2$ before adding the message is relevant. Otherwise the parent of  $\msg{\mb{close}\,x^{c}_\gamma}$ is irrelevant in $\mc{D}'_1$ and $\mc{D}'_2$ before adding the message and remains irrelevant after that too. The proof in this case is straightforward.

Now that the invariant holds, we can apply the induction hypothesis on the smaller types: 
\[{\color{Aquamarine} \star}\; (\lre{\mc{C}''_1}{\msg{\mb{close}\,x^{c}_\gamma}\mc{D}'_1}{\mc{F}'_1}, \lre{\mc{C}''_2}{\msg{\mb{close}\,x^{c}_\gamma}\mc{D}'_2}{\mc{F}'_2}) \in \mathcal{E}^\xi_\Psi\llbracket \Delta' \Vdash K\rrbracket.\]

By line 6 in the definition of $\mathcal{V}^\xi_\Psi$,  
{\color{Aquamarine} \[\dagger \,({\mc{C}''_1\msg{\mb{close}\,x^{c}_\gamma}}{\mc{D}'_1}{\mc{F}'_1}, {\mc{C}''_2\msg{\mb{close}\,x^{c}_\gamma}}{\mc{D}'_2}{\mc{F}'_2}) \in \mathcal{E}^\xi_\Psi\llbracket \Delta', x^c_\gamma{:}1 \Vdash K\rrbracket\]}

\end{description}

{{\bf Case 8.} \color{RedViolet} $\Delta=\Delta', x_\gamma^{c}{:}\oplus\{\ell:A_\ell\}_{\ell \in L}$}, and $\lr{\mc{C}'_1}{\mc{D}'_1}{\mc{F}'_1}= \lr{\mc{C}''_1\msg{x^{c}_\gamma.k; x_\gamma^{c} \leftarrow w_\delta^{c}}}{\mc{D}'_1}{\mc{F}'_1}$.

 By configuration typing, we have $\mc{C}'_1= \mc{T}_i\mc{T}'_i$ such that $\ctype{\Psi}{\cdot}{\mc{T}_i}{\Delta'}$ and  $\ctype{\Psi}{\cdot}{\mc{T}'_i}{x_\gamma^{c}{:}\oplus\{\ell:A_\ell\}_{\ell \in L}}$. We know that $\mc{T}'_1=\mc{T}''_1 \mb{msg}(x^c_\gamma.k; x_\gamma^{c} \leftarrow w_\delta^{c})$.
 We consider two subcases:
 \begin{description}
  \item{\bf Subcase 1.}${\mc{T}'_2\neq \mc{T}''_2\msg{x^{c}_\gamma.k; x_\gamma^{c} \leftarrow w_\delta^{c}}}$.
  
By line (7) in the definition of $\mathcal{V}^\xi_\Psi$:
{\color{Aquamarine}
\[\begin{array}{l}
\dagger  (\mc{T}_1\mc{T}''_1\msg{x^{c}_\gamma.k; x_\gamma^{c} \leftarrow w_\delta^{c}} \mc{D}'_1\mc{F}'_1 ;\mc{T}_2\mc{T}'_2\mathcal{D}'_2 \mathcal{F}''_2)\in \mathcal{E}^\xi_\Psi\llbracket \Delta', x_\gamma^{c}{:}\oplus\{\ell:A_{\ell}\}_{\ell \in L}\Vdash K\rrbracket.
\end{array}\]}
  \item{\bf Subcase 2.} ${\mc{T}'_2= \mc{T}''_2\msg{x^{c}_\gamma.k; x_\gamma^{c} \leftarrow w_\delta^{c}}}$.

 We have $\cproj{\msg{x^{c}_\gamma.k; x_\gamma^{c} \leftarrow w_\delta^{c}}\mc{D}'_1}{\xi}=\cproj{\msg{x^{c}_\gamma.k; x_\gamma^{c} \leftarrow w_\delta^{c}}\mc{D}'_2}{\xi}$, since $w^c_{\delta}$ is relevant and the positive messages $\msg{x^{c}_\gamma.k; x_\gamma^{c} \leftarrow w_\delta^{c}}$ in both configurations are relevant only if their parents are. Thus adding the message does not change relevancy of any other process.

We can apply the induction hypothesis on the smaller type to get 
 \[\begin{array}{l}
{\color{Aquamarine}\star\;}         (\lre{\mc{T}_1\mc{T}''_1}{\msg{x^{c}_\gamma.k; x_\gamma^{c} \leftarrow w_\delta^{c}}\mc{D}'_1} {\mc{F}'_1} ,\lre{\mc{T}_2\mc{T}''_2}{\msg{x^{c}_\gamma.k; x_\gamma^{c} \leftarrow w_\delta^{c}}\mathcal{D}'_2} {\mathcal{F}'_2)} \in \mathcal{E}^\xi_\Psi\llbracket \Delta', w^c_{\delta}:A_k\Vdash K\rrbracket.
\end{array}\]

By line (7) in the definition of $\mathcal{V}^\xi_\Psi$:
{\color{Aquamarine}
\[\begin{array}{l}
\dagger  (\mc{T}_1\mc{T}''_1\msg{x^{c}_\gamma.k; x_\gamma^{c} \leftarrow w_\delta^{c}} \mc{D}'_1\mc{F}'_1 ;\mc{T}_2\mc{T}''_2\msg{x^{c}_\gamma.k; x_\gamma^{c} \leftarrow w_\delta^{c}}\mathcal{D}'_2 \mathcal{F}''_2)\in \mathcal{E}^\xi_\Psi\llbracket \Delta', x_\gamma^{c}{:}\oplus\{\ell:A_{\ell}\}_{\ell \in L}\Vdash K\rrbracket.
\end{array}\]}\\

 \end{description}

{{\bf Case 9.} \color{RedViolet} $\Delta=\Delta', x_\gamma^{c}{:}\&\{\ell:A_\ell\}_{\ell \in L}$}, and $\lr{\mc{C}'_1}{\mc{D}'_1}{\mc{F}'_1}= \lr{\mc{C}'_1}{\msg{x^{c}_\gamma.k; w_\delta^{c}\leftarrow x^{c}_\gamma}\mc{D}''_1}{\mc{F}'_1}$.

  By the assumption of the theorem, $\mathcal{D}'_{2}=\msg{x^{c}_\gamma.k; w_\delta^{c}\leftarrow x^{c}_\gamma}\mc{D}''_2.$ Moreover,  $\cproj{\mc{D}''_1}{\xi}=\cproj{\mc{D}'_2}{\xi}$: $c\sqsubseteq \xi$ and thus $x^c_{\gamma+1}$ remains relevant in $\mc{D}_i$ and no relevancy changes in the configurations after removing the negative message.
 
We can apply the induction hypothesis on the smaller type to get 
 \[\begin{array}{l}
{\color{Aquamarine} \star}  (\lre{\mc{C}'_1\msg{x^{c}_\gamma.k; w_\delta^{c}\leftarrow x^{c}_\gamma}}{\mc{D}''_1} {\mc{F}'_1} ,\lre{\mathcal{C}'_2\msg{x^{c}_\gamma.k; w_\delta^{c}\leftarrow x^{c}_\gamma}}{\mathcal{D}''_2} {\mathcal{F}'_2)} \in \mathcal{E}^\xi_\Psi\llbracket \Delta', x^{c}_{\gamma+1}{:}A_k\Vdash K\rrbracket.
\end{array}\]

By line (3) in the definition of $\mathcal{V}^\xi_\Psi$:
{\color{Aquamarine}  \[\begin{array}{l}
 \dagger  (\mc{C}'_1 \msg{x^{c}_\gamma.k; w_\delta^{c}\leftarrow x^{c}_\gamma} \mc{D}''_1 \mc{F}'_1 ;\mathcal{C}'_2 \msg{x^{c}_\gamma.k; w_\delta^{c}\leftarrow x^{c}_\gamma} \mathcal{D}''_2 \mathcal{F}'_2)\in \mathcal{E}^\xi_\Psi\llbracket \Delta', x^{c}_{\gamma}{:}\&\{\ell{:}A_\ell\}_{\ell\in L}\Vdash K \rrbracket.
\end{array}
\]}\\
 
{\bf Case 10.
\color{RedViolet} $\Delta=\Delta', x_\gamma^{c}{:}A \otimes B$}, and $\lr{\mc{C}'_1}{\mc{D}'_1}{\mc{F}'_1}= \lr{\mc{C}''_1\mc{T}_1\msg{\mb{send}y_\delta^c\,x_\gamma^{c};x_\gamma^c \leftarrow w_\eta^c}  }{\mc{D}'_1}{\mc{F}'_1}$.

We consider two subcases:
\begin{description}
 \item {\bf Subcase 1.} $\mc{C}'_2 \neq \mc{C}''_2\mc{T}_2\msg{\mb{send}y_\delta^c\,x_\gamma^{c};x_\gamma^c \leftarrow w_\eta^c}$.

By line (7) in the definition of $\mc{V}^\xi_\Psi$:
{\color{Aquamarine}
\[\begin{array}{l}
\dagger\; (\mc{C}''_1\mc{T}_1 \msg{\mb{send}y_\delta^c\,x_\gamma^{c};x_\gamma^c \leftarrow w_\eta^c} \mc{D}'_1\mc{F}'_1 ;\mathcal{C}'_2\mathcal{D}'_2 \mathcal{F}'_2) \in \mathcal{E}^\xi_\Psi\llbracket \Delta', x_\gamma^{c}{:}A \otimes B\Vdash K\rrbracket.
\end{array}\]}
 \item {\bf Subcase 2.}  $\mc{C}'_2 = \mc{C}''_2\mc{T}_2\msg{\mb{send}y_\delta^c\,x_\gamma^{c};x_\gamma^c \leftarrow w_\eta^c}$.

Moreover, we have \[\msg{\mb{send}y_\delta^c\,x_\gamma^{c};x_\gamma^c \leftarrow w_\eta^c}\cproj{\mc{D}'_1}{\xi}\meq\cproj{\msg{\mb{send}y_\delta^c\,x_\gamma^{c};x_\gamma^c \leftarrow w_\eta^c}\mc{D}'_2}{\xi}:\] The quasi-running secrecy of  $\msg{\mb{send}y_\delta^c\,x_\gamma^{c};x_\gamma^c \leftarrow w_\eta^c}$ is lower than or equal to the observer level if the quasi-running secrecy of its parent is lower than or equal to the observer level. So the relevancy of the parent of the message and thus the rest of configurations do not change by adding the message to the configuration.

We can apply the induction hypothesis on  the smaller type to get 
 \[\begin{array}{l}
        {\color{Aquamarine}\star\;} (\lre{\mc{C}''_1\mc{T}_1}{\msg{\mb{send}y_\delta^c\,x_\gamma^{c};x_\gamma^c \leftarrow w_\eta^c}\mc{D}'_1} {\mc{F}'_1} ,  \lre{\mathcal{C}''_2\mc{T}_2}{\msg{\mb{send}y_\delta^c\,x_\gamma^{c};x_\gamma^c \leftarrow w_\eta^c}\mathcal{D}'_2} {\mathcal{F}'_2)} \in  \mathcal{E}^\xi_\Psi\llbracket \Delta',  y^c_\delta{:}A, w^c_{\eta}{:}B\Vdash K\rrbracket.
\end{array}\]

By line (7) in the definition of $\mathcal{V}^\xi_\Psi$:
{\color{Aquamarine}
\[\begin{array}{l}
\dagger\; (\mc{A}_1\mc{T}'_1 \mc{T}_1\msg{\mb{send}y_\delta^c\,x_\gamma^{c};x_\gamma^c \leftarrow w_\eta^c} \mc{D}'_1\mc{F}'_1 ;\mathcal{A}_2\mc{T}'_2\mc{T}_2\msg{\mb{send}y_\delta^c\,x_\gamma^{c};x_\gamma^c \leftarrow w_\eta^c}\mathcal{D}'_2 \mathcal{F}''_2) \in \mathcal{E}^\xi_\Psi\llbracket \Delta', x_\gamma^{c}{:}A \otimes B\Vdash K\rrbracket.
\end{array}\]}\\

\end{description}
 
{\bf Case 11. 
 \color{RedViolet} $\Delta=\Delta', \Delta'', x_\gamma^{c}{:}A \multimap B$}. And $\lr{\mc{C}'_1}{\mc{D}'_1}{\mc{F}'_1}= \lr{\mc{C}'_1}{\mc{T}_1\msg{\mb{send} y^c_\delta\,x^{c}_\gamma; w_\eta^c \leftarrow x_\gamma^c }\mc{D}''_1}{\mc{F}'_1}$, such that $\ctype{\Psi}{\Delta''}{\mc{T}_1}{y^c_\delta{:A}}$ and $\ctype{\Psi}{\Delta',w^c_{\eta}{:B}}{\mc{D}''_1}{K}$.

  The message $\msg{\mb{send}y^c_\delta\,x^{c}_\gamma ; w_\eta^c \leftarrow x_\gamma^c }$ is relevant in $\mc{D}'_1$. By assumption of the theorem, $\mathcal{D}'_{2}=\mc{T}_2\msg{\mb{send}y^c_\delta\,x^{c}_\gamma }\mc{D}''_2,$  such that $\ctype{\Psi}{\Delta''}{\mc{T}_2}{y^c_\delta{:A}}$ and $\ctype{\Psi}{\Delta',w^c_{\eta}{:}B}{\mc{D}''_2}{K}$.

 By typing configuration, we have $\mc{C}'_i=\mc{A}_i \mc{A}'_i\mc{A}''_i$ such that $\cdot \Vdash \mc{A}_i :: \Delta'$ and $\cdot \Vdash \mc{A}'_i :: \Delta'', x^c_\gamma {:} A \multimap B$.

 By assumption we know that  $\cproj{\mc{D}'_1}{\xi}=\cproj{\mc{D}'_2}{\xi}$. By definition of relevancy, we know that $\msg{\mb{send}y^c_{\delta}\, x^c_\gamma ; w_\eta^c \leftarrow x_\gamma^c}$ and the tree $\mc{T}_1$ connected to it are relevant in $\mc{D}_1$ and thus $\mc{T}_1$is equal to $\mc{T}_2$ in $\mc{D}_2$. Removing these from both configurations does not change relevancy of the rest of the configuration since $w^c_{\eta}$ will remain relevant, and the relevancy of the message's parent does not change: $\cproj{\mc{D}''_1}{\xi}=\cproj{\mc{D}''_2}{\xi}$.
 
We can apply the induction hypothesis on  the smaller type to get 
 \[\begin{array}{l}
{\color{Aquamarine} \star\,}  (\lre{\mc{A}'_1}{\mc{T}_1}{\mc{A}_1\msg{\mb{send}y^c_\delta\,x^{c}_\gamma ; w_\eta^c \leftarrow x_\gamma^c}\mc{D}''_1 \mc{F}'_1} ,\lre{\mathcal{A}'_2}{\mc{T}_2}{\mc{A}_2\msg{\mb{send} y^c_\delta\, x^{c}_\gamma; w_\eta^c \leftarrow x_\gamma^c}\mathcal{D}''_2\mathcal{F}'_2)}\in \mathcal{E}^\xi_\Psi\llbracket \Delta'' \Vdash y^c_\delta{:}A\rrbracket.
\end{array}\]

and

  \[\begin{array}{l}
{\color{Aquamarine} \star'\,}  (\lre{\mc{A}_1\mc{A}'_1 \mc{T}_1\msg{\mb{send}y^c_\delta\,x^{c}_\gamma ; w_\eta^c \leftarrow x_\gamma^c}}{\mc{D}''_1} {\mc{F}'_1} ,\lre{\mathcal{A}_2\mc{A}'_2\mc{T}_2\msg{\mb{send}y^c_\delta\,x^{c}_\gamma ; w_\eta^c \leftarrow x_\gamma^c}}{\mathcal{D}''_2} {\mathcal{F}'_2)}\in \mathcal{E}^\xi_\Psi\llbracket \Delta', w^{c}_{\eta}{:}B\Vdash K\rrbracket.
\end{array}\]

By line (10) in the definition of $\mathcal{V}^\xi_\Psi$:
{\color{Aquamarine}  \[\begin{array}{l}
 \dagger  (\mc{C}'_1\mc{T}_1\msg{\mb{send} y^c_\delta\,x^{c}_\gamma; w_\eta^c \leftarrow x_\gamma^c} \mc{D}''_1 \mc{F}'_1 ;\mathcal{C}'_2 \mc{T}_2\msg{\mb{send} y^c_\delta\,x^{c}_\gamma; w_\eta^c \leftarrow x_\gamma^c} \mathcal{D}''_2 \mathcal{F}'_2) \in \mathcal{E}^\xi_\Psi\llbracket \Delta'_1, x^{c}_{\gamma}{:}A \multimap B\Vdash K \rrbracket.
\end{array}
\]}\\

{\bf Case 12. $\Delta=\Delta', x_{\delta}^{c}{:}A$.}
$\mathcal{C}'_{1}=\mc{C}''_1\mathbf{proc}(x^{c}_{\delta},x^{c}_{\delta}\leftarrow w^{c}_{\gamma} @d_1).$
  
We consider two subcases:
\begin{description}
 \item {\bf Subcase 1.} $\mathcal{C}'_{2}\neq\mc{C}''_2\mathbf{proc}(x^{c}_{\delta},x^{c}_{\delta}\leftarrow w^{c}_{\gamma} @d_1)$.
 By line (12) in the definition of $\mathcal{V}^\xi_\Psi$:
{\color{Aquamarine} 
\begin{equation}
\hspace{-0.5cm}
\begin{array}{l}
 \dagger  (\mc{C}'_1 \mc{D}''_1 \mathbf{proc}(x^{c}_{\delta},x^{c}_{\delta}\leftarrow w^{c}_{\gamma} @d_1)\mc{F}'_1 ;\mathcal{C}_2\mathcal{D}''_2 \mathcal{F}'_2) \in \mathcal{E}^\xi_\Psi\llbracket \Delta, x^{c}_{\delta}{:A} \Vdash K\rrbracket.
\end{array}
\nonumber
\end{equation}
 }

  \item {\bf Subcase 2.} $\mathcal{C}'_{2}=\mc{C}''_2\mathbf{proc}(x^{c}_{\delta},x^{c}_{\delta}\leftarrow w^{c}_{\gamma} @d_1)$.
  
  By assumption of the theorem, we have $\cproj{[w^{c}_{\gamma}/x^{c}_\delta]\mc{D}'_1}{\xi}=\cproj{[w^{c}_{\gamma}/x^{c}_\delta]\mc{D}'_2}{\xi}$, since we just rename a relevant channel in both configurations. We can apply the induction hypothesis on  the smaller typing judgment of the multiset to get 

  \[\begin{array}{l}
{\color{Aquamarine}\star\;}  (\lre{\mc{C}''_1}{[w_{\gamma}^{c}/u_{\delta}^{c}]\mc{D}'_1} {\mc{F}'_1} ,\lre{\mathcal{C}''_2}{[w_{\gamma}^{c}/x_{\delta}^{c}]\mathcal{D}''_2} {\mathcal{F}'_2)} \in \mathcal{E}^\xi_\Psi\llbracket \Delta', x^{c}_{\delta}{:}A  \Vdash K\rrbracket.
\end{array}\]

By line (12) in the definition of $\mathcal{V}^\xi_\Psi$:
{\color{Aquamarine} 
\begin{equation}
\hspace{-0.5cm}
\begin{array}{l}
 \dagger  (\mc{C}'_1 \mc{D}''_1 \mathbf{proc}(x^{c}_{\delta},x^{c}_{\delta}\leftarrow w^{c}_{\gamma} @d_1)\mc{F}'_1 ;\mathcal{C}'_2\mathcal{D}''_2 \mathbf{proc}(x^{c}_{\delta},x^{c}_{\delta}\leftarrow w^{c}_{\gamma} @d_1)\mathcal{F}'_2) \in \mathcal{E}^\xi_\Psi\llbracket \Delta, x^{c}_{\delta}{:A} \Vdash K\rrbracket.
\end{array}
\nonumber
\end{equation}
 }

\end{description}
 \end{enumerate}
\end{proof}

 \begin{figure*}
    \centering
   {\small
\[\begin{array}{llr}
\m{Property} & \m{Condition}\\
\hline\\
(1)\,\lr{\mc{C}}{\mc{D}}{\mc{F}} \hookrightarrow_{\Delta \Vdash K} \m{queue}& \lr{\mc{C}}{\mc{D}}{\mc{F}}\, \m{is\,not\, poised}\, \m{and}\, \lr{\mc{C}}{\mc{D}}{\mc{F}} \in \m{Tree}(\Delta \Vdash K) \, \m{and}\\ & \lr{\mc{C}}{\mc{D}}{\mc{F}} \mapsto^{\m{poised}}_{\Delta \Vdash K}\lr{\mc{C}''}{\mc{D}''}{\mc{F}''}\,\m{and} \, \lr{\mc{C}''}{\mc{D}''}{\mc{F}''} \hookrightarrow_{\Delta \Vdash K} \m{queue}\\
\textit{For 2-14 we assume that } \lr{\mc{C}}{\mc{D}}{\mc{F}}\, \m{is\, poised}:\\\\
(2)\, \lr{\cdot}{\mb{msg}(\mb{close}\,y^c_\alpha)}{\mc{F}} \hookrightarrow_{\cdot \Vdash y^c_{\alpha}:1}{\mb{msg}(\mb{close}\,y^c_\alpha)}& \lr{\cdot}{\mb{msg}(\mb{close}\,y^c_\alpha)}{\mc{F}} \in \m{Tree}(\cdot\Vdash y_\alpha^c{:}1) \\
(3)\, \lr{\mc{C}'\mb{msg}(\mb{close}\,x^{c'}_\beta)}{\mc{D}_1}{\mc{F}} &  \lr{\mc{C}'\mb{msg}(\mb{close}\,x^{c'}_\beta)}{\mc{D}}{\mc{F}} \in \m{Tree}(\Delta', x^{c'}_\beta{:}1\Vdash \Delta)\, \m{and}  \, \\
\qquad \hookrightarrow_{\Delta', x^{c'}_\beta{:}1 \Vdash K}\m{queue}'\overline{\mb{msg}(\mb{close}\,x^{c'}_\beta)}& \lr{\mc{C}'}{\mb{msg}(\mb{close}\,x^{c'}_\beta)\mc{D}}{\mc{F}}\hookrightarrow_{\Delta' \Vdash K} \m{queue}'  \\
(4)\, \lr{\mc{C}}{\mc{D}'\mb{msg}(y^c_\alpha.k;  y^c_\alpha \leftarrow u^c_\delta)}{\mc{F}}&  \lr{\mc{C}}{\mc{D}'\mb{msg}(y^c_\alpha.k;  y^c_\alpha \leftarrow u^c_\delta)}{\mc{F}} \in \m{Tree}(\Delta\Vdash y_\alpha^c{:}\oplus\{\ell:A_\ell\}_{\ell \in L})\\  \qquad \hookrightarrow_{\Delta\Vdash y_\alpha^c{:}\oplus\{\ell:A_\ell\}_{\ell \in L}}{\m{queue}'\mb{msg}(y^c_\alpha.k;  y^c_\alpha \leftarrow u^c_\delta)}&\m{and}\, \lr{\mc{C}}{\mc{D}'}{\mb{msg}(y^c_\alpha.k;  y^c_\alpha \leftarrow u^c_\delta)\mc{F}} \hookrightarrow_{\Delta \Vdash u^c_{\delta}:A_k}{\m{queue}'} \\
 (5)\, \lr{\mc{C}_1\mb{msg}(x^{c'}_\beta.k;  x^{c'}_\beta \leftarrow u^{c'}_\delta)}{\mc{D}'_1}{\mc{F}_1}& \lr{\mc{C}\mb{msg}(x^{c'}_\beta.k;  x^{c'}_\beta \leftarrow u^{c'}_\delta)}{\mc{D}'}{\mc{F}}\in \m{Tree}(\Delta, x_\beta^{c'}{:}\oplus\{\ell:A_\ell\}_{\ell \in L} \Vdash K)\\ \qquad \hookrightarrow_{\Delta, x_\beta^{c'}{:}\oplus\{\ell:A_\ell\}_{\ell \in L} \Vdash K}{\m{queue}'\overline{\mb{msg}(x^{c'}_\beta.k;  x^{c'}_\beta \leftarrow u^{c'}_\delta)}}& \m{and}\, \lr{\mc{C}}{\mb{msg}(x^{c'}_\beta.k;  x^{c'}_\beta \leftarrow u^{c'}_\delta)\mc{D}'}{\mc{F}} \hookrightarrow_{\Delta, u^{c'}_{\delta}:A_k \Vdash K}\m{queue'} \\
(6)\,\lr{\mc{C}}{\mc{D}'}{\mb{msg}(y^c_\alpha.k; u^c_\delta \leftarrow y^c_\alpha)\mc{F}}&  \lr{\mc{C}}{\mc{D}'}{\mb{msg}(y^c_\alpha.k; u^c_\delta \leftarrow y^c_\alpha )\mc{F}} \in \m{Tree}(\Delta\Vdash y_\alpha^c{:}\&\{\ell:A_\ell\}_{\ell \in L})  \\  \qquad \hookrightarrow_{\Delta\Vdash y_\alpha^c{:}\&\{\ell:A_\ell\}_{\ell \in L}}{\m{queue}'\overline{\mb{msg}(y^c_\alpha.k; u^c_\delta \leftarrow y^c_\alpha)}}& \m{and}\, \lr{\mc{C}}{\mc{D}'\mb{msg}(y^c_\alpha.k; u^c_\delta \leftarrow y^c_\alpha)}{\mc{F}} \hookrightarrow_{\Delta \Vdash u^c_{\delta}:A_k}{\m{queue}'}\\

(7) \,\lr{\mc{C}}{\mb{msg}(x^{c'}_\beta.k;u^{c'}_\delta \leftarrow x^{c'}_\beta)\mc{D}'}{\mc{F}}& \lr{\mc{C}}{\mb{msg}(x^{c'}_\beta.k;  u^{c'}_\delta \leftarrow x^{c'}_\beta)\mc{D}'}{\mc{F}}\in \m{Tree}(\Delta, x_\beta^{c'}{:}\&\{\ell:A_\ell\}_{\ell \in L} \Vdash K)\\ \qquad \hookrightarrow_{\Delta, x_\beta^{c'}{:}\&\{\ell:A_\ell\}_{\ell \in L} \Vdash K}{\m{queue}'\mb{msg}(y^c_\alpha.k;u^{c'}_\delta \leftarrow x^{c'}_\beta)} &  \m{and} \, \lr{\mc{C}\mb{msg}(x^{c'}_\beta.k;u^{c'}_\delta \leftarrow x^{c'}_\beta)}{\mc{D}'}{\mc{F}} \hookrightarrow_{\Delta, u^{c'}_{\delta}:A_k \Vdash K}\m{queue'} \\

(8)\, \lr{\mc{C}'\mc{C}''}{\mc{D}'\mc{T}\mb{msg}(\mb{send}\,z^c_\gamma\,y^{c}_\alpha ; y^c_\alpha\leftarrow u^c_\delta)}{\mc{F}} &  \lr{\mc{C}'\mc{C}''}{\mc{D}'\mc{T}\mb{msg}(\mb{send}\, z^c_\gamma\,y^{c}_\alpha; y^c_\alpha\leftarrow u^c_\delta)}{\mc{F}_1}\in \m{Tree}(\Delta_1,\Delta_2 \Vdash y^c_{\alpha}: A \otimes B) \\\qquad \hookrightarrow_{\Delta \Vdash y^c_{\alpha}{:}A \otimes B}\m{queue}'\m{queue}''\mb{msg}(\mb{send}\, z^c_\gamma\,y^{c}_\alpha; y^c_\alpha\leftarrow u^c_\delta) &\m{and}\,  \lr{\mc{C}'}{\mc{D}'}{\mc{C}\mc{T}\mb{msg}(\mb{send}\,z^c_\gamma\,y^{c}_\alpha ; y^c_\alpha\leftarrow u^c_\delta)\mc{F}} \hookrightarrow_{\Delta_1 \Vdash u^c_{\delta}{:}B} {\m{queue}'}\\&  \m{and}\,  \lr{\mc{C}''}{\mc{T}}{\mc{C}'\mc{D}'\mb{msg}(\mb{send}\, z^c_\gamma\,y^{c}_\alpha; y^c_\alpha\leftarrow u^c_\delta)\mc{F}} \hookrightarrow_{\Delta_2 \Vdash z^c_\gamma{:}A } {\m{queue}''}& \\
(9)\, \lr{\mc{C}'\mb{msg}(\mb{send}\,z^{c'}_\gamma\,x^{c'}_\beta ;x^{c'}_\beta \leftarrow u^{c'}_\delta)}{\mc{D}}{\mc{F}}& \lr{\mc{C}'\mb{msg}(\mb{send}\,z^{c'}_\gamma\,x^{c'}_\beta  ;x^{c'}_\beta \leftarrow u^{c'}_\delta )}{\mc{D}}{\mc{F}}  \in \m{Tree}(\Delta' x^{c'}_\beta{:}A \otimes B \Vdash K) \\\qquad   \hookrightarrow_{\Delta, x_\beta^{c'}{:}A \otimes B\Vdash K} {\m{queue}'\overline{\mb{msg}(\mb{send}\,z^{c'}_\gamma\,x^{c'}_\beta ;x^{c'}_\beta \leftarrow u^{c'}_\delta)}}& \m{and}\, \lr{\mc{C}'}{\mb{msg}(\mb{send}\,z^{c'}_\gamma\,x^{c'}_\beta ;x^{c'}_\beta \leftarrow u^{c'}_\delta)\mc{D}}{\mc{F}} \hookrightarrow_{\Delta_1,z^{c'}_\gamma{:}A,u^{c'}_{\delta}{:}B \Vdash K}{\m{queue}'} \,
\\ 
(10) \,\lr{\mc{C}}{\mc{D}}{\mc{T}_1\mb{msg}(\mb{send}\,z^c_\gamma\,y^{c}_\alpha ;u^c_\delta \leftarrow y^c_\alpha)\mc{F}}& \lr{\mc{C}}{\mc{D}}{\mc{T}\mb{msg}(\mb{send}\, z^c_\gamma\,y^{c}_\alpha;u^c_\delta \leftarrow y^c_\alpha)\mc{F}}  \in \m{Tree}(\Delta \Vdash y^c_{\alpha}: A \multimap B)  \\  \qquad \hookrightarrow_{\Delta \Vdash y^c_{\alpha}{:}A \multimap B}\m{queue}'\overline{\mb{msg}(\mb{send}\, z^c_\gamma\,y^{c}_\alpha;u^c_\delta \leftarrow y^c_\alpha)}& \m{and}\, \lr{\mc{C}\mc{T}}{\mc{D}\mb{msg}(\mb{send}\, z^{c}_\gamma\,y^{c}_\alpha;u^c_\delta \leftarrow y^c_\alpha)}{\mc{F}} \hookrightarrow_{\Delta' , z^c_\gamma{:}A\Vdash u^c_{\delta}{:}B} {\m{queue}'}\\

(11)\,\lr{\mc{C}'\mc{C}''}{\mc{T}\mb{msg}(\mb{send}\,z^{c'}_\gamma\,x^{c'}_\beta ;u^{c'}_\delta \leftarrow x^{c'}_\beta)\mc{D}}{\mc{F}} &  \lr{\mc{C}'\mc{C}''}{\mc{T}\mb{msg}(\mb{send}\, z^{c'}_\gamma\,x^{c'}_\beta;u^{c'}_\delta \leftarrow x^{c'}_\beta)\mc{D}_1}{\mc{F}}\in \m{Tree}(\Delta_1,\Delta_2, x^{c'}_\beta{:}A \multimap B \Vdash K)\\ \qquad \hookrightarrow_{\Delta, x_\beta^{c'}{:}A \multimap B \Vdash K} {\m{queue}'\m{queue}''\mb{msg}(\mb{send}z^{c'}_\gamma\,x^{c'}_\beta ;u^{c'}_\delta \leftarrow x^{c'}_\beta)}&  \m{and}\,  \lr{\mc{C}'\mc{C}''\mc{T}\mb{msg}(\mb{send}z^{c'}_\gamma\,x^{c'}_\beta ;u^{c'}_\delta \leftarrow x^{c'}_\beta)}{\mc{D}}{\mc{F}} \hookrightarrow_{\Delta_1,u^{c'}_{\delta}{:}B \Vdash K}{\m{queue}'} \,  
\\& \m{and}\,\lr{\mc{C}''}{\mc{T}}{\mc{C}'\mb{msg}(\mb{send}\,z^{c'}_\gamma\,x^{c'}_\beta ;u^{c'}_\delta \leftarrow x^{c'}_\beta)\mc{D}'\mc{F}} \hookrightarrow_{\Delta_2 \Vdash  z^{c'}_\beta{:}A}{\m{queue}''}  \\
  (12)\, \lr{\mc{C}}{\mc{D}'\mb{proc}(y^c_{\alpha}, y^c_\alpha \leftarrow x^{c}_\beta @d_1)}{\mc{F}}&  \lr{\mc{C}}{\mc{D}'\mb{proc}(y^c_{\alpha}, y^c_\alpha \leftarrow x^{c}_\beta @d_1)}{\mc{F}}\in \m{Tree}(\Delta \Vdash y^c_{\alpha}{:}A )  \\ \qquad \hookrightarrow_{\Delta \Vdash y^c_{\alpha}{:}A } {\m{queue}'\{x/y\}}& \m{and}\, \lr{\mc{C}}{\mc{D}}{[x^c_\beta/y^c_{\alpha}]\mc{F}} \hookrightarrow_{\Delta \Vdash x^c_{\beta}{:}A } {\m{queue}'} \\
    (13)\, \lr{\mc{C}'\mb{proc}(x^{c'}_{\beta}, x^{c'}_\beta \leftarrow z^{c'}_\gamma @d_1)}{\mc{D}_1}{\mc{F}_1}&  \lr{\mc{C}'\mb{proc}(x^{c'}_{\beta}, x^{c'}_\beta \leftarrow z^{c'}_\gamma @d_1)}{\mc{D}}{\mc{F}}\in \m{Tree}(\Delta,  x^{c'}_\beta{:}A \Vdash K ) \\   \qquad \hookrightarrow_{\Delta, x^{c'}_{\beta}{:}A \Vdash K } {\m{queue}'\overline{\{x/y\}}}& \m{and}\, \lr{\mc{C}'}{[z^{c'}_\gamma/x^{c'}_{\beta}]\mc{D}}{[z^{c'}_\gamma/x^{c'}_{\beta}]\mc{F}} \hookrightarrow_{\Delta, z^{c'}_{\gamma}{:}A \Vdash  K} {\m{queue}'} \\
(14)\,   \lr{\cdot}{\cdot}{\cdot}\hookrightarrow_{\cdot \Vdash \cdot} \cdot&  \lr{\cdot}{\cdot}{\cdot}\in \m{Tree}(\cdot \Vdash \cdot)
\end{array}\]}

    \caption{Definition of $\hookrightarrow_{\Delta \Vdash K}$.  An overline indicates that a message is sent from $\mc{C}$ or $\mc{F}$ to $\mc{D}$, otherwise the message is sent from $\mc{D}$ to $\mc{C}$ or $\mc{F}$.}
   \label{fig:hookdef}
\end{figure*}

\subsection{Examples Typing}

\begin{flushleft}
\begin{small}
\begin{minipage}[t]{\textwidth}
\begin{tabbing}
$\m{auth}$ \phantom{mer} \= $=$ \= $\extchoice \{$\= $\mathit{tok_1}{:} \oplus \{ \mathit{succ}{:} \,\m{account} \otimes 1, \mathit{fail}{:}\, 1\}, $ \\
\> \> \> $\quad \vdots$ \\
\> \> \> $\mathit{tok_n}{:} \oplus \{ \mathit{succ}{:} \,\m{account} \otimes 1,  \mathit{fail}{:}\, 1\}\}$ \\[2pt]
$\m{customer}$ \> $=$ \> $\m{auth} \multimap 1$ \\[2pt]
$\m{account}$ \> $=$ \> $\oplus\{\mathit{high}{:} 1,\,\mathit{med}{:}1,\, \mathit{low}{:}1\}$ \\[2pt]
$\m{rate}$ \> $=$ \> $\&\{\mathit{lowRate}{:} 1,\,\mathit{highRate}{:}1\}$
\end{tabbing}
\end{minipage}
\end{small}
\end{flushleft}

\vspace{3mm}
\subsection*{Must-Type-Check Process Definitions}

\begin{flushleft}
\begin{small}
\begin{minipage}[t]{\textwidth}
\begin{tabbing}
$\cdot \vdash \m{Alice} :: y{:}\,\m{customer}\maxsec{[\mb{alice}]}$ \\
$y \leftarrow \m{Alice}\leftarrow \cdot = ($
\comment{// $\cdot \vdash y{:}\m{customer}[\mb{alice}]@\mb{alice}$} \\
\quad \= $w \leftarrow \mb{recv}\, y;$ 
\comment{// $w{:}\m{auth}[\mb{alice}] \vdash y{:}1 [\mb{alice}]@\mb{alice}$} \\
\> $w.\mathit{tok}_j;$
\comment{// $w{:}\oplus \{ \mathit{succ}{:} \,\m{account} \otimes 1, \mathit{fail}{:}\, 1\}[\mb{alice}] \vdash y{:}1 [\mb{alice}]@\mb{alice}$} \\
\> $\mb{case} \, w \, ($ \\
\> \quad \= $\mathit{succ} \Rightarrow$
\comment{// $w{:} \m{account} \otimes 1[\mb{alice}] \vdash y{:}1 [\mb{alice}]@\mb{alice}$} \\
\> \> \quad \= $v \leftarrow \mb{recv}\, w;$
\comment{// $v{:} \m{account}[\mb{alice}], w{:} 1[\mb{alice}] \vdash y{:}1 [\mb{alice}]@\mb{alice}$} \\
\> \> \> $\mb{case} \, v \, ($ \\
\> \> \> \quad \= $\mathit{high} \Rightarrow$
\comment{// $v{:} 1 [\mb{alice}], w{:} 1[\mb{alice}] \vdash y{:}1 [\mb{alice}]@\mb{alice}$} \\
\> \> \> \> \quad \= $\mb{wait}\,v;$
\comment{// $w{:} 1[\mb{alice}] \vdash y{:}1 [\mb{alice}]@\mb{alice}$} \\
\> \> \> \> \> $\mb{wait}\,w;$
\comment{// $\cdot \vdash y{:}1 [\mb{alice}]@\mb{alice}$} \\
\> \> \> \> \> $\mb{close}\,y$ \\
\> \> \> \> $\mid \mathit{med} \Rightarrow$ \\
\> \> \> \> \> $\mb{wait}\,v;$ \\
\> \> \> \> \> $\mb{wait}\,w;$ \\
\> \> \> \> \> $\mb{close}\,y$ \\
\> \> \> \> $\mid \mathit{low} \Rightarrow$ \\
\> \> \> \> \> $\mb{wait}\,v;$ \\
\> \> \> \> \> $\mb{wait}\,w;$ \\
\> \> \> \> \> $\mb{close}\,y)$ \\
\> \> $\mid \mathit{fail}\Rightarrow$
\comment{// $w{:} 1[\mb{alice}] \vdash y{:}1 [\mb{alice}]@\mb{alice}$} \\
\> \> \> $\mb{wait}\,w;$
\comment{// $\cdot \vdash y{:}1 [\mb{alice}]@\mb{alice}$} \\
\> \> \> $\mb{close}\,y ))\runsec{@\mb{alice}}$
\end{tabbing}
\end{minipage}
\end{small}
\end{flushleft}

\vspace{3mm}

\begin{flushleft}
\begin{small}
\begin{minipage}[t]{\textwidth}
\begin{tabbing}
$u{:}\,\&\{\mathit{s}{:}\m{account}, \mathit{f}{:}1\}\maxsec{[\mb{alice}]} \vdash \m{aAuth} :: x{:}\m{auth}\maxsec{[\mb{alice}]}$ \\
$x \leftarrow \m{aAuth}\leftarrow u = ($
\comment{// $u{:}\,\&\{\mathit{s}{:}\m{account}, \mathit{f}{:}1\} [\mb{alice}] \vdash x{:}\m{auth}[\mb{alice}]@\mb{alice}$} \\
\quad \= $\mb{case} \, x \, ($ \\
\> \quad \= $\mathit{tok}_j \Rightarrow$
\comment{// $u{:}\,\&\{\mathit{s}{:}\m{account}, \mathit{f}{:}1\} [\mb{alice}] \vdash x{:}\oplus \{\mathit{succ}{:}\,\m{account} \otimes 1, \mathit{fail}{:}\, 1\}[\mb{alice}]@\mb{alice}$} \\
\> \> \quad \= $x.\mathit{succ};$
\comment{// $u{:}\,\&\{\mathit{s}{:}\m{account}, \mathit{f}{:}1\} [\mb{alice}] \vdash x{:}\m{account} \otimes 1[\mb{alice}]@\mb{alice}$} \\
\> \> \> $u.\mathit{s};$
\comment{// $u{:}\,\m{account}[\mb{alice}] \vdash x{:}\m{account} \otimes 1[\mb{alice}]@\mb{alice}$} \\
\> \> \> $\mb{send}\, u\, x;$
\comment{// $\cdot \vdash x{:}1[\mb{alice}]@\mb{alice}$} \\
\> \> \> $\mb{close}\,x$ \\
\> \> $\mid \mathit{tok}_{i \neq j} \Rightarrow$
\comment{// $u{:}\,\&\{\mathit{s}{:}\m{account}, \mathit{f}{:}1\} [\mb{alice}] \vdash x{:}\oplus \{\mathit{succ}{:}\,\m{account} \otimes 1, \mathit{fail}{:}\, 1\}[\mb{alice}]@\mb{alice}$} \\
\> \> \quad $x.\mathit{fail};$
\comment{// $u{:}\,\&\{\mathit{s}{:}\m{account}, \mathit{f}{:}1\} [\mb{alice}] \vdash x{:}1[\mb{alice}]@\mb{alice}$} \\
\> \> \> $u.\mathit{f};$
\comment{// $u{:}\,1 [\mb{alice}] \vdash x{:}1[\mb{alice}]@\mb{alice}$} \\
\> \> \> $\mb{wait}\, u;$
\comment{// $\cdot \vdash x{:}1[\mb{alice}]@\mb{alice}$} \\
\> \> \> $\mb{close}\, x ))\runsec{@\mb{alice}}$
\end{tabbing}
\end{minipage}
\end{small}
\end{flushleft}

\vspace{3mm}

\begin{flushleft}
\begin{small}
\begin{minipage}[t]{\textwidth}
\begin{tabbing}
$\cdot \vdash \m{aAcc} :: u{:}\&\{\mathit{s}{:}\m{account}, \mathit{f}{:}1\}\maxsec{[\mb{alice}]}$ \\
$u \leftarrow \m{aAcc}\leftarrow \cdot = ($
\comment{// $\cdot \vdash u{:}\, \&\{\mathit{s}{:}\m{account}, \mathit{f}{:}1\}[\mb{alice}]@\mb{alice}$} \\
\quad \= $\mb{case} \, u \, ($\\
\> \quad \= $\mathit{s} \Rightarrow$
\comment{// $\cdot \vdash u{:}\, \m{account}[\mb{alice}]@\mb{alice}$} \\
\> \> \quad \=$u.\mathit{high};$
\comment{// $\cdot \vdash u{:}\, 1 [\mb{alice}]@\mb{alice}$} \\
\> \> \> $\mb{close}\, u$ \\
\> \> $\mid \mathit{f} \Rightarrow$
\comment{// $\cdot \vdash u{:}\, 1 [\mb{alice}]@\mb{alice}$} \\
\> \> \quad \= $\mb{close}\,u ))\runsec{@\mb{alice}}$
\end{tabbing}
\end{minipage}
\end{small}
\end{flushleft}

\begin{flushleft}
\begin{small}
\begin{minipage}[t]{\textwidth}
\begin{tabbing}
$x{:}\,\m{auth}\maxsec{[\mb{alice}]} , y{:}\,\m{customer}\maxsec{[\mb{alice}]}, x'{:}\m{auth}\maxsec{[\mb{bob}]},y'{:}\,\m{customer}\maxsec{[\mb{bob}]}, u{:}\,\m{rate}\maxsec{[\mb{guest}]} \vdash \m{Bank} :: z{:}1\maxsec{[\mb{bank}]}$ \\
$z \leftarrow \m{Bank}\leftarrow x,x',y,y',u = ($
\comment{// $x{:}\,\m{auth}[\mb{alice}] , y{:}\,\m{customer}[\mb{alice}], x'{:}\m{auth}[\mb{bob}],y'{:}\,\m{customer}[\mb{bob}], u{:}\,\m{rate}[\mb{guest}] \vdash z{:}1[\mb{bank}]@\mb{guest}$} \\
\quad \= $\mb{send}\, x\,y;$
\comment{// $y{:}\,1[\mb{alice}], x'{:}\m{auth}[\mb{bob}],y'{:}\,\m{customer}[\mb{bob}], u{:}\,\m{rate}[\mb{guest}] \vdash z{:}1[\mb{bank}]@\mb{guest}$
\quad \textit{note that} $\mb{guest}\sqsubseteq\mb{alice}$} \\
\> $\mb{send}\, x'\,y';$
\comment{// $y{:}\,1[\mb{alice}],y'{:}\,1[\mb{bob}], u{:}\,\m{rate}[\mb{guest}] \vdash z{:}1[\mb{bank}]@\mb{guest}$
\quad \textit{note that} $\mb{guest}\sqsubseteq\mb{bob}$} \\
\> $u.\mathit{lowRate};$
\comment{// $y{:}\,1[\mb{alice}],y'{:}\,1[\mb{bob}], u{:}\,1[\mb{guest}] \vdash z{:}1[\mb{bank}]@\mb{guest}$
\quad \textit{note that} $\mb{guest}\sqsubseteq\mb{guest}$} \\
\>$\mb{wait}\,y;$
\comment{// $y'{:}\,1[\mb{bob}], u{:}\,1[\mb{guest}] \vdash z{:}1[\mb{bank}]@\mb{bank}$
\quad \textit{note that} $\mb{guest}\sqcup\mb{alice} = \mb{bank}$} \\
\> $\mb{wait}\,y';$
\comment{// $u{:}\,1[\mb{guest}] \vdash z{:}1[\mb{bank}]@\mb{bank}$
\quad \textit{note that} $\mb{bob}\sqcup\mb{bank} = \mb{bank}$} \\
\> $\mb{wait}\, u;$
\comment{// $\cdot \vdash z{:}1[\mb{bank}]@\mb{bank}$
\quad \textit{note that} $\mb{guest}\sqcup\mb{bank} = \mb{bank}$} \\
\> $\mb{close}\,z)\runsec{@\mb{guest}}$
\end{tabbing}
\end{minipage}
\end{small}
\end{flushleft}

\vspace{3mm}
\subsection*{Must-NOT-Type-Check Process Definitions}

\begin{flushleft}
\begin{small}
\begin{minipage}[t]{\textwidth}
\begin{tabbing}
$x{:}\,\m{auth}\maxsec{[\mb{alice}]} , y{:}\,\m{customer}\maxsec{[\mb{guest}]} \vdash \m{LeakyBank} :: z{:}1\maxsec{[\mb{bank}]}$ \\
$z \leftarrow \m{LeakyBank}\leftarrow x,y = ($
\comment{// $x{:}\,\m{auth}[\mb{alice}], y{:}\,\m{customer}[\mb{guest}] \vdash z{:}1[\mb{bank}]@\mb{guest}$} \\
\quad \= $\mb{send}\, x\,y;$
\comment{// FAILS HERE because $\mb{alice} \neq \mb{guest}$} \\
\> $\mb{wait}\,y; \mb{close}\,z)\runsec{@{\mb{guest}}}$
\end{tabbing}
\end{minipage}
\end{small}
\end{flushleft}

\vspace{3mm}

\begin{flushleft}
\begin{small}
\begin{minipage}[t]{\textwidth}
\begin{tabbing}
$x_1{:} \&\{s{:}1, f{:} 1\}\maxsec{[\mb{guest}]},u{:}\,\&\{\mathit{s}{:}\m{account}, \mathit{f}{:}1\}\maxsec{[\mb{alice}]} \vdash \m{SneakyaAuth} :: x{:}\m{auth}\maxsec{[\mb{alice}]}$ \\
$x \leftarrow \m{SneakyaAuth}\leftarrow u, x_1 = ($
\comment{// $x_1{:} \&\{s{:}1, f{:} 1\}[\mb{guest}],u{:}\,\&\{\mathit{s}{:}\m{account}, \mathit{f}{:}1\}[\mb{alice}] \vdash x{:}\m{auth}[\mb{alice}]@\mb{alice}$} \\
\quad \= $\mb{case} \, x \, ($ \\
\> \quad \= $\mathit{tok}_j \Rightarrow$
\comment{// $x_1{:} \&\{s{:}1, f{:} 1\}[\mb{guest}],u{:}\,\&\{\mathit{s}{:}\m{account}, \mathit{f}{:}1\}[\mb{alice}] \vdash x{:}\oplus \{ \mathit{succ}{:} \,\m{account} \otimes 1, \mathit{fail}{:}\, 1\}[\mb{alice}]@\mb{alice}$} \\
\> \> \quad \=$x.\mathit{succ};$
\comment{// $x_1{:} \&\{s{:}1, f{:} 1\}[\mb{guest}],u{:}\,\&\{\mathit{s}{:}\m{account}, \mathit{f}{:}1\}[\mb{alice}] \vdash x{:}\m{account} \otimes 1[\mb{alice}]@\mb{alice}$} \\
\> \> \> $u.\mathit{s};$
\comment{// $x_1{:} \&\{s{:}1, f{:} 1\}[\mb{guest}],u{:}\m{account}[\mb{alice}] \vdash x{:}\m{account} \otimes 1[\mb{alice}]@\mb{alice}$} \\
\> \> \> $x_1.s;$
\comment{// FAILS HERE because $\mb{alice} \not\sqsubseteq \mb{guest}$} \\
\> \> \>$\mb{send}\, u\, x ; \mb{wait}\, x_1; \mb{close}\,x$ \\
\> \> $\mid \mathit{tok}_{i \neq j} \Rightarrow$
\comment{// $x_1{:} \&\{s{:}1, f{:} 1\}[\mb{guest}],u{:}\,\&\{\mathit{s}{:}\m{account}, \mathit{f}{:}1\}[\mb{alice}] \vdash x{:}\oplus \{ \mathit{succ}{:} \,\m{account} \otimes 1, \mathit{fail}{:}\, 1\}[\mb{alice}]@\mb{alice}$} \\
\> \> \quad \=$x.\mathit{fail};$
\comment{// $x_1{:} \&\{s{:}1, f{:} 1\}[\mb{guest}],u{:}\,\&\{\mathit{s}{:}\m{account}, \mathit{f}{:}1\}[\mb{alice}] \vdash x{:}1[\mb{alice}]@\mb{alice}$} \\
\> \> \> $u.\mathit{f};$
\comment{// $x_1{:} \&\{s{:}1, f{:} 1\}[\mb{guest}],u{:}\,1[\mb{alice}] \vdash x{:}1[\mb{alice}]@\mb{alice}$} \\
\> \> \> $x_1.f;$
\comment{// FAILS HERE because $\mb{alice} \not\sqsubseteq \mb{guest}$} \\
\> \> \> $\mb{wait}\, u; \mb{wait}\, x_1; \mb{close}\, x ))\runsec{@\mb{alice}}$
\end{tabbing}
\end{minipage}
\end{small}
\end{flushleft}

\vspace{3mm}

\begin{flushleft}
\begin{small}
\begin{minipage}[t]{\textwidth}
\begin{tabbing}
$x_1{:} \&\{s{:}1, f{:} 1\}\maxsec{[\mb{guest}]}, u{:}\,\&\{\mathit{s}{:}\m{account}, \mathit{f}{:}1\}\maxsec{[\mb{alice}]} \vdash \m{SneakyaAuth} :: x{:}\m{auth}\maxsec{[\mb{alice}]}$ \\
$x \leftarrow \m{SneakyaAuth}\leftarrow u, x_1 = ($
\comment{// $x_1{:} \&\{s{:}1, f{:} 1\}[\mb{guest}], u{:}\,\&\{\mathit{s}{:}\m{account}, \mathit{f}{:}1\}[\mb{alice}] \vdash x{:}\m{auth}[\mb{alice}]@\mb{alice}$} \\
\quad \= $\mb{case} \, x \, ($ \\
\> \quad \= $\mathit{tok}_j \Rightarrow$
\comment{// $x_1{:} \&\{s{:}1, f{:} 1\}[\mb{guest}], u{:}\,\&\{\mathit{s}{:}\m{account}, \mathit{f}{:}1\}[\mb{alice}] \vdash x{:}\oplus \{ \mathit{succ}{:} \,\m{account} \otimes 1, \mathit{fail}{:}\, 1\}[\mb{alice}]@\mb{alice}$} \\
\> \> \quad \=$x.\mathit{succ};$
\comment{// $x_1{:} \&\{s{:}1, f{:} 1\}[\mb{guest}], u{:}\,\&\{\mathit{s}{:}\m{account}, \mathit{f}{:}1\}[\mb{alice}] \vdash x{:}\m{account} \otimes 1[\mb{alice}]@\mb{alice}$} \\
\> \> \> $u.\mathit{s};$
\comment{// $x_1{:} \&\{s{:}1, f{:} 1\}[\mb{guest}], u{:}\,\m{account}[\mb{alice}] \vdash x{:}\m{account} \otimes 1[\mb{alice}]@\mb{alice}$} \\
\> \> \> $z_1 \leftarrow S \leftarrow x_1;$
\comment{// FAILS HERE because $\mb{alice} \not\sqsubseteq \mb{guest} \sqsubseteq \mb{alice}$} \\
\> \> \>$\mb{send}\, u\, x ; \mb{wait}\, z_1; \mb{close}\,x$ \\
\> \> $\mid \mathit{tok}_{i \neq j} \Rightarrow$
\comment{// $x_1{:} \&\{s{:}1, f{:} 1\}[\mb{guest}], u{:}\,\&\{\mathit{s}{:}\m{account}, \mathit{f}{:}1\}[\mb{alice}] \vdash x{:}\oplus \{ \mathit{succ}{:} \,\m{account} \otimes 1, \mathit{fail}{:}\, 1\}[\mb{alice}]@\mb{alice}$} \\
\> \> \quad \=$x.\mathit{fail};$
\comment{// $x_1{:} \&\{s{:}1, f{:} 1\}[\mb{guest}], u{:}\,\&\{\mathit{s}{:}\m{account}, \mathit{f}{:}1\}[\mb{alice}] \vdash x{:}1[\mb{alice}]@\mb{alice}$} \\
\> \> \> $u.\mathit{f};$
\comment{// $x_1{:} \&\{s{:}1, f{:} 1\}[\mb{guest}], u{:}\,1[\mb{alice}] \vdash x{:}1[\mb{alice}]@\mb{alice}$} \\
\> \> \> $z_1 \leftarrow F \leftarrow x_1;$
\comment{// FAILS HERE because $\mb{alice} \not\sqsubseteq \mb{guest} \sqsubseteq \mb{alice}$} \\
\> \> \> $\mb{wait}\, u; \mb{wait}\, z_1; \mb{close}\, x ))\runsec{@\mb{alice}}$ \\[3pt]
\end{tabbing}
\end{minipage}
\end{small}
\end{flushleft}

\begin{flushleft}
\begin{small}
\begin{minipage}[t]{\textwidth}
\begin{tabbing}
$x_1{:}\&\{s{:}1, f{:} 1\}\maxsec{[\mb{guest}]} \vdash \m{S} :: z_1{:}1\maxsec{[\mb{alice}]}$ \\
$z_1 \leftarrow \m{S}\leftarrow x_1 = ($
\comment{// $x_1{:}\&\{s{:}1, f{:} 1\}[\mb{guest}] \vdash z_1{:}1[\mb{alice}]@\mb{guest}$} \\
\quad \= $x_1.s;$
\comment{// $x_1{:}1[\mb{guest}] \vdash z_1{:}1[\mb{alice}]@\mb{guest}$
\quad \textit{note that} $\mb{guest}\sqsubseteq\mb{guest}$} \\
\> $\mb{wait}\, x_1;$
\comment{// $\cdot \vdash z_1{:}1[\mb{alice}]@\mb{guest}$
\quad \textit{note that} $\mb{guest}\sqcup\mb{guest} = \mb{guest}$} \\
\> $\mb{close}\, z_1)\runsec{@\mb{guest}}$ \\[3pt]
\end{tabbing}
\end{minipage}
\end{small}
\end{flushleft}

\begin{flushleft}
\begin{small}
\begin{minipage}[t]{\textwidth}
\begin{tabbing}
$x_1{:}\&\{s{:}1, f{:} 1\}\maxsec{[\mb{guest}]} \vdash \m{F} :: z_1{:}1\maxsec{[\mb{alice}]}$ \\
$z_1 \leftarrow \m{F}\leftarrow x_1 = ($
\comment{// $x_1{:}\&\{s{:}1, f{:} 1\}[\mb{guest}] \vdash z_1{:}1[\mb{alice}]@\mb{guest}$} \\
\quad \= $x_1.f;$
\comment{// $x_1{:}1[\mb{guest}] \vdash z_1{:}1[\mb{alice}]@\mb{guest}$
\quad \textit{note that} $\mb{guest}\sqsubseteq\mb{guest}$} \\
\> $\mb{wait}\, x_1;$
\comment{// $\cdot \vdash z_1{:}1[\mb{alice}]@\mb{guest}$
\quad \textit{note that} $\mb{guest}\sqcup\mb{guest} = \mb{guest}$} \\
\> $\mb{close}\, z_1)\runsec{@\mb{guest}}$
\end{tabbing}
\end{minipage}
\end{small}
\end{flushleft}












\bibliographystyle{IEEEtran}
\bibliography{references}

\end{document}